\LetLtxMacro{\oldsqrt}{\sqrt}
\renewcommand{\sqrt}[2][\mkern8mu]{\mkern-4mu\mathop{\oldsqrt[#1]{#2}}}
\newtheorem{thm}{Theorem}[section]
\newtheorem{defn}[thm]{Definition}
\newtheorem{remark}[thm]{Remark}
\newtheorem{fact}[thm]{Fact}
\newtheorem{lem}[thm]{Lemma}
\newtheorem{cor}[thm]{Corollary}
\newif\ifcomment
\newcommand{\phil}[1]{\ifcomment\todo[color=red!50]{Phil: #1}\fi}
\newif\iflong
\renewcommand\footnotetextcopyrightpermission[1]{} % removes footnote with conference information in first column
\newcommand{\lng}[1]{\iflong#1\fi}
\newcommand{\shrt}[1]{\iflong\else#1\fi}
\newcommand{\LOCAL}{\ensuremath{\mathsf{LOCAL}}\xspace}
\newcommand{\CONGEST}{\ensuremath{\mathsf{CONGEST}}\xspace}
\newcommand{\NCC}{\ensuremath{\mathsf{NCC}}\xspace}
\newcommand{\HYBRID}{\ensuremath{\mathsf{HYBRID}}\xspace}
\newcommand{\CC}{\ensuremath{\mathsf{CLIQUE}}\xspace}
\newcommand{\BCC}{\ensuremath{\mathsf{BCC}}\xspace}
\newcommand{\bigO}{\smash{\ensuremath{O}}}
\newcommand{\tilO}{\smash{\ensuremath{\widetilde{O}}}}
\newcommand{\tilOm}{\smash{\ensuremath{\widetilde{\Omega}}}}
\newcommand{\tilT}{\smash{\ensuremath{\widetilde{\Theta}}}}
\newcommand{\tild}{\smash{\ensuremath{\widetilde{d}}}}
\newcommand{\tilD}{\smash{\ensuremath{\widetilde{D}}}}
\newcommand{\eps}{\varepsilon}
\newcommand{\calA}{\mathcal{A}}
\newcommand{\calB}{\mathcal{B}}
\newcommand{\calS}{\mathcal{S}}
\newcommand{\E}{\mathbb{E}}
\newcommand{\p}{\!+\!}
\newcommand{\m}{\!-\!}
\renewcommand{\Pr}{\mathbb{P}}
\newcommand{\set}[1]{\ensuremath{\left\{#1\right\}}}
\DeclareMathOperator{\polylog}{polylog}
\DeclareMathOperator*{\argmin}{arg\,min}
\newcommand{\hide}[1]{}
	\providecommand\BibTeX{{%
			\normalfont B\kern-0.5em{\scshape i\kern-0.25em b}\kern-0.8em\TeX}}}
\begin{document}
	
	%%
	%% The "title" command has an optional parameter,
	%% allowing the author to define a "short title" to be used in page headers.
	\title{Computing Shortest Paths and Diameter in the Hybrid Network Model}
	
	%%
	%% The "author" command and its associated commands are used to define
	%% the authors and their affiliations.
	%% Of note is the shared affiliation of the first two authors, and the
	%% "authornote" and "authornotemark" commands
	%% used to denote shared contribution to the research.

	\author{Fabian Kuhn}
	\email{kuhn@cs.uni-freiburg.de}
	\affiliation{%
		\institution{University of Freiburg}
	}
	\author{Philipp Schneider}
	\email{philipp.schneider@cs.uni-freiburg.de}
	\affiliation{%
		\institution{University of Freiburg}
	}
	
	%%
	%% By default, the full list of authors will be used in the page
	%% headers. Often, this list is too long, and will overlap
	%% other information printed in the page headers. This command allows
	%% the author to define a more concise list
	%% of authors' names for this purpose.
	\renewcommand{\shortauthors}{}
	
	%%
	%% The abstract is a short summary of the work to be presented in the
	%% article.
	\begin{abstract}
		The $\mathsf{HYBRID}$ model, introduced in [Augustine et al., SODA '20], provides a theoretical foundation for networks that allow multiple communication modes. The model follows the principles of synchronous message passing, whereas nodes are allowed to use \textit{two} fundamentally different communication modes. First, a local mode where nodes may exchange arbitrary information per round over edges of a local communication graph $G$ (akin to the $\mathsf{LOCAL}$ model). Second, a global mode where every node may exchange $O(\log n)$ messages of size $O(\log n)$ bits per round with arbitrary nodes in the network. The $\mathsf{HYBRID}$ model intends to reflect the conditions of many real hybrid networks, where high-bandwidth but inherently local communication is combined with highly flexible global communication with restricted bandwidth.

We continue to explore the power and limitations of the $\mathsf{HYBRID}$ model by investigating the complexity of computing shortest paths and diameter of the local communication graph $G$. We show that the all pair shortest paths problem can be solved exactly in \smash{$\widetilde O\big(\sqrt n\big)$} rounds, which improves on the previous \smash{$\widetilde O\big(n^{2/3}\big)$} round algorithm and closes the gap to the known $\widetilde \Omega\big(\sqrt n\big)$ lower bound (up to $\polylog n$ factors). 
Furthermore, we give constant approximations for the $k$-source shortest paths problem ($k$-SSP) with runtime \smash{$\widetilde O\big(\sqrt{k}\big)$}, provided that $k$ is sufficiently large.
As $k$-SSP has a lower bound of \smash{$\widetilde\Omega\big(\sqrt{k}\big)$} even for large approximation ratios, 
our $k$-SSP algorithms are almost tight for large enough $k$. In the case of a single source we give an exact \smash{$\widetilde O\big(n^{2/5}\big)$}-round algorithm, improving on the known \smash{$\widetilde O\big(\sqrt{D}\big)$}-round algorithm for graphs with large diameter $D$.

For the diameter problem we provide algorithms with complexities  \smash{$\widetilde O\big(n^{1/3}/\varepsilon\big)$} and \smash{$\widetilde O\big(n^{0.397}/\varepsilon\big)$} and approximation factors $(3/2\!+\!\varepsilon)$ and $(1\!+\!\varepsilon)$, respectively. On the negative side, we demonstrate that the classical 2-party set-disjointness framework can be adapted for the $\mathsf{HYBRID}$ model to prove a lower bound of $\widetilde\Omega\big(n^{1/3}\big)$ rounds for computing the diameter exactly. For the \textit{weighted} diameter problem the same holds for computing $(2 \!-\!\varepsilon)$-approximations for any $\varepsilon > 0$.

%Notably, we show that a $(1 \!+\!\varepsilon)$ approximation can be computed in \smash{$\widetilde O\big(\sqrt{k}\big)$} rounds for $\sqrt{k} \geq n^{0.397}/\varepsilon$ on unweighted graphs.
%Furthermore, we approximate the $k$-source shortest path problem ($k$-SSP) with runtime $\widetilde O\big(n^{0.397}/\varepsilon \!+ \sqrt{k}\!\big)$ and $\widetilde O\big(n^{1/3}/\varepsilon \!+ \sqrt{k}\!\big)$ and approximation factors $(3\!+\!\varepsilon)$ and $(7\!+\!\varepsilon)$, respectively. On unweighted graphs, the approximation ratios improve to $(1 \!+\!\varepsilon)$ and $(2\!+\!\varepsilon)$, respectively.
%up to some \smash{$\alpha \in \widetilde\Omega\big(n/\,\sqrt{k}\big)$}, 
	
	\end{abstract}

	\maketitle
	
	\section{Introduction}

In networking contexts the communication nodes often have access to different modes of communication with fundamentally different properties \cite{gmyr17_hybrid,Augustine2020}.
For instance, in a network of mobile wireless devices the nodes are able to communicate through the cellular networking infrastructure, however close devices can also communicate locally with their wireless interfaces \cite{KS18}. While point-to-point communication between any two devices through the cellular network is more flexible and faster for devices that are at great distance, short ranged communication among close devices is typically cheaper and tends to allow higher bandwidth. In a different scenario, a company or organization might combine their own local area network with communication over the Internet \cite{RS11,TBKC18}. For data centers there are suggestions to augment the standard wired electrical communication infrastructure with high-speed optical or wireless connections \cite{FPRBSFPV10,HKPBW11}. Solving shortest paths and diameter problems in the local communication infrastructure of such \emph{hybrid networks} has direct applications, e.g., for learning the topology of the local network which can be used for efficient IP-routing.

\paragraph{The Hybrid Network Model.} From an algorithmic point of view, it is interesting to ask whether the combination of multiple communication modes can lead to an asymptotic advantage over using one of the communication modes by itself. The recent paper by Augustine et al.~\cite{Augustine2020} has introduced the \emph{hybrid network model} (in the following called \HYBRID) as a formal distributed communication model to study this question. The basic assumption underlying the \HYBRID model is that the nodes of a network can use two modes of communication: (1) a relatively low bandwidth but flexible, global communication mode, where any two nodes can directly communicate with each other and (2) a high bandwidth local communication mode, where only neighboring nodes in the local network can directly communicate among each other.

Formally, the network consists of a set $V$ of $n$ nodes with IDs $[n]$\footnote{We define $[k] := \{1, \ldots , k\}$ for $k \in \mathbb{N}$.} and a local communication topology given by a graph $G=(V,E)$. Communication happens in synchronous rounds. In each round nodes can do arbitrary computations based on the information they posses.
Local communication is modeled with the standard \LOCAL model~\cite{peleg00,linial92}: in each round, every node can exchange an arbitrary message with each of its neighbors in $G$.
The global communication mode uses the recently introduced \emph{node-capacitated clique} (\NCC) model~\cite{augustine19_NCC}: in each round, every node can exchange $O(\log n)$-bit messages with up to $O(\log n)$ arbitrary nodes.

As discussed in \cite{Augustine2020}, one can parameterize hybrid networks by the maximum message size $\lambda$ for the local mode and the number of bits $\gamma$ each node can exchange using the global mode. Then the standard \LOCAL and \CONGEST, the \CC and the \NCC model are just marginal cases of that parametrization.\footnote{\LOCAL: $\lambda = \infty, \gamma = 0$, \CONGEST: $\lambda = \bigO(\log n), \gamma = 0$, \CC (+``Lenzen-Routing'' \cite{Lenzen13}): $\lambda = 0, \gamma = n \log n$, \NCC: $\lambda = 0, \gamma = \bigO(\log^2 n)$.}
Considering the whole spectrum of \textit{non-marginal} hybrid network parameterizations, the combination \LOCAL + \NCC pushes both communication modes to one extreme end of the spectrum, as it leaves local communication completely unrestricted and allows only $\polylog n$ bits of global communication per node.
Note that restricting the global communication regime even further, e.g., to just $\bigO(\log n)$ bits per node and round would not change our main narrative. Any \HYBRID model algorithm could still be simulated in such a slightly more restrictive model at the cost of a $\polylog n$ multiplicative factor. Permitting each node to exchange $\bigO(\log n)$ many messages of size $\bigO(\log n)$ bits per round simplifies the formulation of randomized algorithms. Except for lower bounds, we neglect $\polylog n$ factors anyway by using the $\tilO(\cdot)$-notation.\footnote{The $\tilde{O}(\cdot)$ and $\tilde{\Omega}(\cdot)$ notations hide factors that are polylogarithmic in $n$.}

\paragraph{Distance Computations in the \HYBRID Model.} As in \cite{Augustine2020}, we investigate the complexity of shortest paths problems in the local communication graph $G$ and of (approximately) computing the diameter of $G$ for the \HYBRID communication model. When only using the \LOCAL model, computing the diameter or shortest paths for any number of sources clearly takes $\Theta(D)$ rounds, where $D$ is the diameter of $G$. This is also true if distances only need to be computed approximately (for reasonably good approximations). Therefore, all of our algorithmic results can be interpreted as the minimum of $D$ and the given round complexity. Note that there are graphs, for which $D$ is linear in $n$.

If only the \NCC model is used to communicate, the (approximate) APSP problem clearly requires $\tilOm(n)$ rounds because in each round, each node can only receive $\bigO(\log^2 n)$ bits of information and learning the distances to all other nodes requires at least $\Omega(n)$ bits of information. Even for the (approximate) SSSP problem nothing better than an $\tilO(n)$ bound is known for general graphs in the \NCC model. 
As the work of \cite{Augustine2020} demonstrates, in the \HYBRID model, i.e., the combination of \LOCAL and \NCC, significantly faster algorithms are possible than in each individual model. 

%In \cite{Augustine2020} it is shown that an exact solution to the APSP problem can be computed in time $\tilO(n^{2/3})$ and that an approximate solution can be computed in time $\tilO(\sqrt{n})$. These results are complemented by lower bound that shows that even for computing $O(\sqrt{n})$-approximate all pairst shortest paths distances, $\tilOm(\sqrt{n})$ rounds are necessary. In this paper, close the gap for the exact APSP problem by giving an asymptotically optimal algorithm (up to $\polylog n$ factors).

\subsection{Contributions}

The key algorithmic tool used in the APSP algorithms of \cite{Augustine2020} is an algorithm to solve a multi-message broadcast problem (called the \textit{token dissemination problem}) in the \HYBRID model. In this work, we consider a more general problem, which we call the \textit{token routing problem}. We are given a set of sources $S \subseteq V$ from which we must route a total of $K$ point-to-point messages of size $O(\log n)$ bits each, where each node is source and destination of at most $k$ messages.
We show that if the sources and destinations are sufficiently well spaced in the local network, then all messages can be routed in time \smash{$\tilO\big(K/n + \sqrt{k} + \sqrt{|S|}\big)$}. By contrast, simply broadcasting all messages using the algorithm of \cite{Augustine2020} takes \smash{$\tilOm\big(\sqrt{k|S|}\big)$}. We refer to Section \ref{sec:token-routing} for more details.

Leveraging the token routing protocol we obtain our first result in Section \ref{sec:apsp}, where we speed up a key step in the  exact APSP algorithm of \cite{Augustine2020} that constituted a runtime bottleneck in the original algorithm. With this modification we improve on the previous runtime of $\tilO(n^{2/3})$ and close the gap to the known lower bound $\tilOm\big(\sqrt{n}\big)$ (up to $\polylog(n)$ factors).

\begin{thm}\label{thm:APSP}
	There is an algorithm that solves APSP exactly in the \HYBRID model in $\tilO\big(\sqrt{n}\big)$ rounds w.h.p.
\end{thm}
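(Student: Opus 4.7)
The plan is to plug the new token routing protocol from Section \ref{sec:token-routing} into the skeleton-graph framework underlying the $\tilO(n^{2/3})$ APSP algorithm of \cite{Augustine2020}, thereby removing its communication bottleneck.

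I would begin by sampling a \emph{skeleton set} $S \subseteq V$, including each node independently with probability $\Theta(\log n/\sqrt{n})$. Standard concentration yields $|S| = \tilO(\sqrt{n})$, and a union bound shows that w.h.p.\ every shortest path of hop-length at least $\sqrt{n}$ contains a member of $S$. In parallel, every node runs a BFS of depth $\sqrt{n}$ in the \LOCAL mode, which costs $\sqrt{n}$ rounds and reveals $d_G(v,w)$ whenever $d_G(v,w) \le \sqrt{n}$; in particular, every node learns the identity of and distance to each of its nearby skeleton members. I would then form the virtual weighted skeleton graph $H = (S, E_H, w)$ in which $\{s,t\}$ is an edge with weight $d_G(s,t)$ whenever $d_G(s,t) \le \sqrt{n}$. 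The hitting-set property guarantees $d_H(s,t) = d_G(s,t)$ for all $s,t \in S$, so it suffices to solve APSP exactly on $H$. Since $H$ has only $|S| = \tilO(\sqrt{n})$ vertices and its $\tilO(n)$ initial edge weights are already known locally after the BFS, the task reduces to disseminating these weights so that each skeleton node can run APSP on $H$ locally.

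The crux is implementing this dissemination in $\tilO(\sqrt{n})$ rounds. The resulting communication is a point-to-point pattern with $|S| = \tilO(\sqrt{n})$ sources/destinations and $K = \tilO(n)$ tokens, each node handling at most $k = \tilO(\sqrt{n})$ of them; the token routing protocol therefore delivers everything in $\tilO(K/n + \sqrt{k} + \sqrt{|S|}) = \tilO(\sqrt{n})$ rounds. By contrast, the broadcast-based scheme of \cite{Augustine2020} forces every token to reach every node, incurring the $\tilOm(\sqrt{k|S|})$ overhead that drives their APSP algorithm to $\tilO(n^{2/3})$. The main obstacle here is ensuring the \emph{spacing precondition} of the token routing theorem: sources and destinations must be sufficiently spread out in $G$. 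Because $S$ is formed by independent random sampling, w.h.p.\ no small region of $G$ contains too many skeleton nodes, so the precondition can be met---possibly after a short relocation step in which each skeleton node delegates its role to a carefully chosen helper in its \LOCAL neighborhood, analogously to the constructions used in \cite{Augustine2020}.

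Finally, the $\tilO(n)$ skeleton-pair distances are disseminated to every node in $\tilO(\sqrt{n})$ rounds by spreading the load across the $\tilT(n)$ aggregate per-round \NCC bandwidth, and each node combines its BFS output with $\min_{s \in S}\bigl(d_G(u,s) + d_H(s,v)\bigr)$, which equals $d_G(u,v)$ by the hitting-set property (using the BFS value directly whenever $d_G(u,v) \le \sqrt{n}$). Summing the costs of all steps gives the claimed $\tilO(\sqrt{n})$ round complexity.
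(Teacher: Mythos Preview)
Your proposal has a genuine gap in the final step. The formula $\min_{s \in S}\bigl(d_G(u,s) + d_H(s,v)\bigr)$ only makes sense when $v\in S$, since $d_H$ is a metric on the skeleton. Thus after your last paragraph every node $u$ knows $d_G(u,t)$ for all \emph{skeleton} nodes $t$, but not $d_G(u,v)$ for a general $v\in V$. To recover $d_G(u,v)$ you would need, for some skeleton $t$ in the $\sqrt{n}$-hop vicinity of $v$, the value $d_G(t,v)$ --- and $u$ has no way to learn that. This ``last leg'' is precisely the bottleneck of the $\tilO(n^{2/3})$ algorithm in \cite{Augustine2020}, where it was handled by broadcasting all $|V|\cdot|V_\calS|$ labels $d_h(s,v)$; simply making the skeleton and its pairwise distances global does not circumvent it.

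Relatedly, you apply the token routing protocol to the wrong step. Disseminating the $\tilO(|S|^2)=\tilO(n)$ skeleton edges (or skeleton-pair distances) among $\tilO(\sqrt n)$ skeleton nodes was never the bottleneck: ordinary token dissemination already does this in $\tilO(\sqrt n)$ rounds. The paper instead uses token routing in the direction $V \to V_\calS$: every node $v\in V$ sends, to \emph{each} skeleton $s\in V_\calS$, the identity of its ``connector'' $s'$ (the first skeleton on a shortest $v$--$s$ path) together with $d_h(v,s')$. Here the senders are all of $V$, the receivers are $V_\calS$, $k_S=|V_\calS|\in\tilO(\sqrt n)$, $k_R=n$, and the total load is $K\in\tilO(n\cdot\sqrt n)$, so Theorem~\ref{thm:token-routing} gives $\tilO(K/n+\sqrt{k_S}+\sqrt{k_R})=\tilO(\sqrt n)$ rounds. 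After this, every skeleton $s$ can compute $d(s,v)$ for all $v\in V$; these labels are then flooded $\tilO(\sqrt n)$ hops in the local network, and each $u$ recovers $d(u,v)=\min_{s}\bigl(d_h(u,s)+d(s,v)\bigr)$ over skeletons $s$ in its $h$-hop neighborhood. That is the missing piece in your plan.
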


 As another direct application of the token routing algorithm, we show in Section \ref{sec:cc-sp-sim} that a single round of the much more powerful \CC model\footnote{The congested clique model (or \CC model) refers to the synchronous message passing model, where communication among nodes works as follows: in each round each node may send a $\bigO(\log n)$ bit message to every other node.} can be simulated on (a suitably chosen) subset of $n^x$ ($x \leq 1$) nodes of $G$ in  $\tilO(n^{x/2} \p n^{2x-1})$ rounds in the \HYBRID model. We first apply this result to the $k$-source shortest paths problem ($k$-SSP) problem, where each node needs to compute (approximate) distances to an arbitrary subset of $k$ nodes of $G$. 
 
We provide a framework algorithm for $k$-SSP that intuitively works as follows. It takes a \CC-algorithm $\calA$ that approximates $k$-SSP as input and uses it to solve $k$-SSP on a small skeleton-graph $\calS$ of $G$.\footnote{A skeleton graph $\calS$ is a graph on a random subset of nodes of $G$ that conserves the distances of $G$. See Appendix \ref{apx:skeleton-graphs} for details and properties.} The solution on $\calS$ is then extended to the whole graph $G$ using the local network.
Combining this framework algorithm with the fast \CC algorithms of \cite{Censor-Hillel2019a,Censor-Hillel2019b}, we obtain an array of results (with potential for further improvement if faster or more precise algorithms for the \CC model are found).

%First, it constructs a \textit{skeleton graph} consisting of a small subset of nodes of $G$ with edges that conserve distances in $G$. Second, it simulates $\calA$ on that skeleton to solve $k$-SSP on the skeleton. Third, it extends the solution of $k$-SSP on the skeleton graph to all nodes using the local network. 
 
 \begin{thm}\label{thm:kSSP}
 	The $k$-SSP problem can be solved in the \HYBRID model w.h.p.\ for $\eps > 0$ and the following parameters.
 	\begin{itemize}[leftmargin=19pt]
 		\item In \smash{$\tilO\big({n^{1/3}}/{\eps}\big)$} rounds for \smash{$k = n^{1/3}\!$} sources and approximation factor $(3\!+\!\eps)$ on  weighted and $(1\!+\!\eps)$ on unweighted graphs.
 		\item In \smash{$\tilO\big({n^{1/3}}/{\eps} \!+ \sqrt{k}\big)$} rounds and approximation factor $(7\!+\!\eps)$ on weighted and $(2\!+\!\eps)$ on unweighted graphs.
 		\item In \smash{$\tilO\big(n^{0.397} \!+ \sqrt{k}\!\big)$} rounds and approximation factor $(3\!+\!o(1))$ on weighted graphs and runtime \smash{$\tilO\big(n^{0.397}/\eps + \sqrt{k}\!\big)$} and approximation factor $(1\!+\!\eps)$ on unweighted graphs.
 	\end{itemize}
 \end{thm}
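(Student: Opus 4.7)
My plan is to combine the two primitives just highlighted: skeleton sampling and the \CC-on-\HYBRID simulation, stitched together via the token-routing protocol. Fix an exponent $x\le 1$ controlling the skeleton size. I would construct $\calS$ by independently marking every node with probability $\tilde{\Theta}(n^{x-1})$ and adding the $k$ sources. Standard concentration then gives $|\calS|=\tilO(n^x+k)$, and every shortest path of more than $h=\tilde{\Theta}(n^{1-x})$ hops contains a skeleton node w.h.p., so the weighted skeleton graph preserves all $G$-distances except for the short $\le h$-hop segments at each endpoint.

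Next, I would invoke the simulation from Section~\ref{sec:cc-sp-sim}, which executes one \CC round on $\calS$ in $\tilO\big(n^{x/2}+n^{2x-1}\big)$ \HYBRID rounds, to run a known \CC $k$-SSP algorithm $\calA$ on $\calS$. For the first two statements pick $x=2/3$, so that both terms of the simulation bound and the local-BFS reach $h=n^{1/3}$ all collapse to $n^{1/3}$; plugging in a $\polylog n$-round $(1\!+\!\eps)$-approximate \CC algorithm from \cite{Censor-Hillel2019a,Censor-Hillel2019b} contributes the $\tilO(n^{1/3}/\eps)$ bottleneck. For the third statement, substitute the matrix-multiplication-based $k$-SSP algorithms from the same papers, whose \CC round count is sub-polynomial in $n$, and re-optimise $x$ so as to balance simulation overhead against the inner \CC complexity; this trade-off is what produces the exponent $0.397$.

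To lift the skeleton solution to every node, each $v\in V$ performs an $h$-hop \LOCAL BFS, discovering its nearest skeleton representatives together with the exact $G$-distance to each of them. After $\calA$ terminates, every $u\in\calS$ stores its $k$ approximate distances $\tilde d(u,s)$ to the sources, and each $v$ needs to collect these from one or a handful of nearby $u$. This is precisely a well-spaced instance of the token-routing protocol of Section~\ref{sec:token-routing}, which completes within $\tilO\big(\sqrt{k}+\sqrt{|\calS|}\big)$ \HYBRID rounds; each $v$ then outputs $\min_u\{d_G(v,u)+\tilde d(u,s)\}$ as its estimate for each source $s$.

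The main technical obstacle is arithmetic on the composed approximation factors. If $\calA$ certifies $\tilde d(u,u')\le\alpha\cdot d_G(u,u')$ on skeleton pairs, then the estimate $d_G(v,u)+\tilde d(u,s)$ is distorted by $\alpha$ \emph{and} by the possibility that a true shortest $v\!\to\!s$ path enters and leaves $\calS$ at points other than $v$'s closest skeleton neighbours. In the weighted regime each such ``detour'' contributes roughly $\alpha$ times the skipped subpath, which roughly doubles the effective factor and explains the jump from $(3\!+\!\eps)$ when $k=n^{1/3}$ (where the sources are themselves in $\calS$ by construction, so only a single skeleton hop is ever needed) to $(7\!+\!\eps)$ in the general case. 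On unweighted graphs the local BFS legs are exact and integer-valued, which absorbs the slack and preserves the $(1\!+\!\eps)$ and $(2\!+\!\eps)$ guarantees; an analogous calculation with the less accurate inner algorithm yields the $(3\!+\!o(1))$ weighted factor of the third bullet.
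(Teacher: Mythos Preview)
Your overall skeleton-plus-simulation architecture is right, but there is a genuine gap in the first step: you propose to \emph{add the $k$ sources to the skeleton}. The \CC simulation on $\calS$ (Corollary~\ref{cor:cc-sp-sim}) rests on the token-routing theorem, which in turn requires the sender and receiver sets to be \emph{uniformly sampled} so that helper sets obeying Definition~\ref{def:helpers} exist. If the $k$ sources are adversarially clustered in a small region of $G$, the helper-set construction fails and the simulation bound $\tilO(n^{x/2}+n^{2x-1})$ no longer applies. The paper explicitly flags this obstacle and resolves it differently: sources are \emph{not} added to $\calS$; instead each source tags its closest skeleton node as a \emph{representative}, and the pairs $\langle d_h(s,r_s),\id(s),\id(r_s)\rangle$ are broadcast with token dissemination. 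This broadcast of $k$ tokens is exactly where the additive $\tilO(\sqrt{k})$ term originates, not from a final routing step.

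This also means your reading of the approximation factors is off. The jump from $(3\!+\!\eps)$ to $(7\!+\!\eps)$ has nothing to do with whether sources sit in $\calS$; in \emph{both} bullets the sources go through representatives, and the generic weighted bound is $2\alpha+1$ (plus lower-order terms), reflecting the detour $d_h(v,w)+\tilde d(w,r_s)+d_h(r_s,s)$ through the representative $r_s$. The first bullet feeds in the $(1\!+\!\eps)$-approximate $n^{1/2}$-source \CC algorithm of \cite{Censor-Hillel2019a}, giving $2(1\!+\!\eps)+1=3\!+\!O(\eps)$; the second bullet feeds in the $(2\!+\!\eps,(1\!+\!\eps)w_{uv})$-approximate APSP algorithm, which one first massages into a $(3\!+\!2\eps)$-multiplicative bound and then obtains $2(3\!+\!2\eps)+1=7\!+\!O(\eps)$. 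Finally, the last phase in the paper is not token routing but plain \LOCAL flooding for $\eta h$ rounds: every node has a skeleton node within $h$ hops, and \LOCAL has unbounded bandwidth, so the $k$ distance labels held by each skeleton node reach everyone for free.
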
 
 
Furthermore, we apply the above framework algorithm to the fast exact SSSP algorithm of \cite{Censor-Hillel2019a}, which yields an $\tilO(n^{2/5})$-round, exact SSSP algorithm for the \HYBRID model. Note that this improves on the $\tilO\big(\sqrt{SPD}\big)$ shortest path algorithm of \cite{Augustine2020} for graphs with large shortest path diameter $SPD$.

\begin{thm}\label{thm:SSSP}
	There is an algorithm that solves SSSP exactly in the \HYBRID model in $\tilO(n^{2/5})$ rounds w.h.p.
\end{thm}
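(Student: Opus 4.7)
The plan is to combine the \CC-to-\HYBRID simulation of Section~\ref{sec:cc-sp-sim} with a fast exact weighted SSSP algorithm for the congested clique (specifically, the one of Censor-Hillel et al.~\cite{Censor-Hillel2019a}), run on a suitable skeleton of $G$. Sample each node of $V$ independently with probability $p$ (to be chosen), force the source $s$ into the sample, and call the resulting set $R$ with $|R|=\tilT(np)$ w.h.p. Set the hop bound $h = \Theta(p^{-1}\log n)$; by the standard hitting-set property (Appendix~\ref{apx:skeleton-graphs}), every shortest path in $G$ of hop-length at least $h$ contains a node of $R$ w.h.p. Run the \LOCAL model for $h$ rounds so that each node learns the identifiers and exact distances of all nodes within its $h$-hop ball. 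In particular each $u\in R$ learns $d_G(u,v)$ for every $v\in R$ within $h$ hops, from which we assemble a weighted skeleton graph $\mathcal S$ on $R$ whose $s$-to-$u$ shortest-path distance coincides with $d_G(s,u)$ for every $u\in R$.

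Second, feed $\mathcal S$ to the \CC-to-\HYBRID simulator of Section~\ref{sec:cc-sp-sim} and execute the exact weighted SSSP algorithm of~\cite{Censor-Hillel2019a} from source $s$ on $\mathcal S$. With $|R|=n^x$ for $x=3/5$ (so in particular $|R|\le n^{2/3}$), one simulated \CC round costs $\tilO(n^{x/2})=\tilO(n^{3/10})$ \HYBRID rounds, and the CC algorithm's $\tilO(|R|^{1/6})=\tilO(n^{1/10})$ rounds therefore translate to $\tilO(n^{2/5})$ total \HYBRID rounds. Upon termination, every skeleton node $u$ knows $d_G(s,u)$ exactly.

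Third, extend the solution to non-skeleton nodes via the local mode: in $h$ further \LOCAL rounds, propagate the labels $d_G(s,u)$ from each $u\in R$, letting every node forward all labels it has received so far (the large \LOCAL bandwidth makes flooding all such labels trivial). Every $v\in V$ then knows $d_G(s,u)$ for all skeleton nodes $u$ within its $h$-hop ball, as well as $d_G^{\le h}(u,v)$ from the first phase, and outputs
\[
d_G(s,v)=\min_{u\in R}\bigl(d_G(s,u)+d_G^{\le h}(u,v)\bigr).
\]
Correctness follows because any shortest $s$-$v$ path either has at most $h$ hops (covered since $s\in R$ and $d_G^{\le h}(s,v)$ was directly computed) or contains a skeleton-node witness within $h$ hops of $v$ by the hitting-set property.

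Setting $p=n^{-2/5}$ balances $h=\Theta(n^{2/5})$ against the simulation cost $\tilO((np)^{2/3})=\tilO(n^{2/5})$, yielding the claimed $\tilO(n^{2/5})$ total round complexity. The main obstacle is black-boxing the CC algorithm of~\cite{Censor-Hillel2019a}: one must verify that it uses only standard \CC primitives so that the generic simulator of Section~\ref{sec:cc-sp-sim} applies, and that feeding it the weighted skeleton graph $\mathcal S$ (whose edges carry exact $G$-distances rather than the original edges of $G$) fits its input specification. The weighted skeleton-distance preservation, while standard for unweighted graphs, also needs careful handling in the weighted setting, since a shortest path between two skeleton nodes may have more than $h$ hops and must then be expressed as a concatenation of skeleton edges, each corresponding to a $\le h$-hop subpath in $G$.
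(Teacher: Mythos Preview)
Your proposal is correct and follows essentially the same approach as the paper: the paper derives Theorem~\ref{thm:SSSP} as a one-line corollary of the general framework Theorem~\ref{thm:cc-sp-sim} (plugging in $\gamma=0$, $\delta=1/6$, $\eta=1$, $\alpha=1$, $\beta=0$, which yields $x=3/5$ and hence runtime $\tilO(n^{1-x})=\tilO(n^{2/5})$), whereas you unroll that framework explicitly with the same parameter choices (skeleton size $n^{3/5}$, $h\in\tilT(n^{2/5})$, forcing $s$ into the skeleton, simulation cost $\tilO(|R|^{1/6}\cdot|R|^{1/2})=\tilO(n^{2/5})$). The concerns you flag about forcing $s$ into the sample and about weighted skeleton-distance preservation are handled in the paper exactly as you anticipate (see the $\gamma=0$ clause of Theorem~\ref{thm:cc-sp-sim} and Lemma~\ref{lem:long-shortest-paths-skeleton}).
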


Apart from studying shortest paths problems, we also consider a more basic distance problem, namely the problem of computing the diameter of the given network graph $G$. In Section \ref{sec:cc-diam-sim}, we reuse the observation that we can simulate the \CC model on small skeleton graphs of $G$, in order to simulate fast diameter algorithms from \cite{Censor-Hillel2019a,Censor-Hillel2019b} for the \CC model. We show that the diameter of the skeleton graph yields decent diameter approximations for the whole graph $G$. We obtain the following upper bounds.

\begin{thm}\label{thm:diameter}
 	The diameter can be computed w.h.p.\ in the \HYBRID model for $\eps > 0$ and the following parameters. In \smash{$\tilO\big(n^{1/3}/\eps\big)$} rounds with approximation factor $(3/2\!+\!\eps)$ and in \smash{$\tilO\big(n^{0.397}/\eps\big)$} rounds with approximation factor $(1\!+\!\eps)$.
\end{thm}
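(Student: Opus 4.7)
The plan is to reduce diameter computation on $G$ to running a fast \CC diameter algorithm on a small skeleton graph of $G$, exploiting the \CC-simulation from Section \ref{sec:cc-sp-sim}. I parameterize the construction by $x \in (0,1)$: sample each node independently with probability $p = \Theta(n^{x-1}\log n)$ to obtain a set $S$ of $\tilde\Theta(n^x)$ skeleton nodes. By the standard skeleton-sampling argument (Appendix \ref{apx:skeleton-graphs}), every shortest path in $G$ of length at least $h := \tilO(n^{1-x})$ hops contains a skeleton node w.h.p. Running an $h$-bounded \BFS from every node of $S$ in parallel in the \LOCAL channel takes $\tilO(n^{1-x})$ rounds and yields a weighted complete graph $\calS$ on $S$ whose edge weights equal the true $G$-distances between pairs of skeleton nodes lying within $h$ hops of each other, which suffices to preserve the long distances of $G$.

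Next, import an appropriate \CC diameter algorithm from \cite{Censor-Hillel2019a,Censor-Hillel2019b} and run it on $\calS$ via the simulation of Section \ref{sec:cc-sp-sim}, which converts one \CC round on a subset of $n^x$ nodes into $\tilO(n^{x/2} + n^{2x-1})$ \HYBRID rounds. Let $D$ be the diameter of $G$ and $D_\calS$ that of $\calS$. Because $\calS$ inherits exact distances, $D_\calS \le D$; conversely, if $u,v$ realize $D$, the nearest skeleton nodes $u',v'$ on a shortest $u$--$v$ path lie within $h$ of $u$ and $v$, so $D \le D_\calS + 2h$. Hence any $c$-approximation $\tilde D$ of $D_\calS$ satisfies $D/\tilde D \le c\bigl(1 + O(h/D)\bigr)$, which collapses to a $(c+\eps)$-approximation as soon as $D = \Omega(h/\eps)$. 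The residual small-diameter case $D \le h/\eps$ is settled directly by running $(h/\eps)$-bounded \BFS from every node of $V$ in parallel in the \LOCAL channel, which requires $\tilO(n^{1-x}/\eps)$ rounds and outputs the exact diameter.

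Choosing $x = 2/3$ makes $h/\eps = n^{1/3}/\eps$ match the per-round simulation cost $n^{x/2} = n^{2x-1} = n^{1/3}$, so plugging in a \CC $(3/2+\eps)$-approximation of the weighted diameter that runs in $\tilO(1/\eps)$ \CC rounds on $\calS$ produces the claimed $\tilO(n^{1/3}/\eps)$ \HYBRID bound. For the $(1+\eps)$-approximation, the underlying \CC routine relies on approximate distance products and hence costs roughly $\tilO(n^{\rho x}/\eps)$ \CC rounds for some exponent $\rho$ tied to fast rectangular matrix multiplication; re-balancing $n^{\rho x}(n^{x/2} + n^{2x-1})$ against $n^{1-x}/\eps$ at $x \approx 0.6$ yields the $\tilO(n^{0.397}/\eps)$ bound.

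The main obstacle I foresee is cleanly composing the additive skeleton error $2h$ with the multiplicative error of the \CC subroutine into a single $(c+\eps)$ multiplicative guarantee: this is what forces the explicit small-diameter fallback and it pins the choice of $x$ tightly to the round-complexity of whichever \CC diameter algorithm is plugged in. In particular, whether the advertised \HYBRID exponents are indeed the right ones hinges on how sharply the \CC diameter complexities of \cite{Censor-Hillel2019a,Censor-Hillel2019b} can be exploited on an $n^x$-vertex instance with edge weights that are themselves $G$-distances rather than direct inputs.
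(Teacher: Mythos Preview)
Your approach is essentially the paper's: build a skeleton of $\tilT(n^x)$ random nodes, simulate a \CC diameter routine on it, absorb the additive $2h$ slack by falling back to a long local exploration when $D$ is small, and balance $x$ against the \CC running time. The parameter choices you arrive at ($x=2/3$ for the $\tilO(1/\eps)$-round $(3/2+\eps)$ algorithm of \cite{Censor-Hillel2019a}, and $x=2/(3+2\rho)\approx 0.603$ for the $(1+o(1))$-APSP of \cite{Censor-Hillel2019b}) match the paper exactly.

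Two small points the paper makes explicit that you gloss over. First, because the paper's approximation convention forbids underestimation, the output in the large-diameter case is $\tilD(\calS)+2h$ rather than $\tilD(\calS)$; your inequality $D/\tilde D\le c(1+O(h/D))$ silently allows the estimate to fall below $D$. Second, after the local exploration each node only knows its own eccentricity within $\eta h$ hops; the paper invokes the $\tilO(1)$-round \NCC aggregation protocol to compute $\widehat h=\max_v h_v$ globally so that all nodes agree on which of the two cases applies. Finally, your worry about ``edge weights that are themselves $G$-distances'' is a non-issue: the \CC subroutines take an arbitrary weighted graph as input, and the only subtlety is that the $(3/2+\eps,W)$ algorithm of \cite{Censor-Hillel2019a} has an additive $W$ term, which on an unweighted $G$ is bounded by $W_{\calS}\le h$ and is absorbed into $\beta/T_\calB\le \eps$ just like the $2h$ slack.
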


In \shrt{Section \ref{sec:lower-bounds}}\lng{Sections \ref{sec:k-src-lb} and \ref{sec:diam-lb}} we provide algorithmic lower bounds for the \HYBRID model. First we slightly modify the \smash{$\tilOm\big(\sqrt{n}\big)$} lower bound for the APSP problem given in \cite{Augustine2020} to obtain a \smash{$\tilOm\big(\sqrt k\big)$} lower bound for the $k$-SSP problem even for large approximation factors. This implies that our corresponding upper bounds are almost tight if the number of sources is sufficiently large ($k \geq n^{2/3}$ suffices).

\begin{thm}
	\label{thm:k-src-lb}
	A randomized algorithm that computes an {$\alpha$} approximation of the $k$-source shortest paths problem in the \HYBRID model takes \smash{$\tilOm\big(\sqrt k\big)$} rounds, even on unweighted graphs, for any $\alpha \leq \alpha'$ and some fixed \smash{$\alpha' \in \Theta\big(n/\,\sqrt{k}\big)$}.
\end{thm}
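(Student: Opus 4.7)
The plan is to adapt the $\tilOm(\sqrt n)$ APSP lower bound of~\cite{Augustine2020} by designing a hard graph family in which each of the $k$ source nodes must individually decode $\tilOm(\sqrt k)$ bits of ``global'' information that cannot arrive in time via \LOCAL. Combined with the fact that every node can receive only $O(\log^2 n)$ bits per round on the \NCC channel, a standard entropy/Fano argument, lifted to randomized algorithms via Yao's minimax principle, will yield the stated round bound.

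The construction mirrors the bipartite template of~\cite{Augustine2020}. We partition the $n$ vertices into a ``left'' side and a ``right'' side, place $k/2$ sources on each side, and for each left source $s$ attach $\sqrt k$ dedicated target vertices $t_{s,1},\dots,t_{s,\sqrt k}$ on the right. The remaining $n-\Theta(k)$ vertices form shared detour gadgets so that by default $d_G(s, t_{s,j}) = \Theta(n/\sqrt k)$; for each pair $(s,j)$ we additionally include a single shortcut edge iff an independently sampled uniform input bit $X_{s,j}$ equals $1$, which collapses the corresponding distance to $1$. A symmetric construction encodes analogous bits for right-side sources. Setting $\alpha' := c \cdot n/\sqrt k$ for a small enough constant $c$, every $\alpha$-approximation with $\alpha \le \alpha'$ must correctly distinguish the two cases for every pair $(s,j)$, so the algorithm effectively outputs each source's private $\sqrt k$-bit string.

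The information-theoretic bottleneck is argued per source. For any left source $s$ and any $T = o(n/\sqrt k)$, no bit of the ``foreign'' string $X_{s,\cdot}$ can reach $s$ via \LOCAL within $T$ rounds, because the edges controlling these bits lie strictly outside the $T$-ball of $s$ in $G$. All relevant information must therefore arrive through \NCC, which delivers at most $O(T\log^2 n)$ bits to $s$ in $T$ rounds. Since $X_{s,\cdot}$ consists of $\sqrt k$ independent uniform bits and a correct algorithm must reproduce them, Fano's inequality forces $T \log^2 n = \Omega(\sqrt k)$, i.e., $T = \tilOm(\sqrt k)$.

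The main technical obstacle is the graph design: we need to simultaneously realize $\Theta(k^{3/2})$ detour paths of length $\Theta(n/\sqrt k)$ on only $n-\Theta(k)$ padding vertices without creating unintended shortcuts that would collapse \emph{no}-case distances. We plan to reuse the path-bundle construction of~\cite{Augustine2020}, verifying that sharing padding vertices across different detours remains safe as long as shortcut edges are the only per-pair distinguishing feature. A secondary subtlety is ruling out amortization across sources: the independence of the private strings $X_{s,\cdot}$ across different sources ensures that no clever aggregation of transmissions through padding can reduce the per-receiver bandwidth requirement at any individual source.
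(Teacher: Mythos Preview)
Your bottleneck argument has a gap. You claim that since the input-carrying edges lie outside the $T$-ball of a source $s$, all relevant information must ``arrive through \NCC, which delivers at most $O(T\log^2 n)$ bits to $s$.'' But in \HYBRID a bit can first land via \NCC at any node $u$ within $T$ hops of $s$ and then reach $s$ via \LOCAL; the correct cap is the size of that $T$-ball times $T \cdot O(\log^2 n)$, not $T \cdot O(\log^2 n)$. With only $\sqrt k$ bits of entropy per source this yields at best $T=\tilOm(k^{1/4})$, and only if the $T$-ball has $O(T)$ nodes, which your bipartite template does not enforce. There is also an internal inconsistency in the gadget: a shortcut that ``collapses the distance to $1$'' must be incident to $s$, yet you also need it to ``lie strictly outside the $T$-ball of $s$.'' And even ignoring that, in $k$-SSP the node $s$ is only obliged to learn distances to the other \emph{sources}, not to your non-source targets $t_{s,j}$, so nothing in the problem specification forces $s$ to recover $X_{s,\cdot}$ at all.

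The paper sidesteps all of this by concentrating the full output burden at a \emph{single} non-source node $b$ at the end of a path. Two source clusters of size $k/2$ are attached at distances $L\in\tilT(\sqrt k)$ and $\Theta(n)$ from $b$, and the adversary randomly partitions the $k$ sources between them, giving $\tilOm(k)$ bits of entropy that $b$ must learn to output its distances. Because $b$'s $T$-hop neighborhood is a path segment of only $T{+}1$ nodes, the aggregate \NCC capacity over $T<L$ rounds is $\tilO(T^2)$, forcing $T=\tilOm(\sqrt k)$; a wrong guess on cluster membership for even one source incurs a multiplicative error of order $\Theta(n/\sqrt k)$, which yields $\alpha'$.
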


Furthermore, we show that the $2$-party set disjointness framework can be adapted for the \HYBRID model. Specifically, we adapt the worst case graphs of \cite{FHW12, holzer_et_al16} (constructed for the \CONGEST model) to create an according information bottleneck in the \HYBRID model. We obtain the following lower bounds.

\begin{thm}\label{thm:diameter_lower}
	Computing the diameter in the \HYBRID model takes $\Omega\big((n/\log^{2} n)^{1/3}\big)$ rounds even for randomized algorithms. For \emph{weighted} diameter the same holds for $(2\m \eps)$-approximations for any $\eps >0$ (weights up to $n^{1/3}$ suffice). 
\end{thm}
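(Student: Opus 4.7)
The plan is to reduce the two-party set disjointness problem (SDJ) to computing the diameter in \HYBRID. In SDJ, Alice holds $A \subseteq [N]$ and Bob holds $B \subseteq [N]$; distinguishing $A \cap B = \emptyset$ from $A \cap B \neq \emptyset$ requires $\Omega(N)$ bits of two-party communication, even for randomized protocols with bounded error. I would build a family of $n$-vertex graphs $G_{A,B}$ whose diameter reveals $A \cap B$, and then argue via a two-party simulation that any $T$-round \HYBRID diameter algorithm yields a cheap SDJ protocol, forcing $T = \tilOm(n^{1/3})$.

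I would adapt the diameter gadgets of \cite{FHW12, holzer_et_al16}. Partition $V(G_{A,B})$ into an Alice-side region $V_A$ encoding $A$, a Bob-side region $V_B$ encoding $B$, and an input-oblivious scaffolding $V_C$ that bridges them. Each input region consists of an ``input core'' of $\tilT(n^{1/3})$ vertices that hosts the $N = \tilT(n^{2/3})$ input-dependent edges, plus a thin ``padding strip'' of length $\Theta(n^{1/3})$ and width $\bigO(1)$ that attaches the core to the scaffold; the scaffold $V_C$ is a ``thick path'' of $\Theta(n^{1/3})$ layers each with $\Theta(n^{2/3})$ vertices, giving $n$ vertices total and unperturbed diameter $d = \Theta(n^{1/3})$. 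Engineer the core gadgets so that $\mathrm{diam}(G_{A,B}) = d$ when $A \cap B = \emptyset$ and $\mathrm{diam}(G_{A,B}) \geq d+1$ otherwise. Crucially, every input-dependent edge sits at graph-distance $\Omega(n^{1/3})$ from the opposite side of the graph.

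For a \HYBRID algorithm $\calA$ solving diameter in $T$ rounds, Alice simulates the nodes of $V_A \cup V_C$ and Bob the nodes of $V_B$ (so $|V_B| = \bigO(n^{1/3})$). In each round the players must jointly reconstruct all messages crossing the cut. Two observations make this cheap when $T = o(n^{1/3})$: (i) because every input-dependent edge is buried $\Omega(n^{1/3})$ hops from the cut, every node within $T$ hops of the cut has a state that is a deterministic function of the publicly-known scaffolding, so the \LOCAL messages crossing the cut can be computed by each party on their own without any two-party communication; and (ii) the \NCC messages between $V_B$ and $V_A \cup V_C$ transmit at most $\min(|V_B|, |V_A|+|V_C|) \cdot \bigO(\log^2 n) = \bigO(n^{1/3}\log^2 n)$ bits per round, since every such message counts against $V_B$'s per-round send- or receive-budget. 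Over $T$ rounds the total simulated communication is $\bigO(T \cdot n^{1/3} \log^2 n)$, which must exceed $\Omega(N) = \Omega(n^{2/3})$; hence $T = \Omega(n^{1/3}/\log^2 n) = \tilOm(n^{1/3})$. Tightening the polylogarithmic factors in the gadget sizes yields the stated $\Omega((n/\log^2 n)^{1/3})$. The main technical obstacle is the layered construction itself: it must simultaneously (a) encode $\tilOm(n^{2/3})$ input-dependent edges, (b) keep each of them $\Omega(n^{1/3})$ hops from the cut, and (c) still achieve overall diameter $\Theta(n^{1/3})$ so as to match the upper bound of Theorem~\ref{thm:diameter}.

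For the weighted variant, I would place weights up to $W = \Theta(n^{1/3})$ on critical gadget edges to inflate the unit diameter gap $(d, d+1)$ into a factor-two gap: $\mathrm{diam}(G_{A,B}) = D$ if $A \cap B = \emptyset$ and $\mathrm{diam}(G_{A,B}) \geq 2D$ otherwise, with $D = \Theta(n^{1/3})$. Any $(2-\eps)$-approximation then returns a value in $[D,(2-\eps)D]$ in the disjoint case and in $[2D, 2(2-\eps)D]$ otherwise; these intervals are disjoint for every $\eps > 0$, so the approximation still decides SDJ. The same simulation argument then yields the identical $\Omega((n/\log^2 n)^{1/3})$ lower bound for $(2-\eps)$-approximations, using only edge weights bounded by $n^{1/3}$.
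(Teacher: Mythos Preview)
Your set-disjointness reduction is the right high-level approach and matches the paper, but your simulation argument has a genuine gap.

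The flaw is claim (i). You argue that because the input-dependent edges lie $\Omega(n^{1/3})$ \LOCAL-hops from the cut, every boundary node's state is a deterministic function of the public scaffolding, so the unbounded-size \LOCAL messages across the cut can be reproduced by each party without communication. This light-cone reasoning is sound in pure \LOCAL but fails in \HYBRID: already in round~1 a node in Bob's input core can send an \NCC message directly to Bob's boundary node. That is an intra-$V_B$ global message, hence \emph{not} among the cut-crossing messages you account for in (ii). From round~2 onward the boundary node's state depends on Bob's input, and so does the \LOCAL message it sends across the cut; Alice cannot compute that message, and since \LOCAL messages are unbounded Bob cannot afford to forward it either.

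The paper circumvents this with the shrinking-overlap simulation of \cite{Sarma2012}: Alice initially simulates every column except Bob's input column, Bob the symmetric region, and each round both drop one column from the far side. Because both parties always simulate both endpoints of any \LOCAL edge whose message they still need, \LOCAL messages never have to be exchanged; only the global traffic does, at a cost of $O(n\log^2 n)$ bits per round (all of it, not just what touches a small side). Correspondingly the paper uses the literal \cite{holzer_et_al16} graph---$k\in\Theta(n^{2/3})$ input nodes on each side joined by $\Theta(k)$ parallel length-$\ell\in\Theta(n^{1/3})$ paths---rather than tiny cores hung off a wide scaffold by width-$O(1)$ strips; with your layout it is unclear how disjointness would influence the diameter at all, since each core is traversed in $O(1)$ hops regardless of the input. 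Balancing $k^2=\Omega(T\cdot n\log^2 n)$ against $k\ell=\Theta(n)$ and $T\le\ell/2$ then yields the stated $\Omega\big((n/\log^2 n)^{1/3}\big)$.
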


\lng{Note that for the \textit{weighted} diameter problem there is an upper bound of $\tilO(n^{1/3})$ rounds for a $(2 \p o(1))$-approximation. This upper bound is implied by the $(1 \p o(1))$-approximative \HYBRID-algorithm for SSSP with same complexity given in \cite{Augustine2020}.\footnote{Due to the fact that the eccentricity $e(v) := \max_{u \in V} d(u,v)$ of a node can be computed with a SSSP algorithm and $D/2 \leq e(v) \leq D$.} %Furthermore, by applying the same argument to another SSSP scheme given by \cite{Augustine2020}, diameter approximations can be computed even faster if we allow large constant approximations factors. %, or if the diameter is relatively small (we get $o(n^{1/3})$ runtime for a 2-approximation when $D \in o(n^{2/3})$). 
%More on that in the following subsection.
}

\subsection{Related Work}

\paragraph{Hybrid Communication Networks} Apart from the work of Augustine et al.~\cite{Augustine2020}, there is little work on distributed algorithms in hybrid network models. In \cite{Augustine2020} an information dissemination routine to broadcast $k$ tokens in \smash{$\tilO\big(\sqrt{k}\big)$} time is employed to solve various shortest paths problems. They solve APSP exactly in time \smash{$\tilO\big(n^{2/3}\big)$} and show that approximations have complexity \smash{$\tilT\big(\sqrt{n}\big)$}. For SSSP they give an exact solution in time $\tilO\big(\sqrt{SPD}\big)$ for graphs with shortest path diameter $SPD$. In a similar approach to this paper, the token dissemination protocol is used to simulate the \textit{broadcast congested clique} (\BCC) model. This results in a $(1 \p o(1))$-approximation with complexity \smash{$\tilO(n^{1/3})$} when combined with the fast SSSP algorithm by \cite{BKKL17} for that model. Finally, they give an approximation scheme for SSSP with constant but large factor $(1/\eps)^{O(1/\eps)}$ in $\tilO(n^\eps)$ rounds.

\paragraph{Global Communication Networks} Notably, the work of Gmyr et al.~\cite{gmyr17_hybrid} studies the minimum spanning tree (MST) problem in a similar setting as \cite{Augustine2020}. However, their algorithm uses only the global communication edges (with an explicit polylogarithmic degree overlay network). They show that if the local network has bounded degree, an MST can be computed in polylogarithmic time. Following up on \cite{gmyr17_hybrid}, Augustine et al.~\cite{augustine19_NCC} introduce the more abstract \NCC model and show that the MST problem can be solved in polylogarithmic time on general graphs. The \NCC model can be seen as a (much weaker) generalization of the \textit{congested clique} (\CC) model (however in general, there is an $\Omega(n/\log n)$-factor slowdown when going from the \CC to the \NCC model). Shortest paths problems have been intensively studied in the \CC model (see, e.g., \cite{censor-hillel18,Censor-Hillel2019a,Censor-Hillel2019b,holzer_et_al16,legall16}). Notably, \cite{Censor-Hillel2019b} gives polylogarithmic-time, constant approximation algorithms for APSP and $k$-SSP for the \CC model. As discussed above, many of our results are obtained by simulating existing \CC algorithms in the \HYBRID model on small, random subgraphs.

\paragraph{Local Communication Networks} In the \CONGEST model, there is a long series of work on exactly and approximately solving the SSSP problem (see, e.g., \cite{Sarma2012,LP13,Nan14,HKN16,BKKL17}) and the APSP problem (see, e.g., \cite{LeP13,HW12,LP15,Nan14,HNS17,ARKP18,BN19}). For SSSP, we know how to compute an $(1 \p o(1))$-approximation in time $\tilO(D\!+\sqrt{n})$ due to~\cite{BKKL17}, which is known to be optimal up to polylogarithmic factors even for a much coarser approximation~\cite{Sarma2012}. For the APSP problem, it was shown that $\tilOm(n)$ rounds are needed, even for a polynomial approximation by~\cite{Nan14}. Recently, it was shown that this bound can be matched (up to polylogarithmic factors) even for computing APSP exactly~\cite{BN19}.

\paragraph{Graph Diameter} Work on distributed algorithms for the diameter of a graph was initiated by Frischknecht et al.~\cite{FHW12}. They show that in the \CONGEST model, one needs $\tilOm(n)$ rounds for an exact solution and $\tilOm(D+\sqrt{n})$ rounds for a $(\frac{3}{2} \m \eps)$-approximation even in unweighted graphs. Both bounds have later been achieved: in \cite{PRT12}, an elegant $O(n)$-time algorithm for unweighted graphs was given and in \cite{holzer14_BA}, an $\tilO(D+\sqrt{n})$-time algorithm for computing a $\tfrac{3}{2}$-approximation of the diameter in unweighted graphs was presented.
For weighted graphs, computing a $(2-\eps)$-approximation takes $\tilOm(n)$ rounds~\cite{holzer_et_al16}. The diameter problem has also been considered in the \CC model by \cite{Censor-Hillel2019b}. They show that a $(\frac{3}{2} \p \eps)$-approximation can be computed in polylogarithmic time.

\hide{Apart from the work of Augustine et al.~\cite{Augustine2020}, there is little algorithmic work on distributed algorithms in hybrid network models.
Notably, the work of Gmyr et al.~\cite{gmyr17_hybrid} studies the minimum spanning tree (MST) problem in a similar setting as \cite{Augustine2020}. However, their algorithm uses only the global communication edges (they use an explicit polylogarithmic degree overlay network rather than the more abstract \NCC model of \cite{augustine19_NCC}). They show that if the local network has bounded degree, an MST can be computed in polylogarithmic time. Following up on \cite{gmyr17_hybrid}, Augustine et al.~\cite{augustine19_NCC} introduce the \NCC model and show that even for general graphs, the MST problem can be solved in polylogarithmic time. The \NCC model can be seen as a (much weaker) generalization of the congested clique (\CC) model, which has been studied intensively (see, e.g., \cite{LSPP05,Lenzen13,Censor-Hillel2019a,Censor-Hillel2019b}). Note however that in general, there is an $\Omega(n/\log n)$-factor slowdown when going from the \CC to the \NCC model.

Shortest path and diameter problems have been studied intensively in other standard models of distributed computing, in particular in the \CONGEST model and in the \CC model. In the \CONGEST model, there is a long series of work on exactly and approximately solving the SSSP problem (see, e.g., \cite{LP13,Nan14,HKN16}) and the APSP problem (see, e.g., \cite{congestAPSPpapers})\phil{citations missing}. Specifically, we know how to compute an $(1 \p o(1))$-approximation of SSSP in time $\tilO(D\!+\sqrt{n})$~\cite{BKKL17}, which is known to be optimal up to polylogarithmic factors even for a much coarser approximation~\cite{Sarma2012}. For the APSP problem, it is known that $\tilOm(n)$ rounds are needed, even for a polylogarithmic approximation~\cite{Nan14}. Recently, it was shown that this bound can even be matched up to polylogarithmic factors even for computing APSP exactly~\cite{BN19}. Both, the SSSP and the APSP problem have also been studied in the \CC (see, e.g., \cite{censor-hillel18,Censor-Hillel2019a,Censor-Hillel2019b,holzer_et_al16,legall16}). As discussed above, many of our results are obtained by simulating existing \CC algorithms in the \HYBRID model on smaller random subgraphs of the network graph $G$. For the \CC, it is for example known that there are polylogarithmic-time algorithms that computing $(2 \p \eps)$-approximate APSP and $(1\p\eps)$-approximate SSSP solutions~\cite{Censor-Hillel2019b}.

The problem of computing or approximating the diameter of a graph in a distributed setting has been initiated by Frischknecht et al.~\cite{FHW12}. They show that in the \CONGEST model, one needs $\tilOm(n)$ rounds to exactly compute the diameter and $\tilOm(D+\sqrt{n})$ rounds to compute a $(3/2-\eps)$-approximation of the diameter even in unweighted graphs. Both bounds have later been achieved by distributed \CONGEST algorithms. In \cite{PRT12}, an elegant $O(n)$-time algorithm to compute the diameter %(and actually even all distances) <- what does it mean?
of unweighted graphs was given and in \cite{holzer14_BA}, an $\tilO(D+\sqrt{n})$-time algorithm for computing a $(3/2+\eps)$-approximation of the diameter in unweighted graphs was presented. For weighted graphs, it was shown that even computing a $(2-\eps)$-approximation takes $\tilOm(n)$ rounds~\cite{holzer_et_al16}. The problem of approximating the diameter in the \CC model has also been considered in \cite{Censor-Hillel2019b}. They show that a $(3/2+\eps)$-approximation of the diameter can be computed in polylogarithmic time.}

\subsection{Preliminaries}

\paragraph{General Definitions} The scope of this paper is solving graph problems in the local communication graph $G = (V,E)$. We consider edges to be \textit{undirected}. Edges are assigned \textit{weights} $w: E \to [W]$, where $W$ is at most polynomial in $n$, thus the weight of an edge (and a simple path) fits into a message. A graph is considered \textit{unweighted} if $W=1$. Let $w(P) = \sum_{e \in P}w(e)$ denote the length of a path $P \subseteq E$. Then the \emph{distance} between two nodes $u,v \in V$ is defined as
\[
	d(u,v) := \!\min_{\text{$u$-$v$-path } P} w(P).
\] 
A path with smallest length between two nodes is called a \emph{shortest path}.
Let $|P|$ be the number of edges (or \emph{hops}) of a path $P$.
The \emph{hop-distance} between two nodes $u$ and $v$ is defined as 
\[
	hop(u,v) := \!\min_{\text{$u$-$v$-path } P} |P|.
	\vspace*{-1mm}
\] 
The \emph{diameter} of $G$ is defined as 
\[
	D(G) := \max_{u,v \in V} hop_G(u,v).
\]
Let the \emph{$h$-limited distance} from $u$ to $v$ 
\[
	d_{h}(u,v) := \!\!\min_{{\text{$u$-$v$-path } P, |P| \leq h }}\, w(P).
\]
If there is no $u$-$v$ path $P$ with $|P|\leq h$ we define $d_{h}(u,v) := \infty$. 
%Whenever the graph $G$ is clear from the context, we drop the subscript $G$ in the above notations. %In this paper, we consider the following shortest-paths problems in $G$.

\paragraph{Problem Definitions}

In the general $k$-\textit{sources shortest paths problem} ($k$-SSP) we are a given set of $k$ sources. Every $v \in V$ has to learn $d(v,s)$ for all sources $s$. In the $(\alpha,\beta)$-approximate version of the problem for $\alpha \geq 1, \beta \geq 0$, every $u\in V$ has to learn values $\tilde{d}(v,s)$ such that $d(v,s)\leq \tilde{d}(v,s)\leq \alpha\cdot d(v,s) + \beta$ for all sources $s$. For $\beta = 0$ we speak of the ``$\alpha$-approximation'' problem.

The \textit{all-pairs shortest paths problem} (APSP) equals the case $k=n$ and the single-source shortest paths problem (SSSP) equals the case $k=1$. In the \textit{diameter problem} \textit{all} nodes have to learn $D(G)$. We define $(\alpha,\beta)$-approximations of the diameter problem analogously. 

The \textit{token routing problem} describes the following information routing problem. We are given a set of sender nodes $S \subseteq V$ and a set of receiver nodes $R \subseteq V$. Each sender $s \in S$ must send at most $k_S$ and each receiver $r \in R$ must receive at most $k_R$ tokens (of size $O(\log n)$ bits each), respectively. Each token has a dedicated receiver node $r \in R$, and each receiver $r \in R$ knows the senders it must receive a token from and how many tokens it needs to receive from each sender. The problem is solved when all nodes in $R$ know all tokens they are the receiver of.

	\section{Token Routing}
\label{sec:token-routing}

In the first technical section we solve the token routing problem defined above if sender nodes $S$ and receiver nodes $R$ are ``well distributed'' which is the case (w.h.p.) if senders and receivers are sampled from $V$ uniformly at random. We start by defining \textit{helper sets} for the sender nodes $S$ and receiver nodes $R$, respectively:

\begin{defn}
	\label{def:helpers}
	Let $W \subseteq V$ and let $k \in \bigO\big(n/|W|\big)$ be the maximum number of tokens per node. A family $\{H_w \subseteq V \mid w \in W\}$ of \emph{helper sets} fulfills the following properties for all $w \in W$ and some integer $\mu \in \smash{\Theta \big(\!\min\big(\sqrt{k}, n/|W|\big)\big)}$. (1) Each $H_w$ has size at least $\mu$. (2) The hop-distance is $hop(w,x) \in \tilO(\mu)$ between $w$ and any $x \in H_w$. (3) Each node is member of at most $\tilO(1)$ sets $H_w$. %Analogously we define helper sets $H'_v$ for receiver nodes $v \in T$ where we replace $S$ with  min.\ size and max.\ hop-distance of $n/|T|$.
\end{defn}

Figuratively speaking, a family of helper sets assigns each node in $W$ a sufficiently large set of helper nodes nearby, so that no node from $V$ has to help too many nodes in $W$. Suppose the sets $H_s$ for $s \in S$ and $H'_r$ for $r\in R$ form two families of helper sets for senders $S$ and receivers $R$, respectively. Intuitively, each sender $s \in S$ (or receiver $r \in R$) can increase its bandwidth over the global network by a factor of $\mu$, by relying on its $\mu$ helpers from $H_s$ (or $H'_r$).

This works as follows \big(assume for the sake of this explanation that $|S| \in \Theta\big(|R|\big)$ and $k := k_S \in \Theta\big(k_R\big)$, so that $\mu := \mu_S \in \Theta(\mu_R)$\big). In a first step we use the local network to distribute the tokens of each sender to its respective helpers. Due to property (2) this can be done in $\tilO(\mu)$ rounds (each sender simply floods its tokens to depth $\tilO(\mu)$ in the local network and each helper picks the tokens assigned to it by its sender).

Then each helper $v \in H_s$ takes care of sending some part of the tokens of $s \in S$ to some helper $u \in H'_r$ of the according receiver $r \in R$ of that message. We ensure that this happens in a balanced manner, so that each helper has to exchange at most $\tilO(k/\mu)$ tokens, due to property (1). The main challenge in this step is to pair up helpers of senders with helpers of receivers in a balanced way without helper sets knowing each other. We achieve this by relaying tokens via pseudo-random intermediate nodes (i.e., determined by a publicly known, randomly seeded hash-function).

More precisely each helper $v \in H_s$ forwards the messages it is responsible for, to a set of intermediate nodes. Afterwards each helper $u \in H'_r$ requests its share of all messages to $r$ from the appropriate intermediate nodes. Here we leverage that a receiver $r \in R$ knows the set of messages it must receive from each sender, so that its helpers $u \in H'_r$ can compute which intermediates they must fetch their share of messages from.

% they need to knowledge the set of senders $S$. This can be made public knowledge in \smash{$\tilO\big(\sqrt{|S|}\big)$} rounds using the token dissemination protocol of \cite{Augustine2020}.

%A helper $v \in H_s$ forwards a token with recipient $r \in R$ to an appropriate intermediate node whose ID is a hashvalue $h$ depending on $s$ and $r$ (and some message ID). Then a helper of $r$ will request the message from that intermediate with ID $h$, which will then send it to that helper.

After all tokens are sent in that manner, the helper sets $H'_r$ will collectively own all tokens for receiver $r \in R$. Finally, $r$ collects its tokens from $H'_r$ in time $\tilO(\mu)$ via the local network (again the helpers can simply flood their tokens to depth $\mu$ and the receivers pick the tokens they are recipient of). The whole process takes \smash{$\tilO\big(\mu + k/\mu \big) = \tilO\big(K/n + \sqrt{k}\big)$} rounds, where \smash{$K \in \Theta \big(|S|\!\cdot\!k\big)$}.

Note that the helper sets specified in Definition \ref{def:helpers} can only exist if the sets of senders $S$ and receivers $R$ are sufficiently separated from each other in the network. We will see that we can find appropriate helper sets if $S \subseteq V$ and $R \subseteq V$ are sampled uniformly at random, which is sufficient for our purposes. Ultimately, we will show the following theorem.

\begin{thm}
	\label{thm:token-routing}
	Let $S, R \subseteq V$ be sampled from $V$ with probabilities \smash{$p_S = 1/n^{\eps}, p_R = 1/n^{\delta}$} for const.\ $\eps, \delta \in [0,1)$, respectively. Let $k_S$ and $k_R$ be the number of tokens to be sent or received by any node in $S$ and $R$, respectively.
	Let $K := |S|\!\cdot\!k_S \!+\! |R|\!\cdot\!k_R$ be the total workload. Then the token routing problem can be solved in \smash{$\tilO\big(K/n +\, \sqrt{k_S} +\, \sqrt{k_R}\big)$} rounds.
\end{thm}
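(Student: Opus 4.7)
The plan is to instantiate the three-phase scheme sketched before the theorem statement: a local dissemination step that spreads each sender's tokens onto its helpers, a global exchange step routed via pseudo-random intermediates that pairs sender-helpers with receiver-helpers, and a local collection step that gathers each receiver's tokens from its helpers. Write $\mu_S \in \Theta(\min(\sqrt{k_S},n/|S|))$ and $\mu_R \in \Theta(\min(\sqrt{k_R},n/|R|))$ for the helper-set sizes from Definition~\ref{def:helpers} applied to $W=S$ and $W=R$ respectively.

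First I would construct, using only the local network, two families of helper sets $\{H_s\}_{s \in S}$ and $\{H'_r\}_{r \in R}$ meeting properties~(1)--(3). The idea is to run a multi-source BFS from every sampled node for $\tilO(\mu_S)$ (resp.\ $\tilO(\mu_R)$) hops: because $S$ is a Bernoulli-$p_S$ sample, standard concentration implies that the hop-ball of radius $\tilO(\mu_S)$ around any $s \in S$ contains at least $\mu_S$ non-sender nodes w.h.p., so every $s$ can greedily claim $\mu_S$ helpers while a simple round-robin charging when several senders compete for the same node enforces that no node ends up in more than $\tilO(1)$ sets $H_s$. The same procedure is repeated for $R$. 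This preprocessing costs $\tilO(\mu_S + \mu_R)$ rounds of \LOCAL.

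In Phase~(i) each $s \in S$ floods its $k_S$ tokens to depth $\tilO(\mu_S)$ in $G$; every helper $v \in H_s$ then keeps the $\tilO(k_S/\mu_S)$ tokens assigned to it by a publicly known partition (e.g.\ lexicographic by destination). Since each node is in $\tilO(1)$ helper sets the congestion is absorbed by the \LOCAL bandwidth, so this runs in $\tilO(\mu_S)$ rounds; Phase~(iii) is symmetric and costs $\tilO(\mu_R)$ rounds. Phase~(ii) is the substantive step. Using a globally known random hash $h$ (seeded once), every sender-helper $v \in H_s$ maps each of its tokens to an intermediate node $h(s,r,i) \in V$ and pushes it there through the \NCC, while every receiver-helper $u \in H'_r$ independently computes the same $h(s,r,i)$ for the tokens assigned to it and pulls them from the corresponding intermediates. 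Correctness of the pull step relies on the fact that $r$ knows the sender and multiplicity of each token it expects and can distribute this information to its helpers beforehand via $H'_r$.

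The final piece is to bound the \NCC load in Phase~(ii). The total number of tokens routed is $\tilO(K)$, so a standard balls-into-bins / Chernoff argument over the random seed of $h$ shows that every intermediate handles $\tilO(K/n)$ tokens w.h.p., while each sender-helper sends $\tilO(k_S/\mu_S)$ and each receiver-helper receives $\tilO(k_R/\mu_R)$ tokens. Invoking the \NCC routing machinery of~\cite{augustine19_NCC} (which delivers $\tilO(L)$ per-node load in $\tilO(L)$ rounds), Phase~(ii) terminates in $\tilO\big(K/n + k_S/\mu_S + k_R/\mu_R\big)$ rounds. Substituting $\mu_S$ and $\mu_R$ gives $\tilO\big(K/n + \sqrt{k_S} + \sqrt{k_R}\big)$, and adding the two local phases preserves this bound. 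The most delicate part of the analysis is property~(3) of the helper sets on arbitrary topologies once $\mu_S$ or $\mu_R$ saturates to $n/|S|$ or $n/|R|$: there the BFS balls of different samples necessarily overlap and the $\tilO(1)$ multiplicity has to be recovered by a careful greedy partitioning together with concentration of sample counts in bounded-radius balls.
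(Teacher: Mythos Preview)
Your three-phase scheme (local spread to helpers, global push/pull via hash-determined intermediates, local collect) and the final runtime bookkeeping match the paper's architecture essentially line for line, including the substitution $k_W/\mu_W + \mu_W = \tilO(\sqrt{k_W} + k_W|W|/n)$.

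The one real gap is the helper-set construction. You propose multi-source BFS plus ``greedy claiming with round-robin charging,'' justified by concentration of sample counts in bounded-radius balls. But on arbitrary graphs the $\tilO(\mu_S)$-hop ball around a sender can contain anywhere from $\mu_S$ to $n$ nodes, so a node $v$ in a dense region (say the center of a star, or any vertex of a clique) lies in the BFS balls of $\Theta(|S|)$ senders. Concentration of $|S\cap B_{\mu_S}(v)|$ tells you nothing useful there, and you never specify a distributed rule that simultaneously guarantees every $H_s$ has $\ge\mu_S$ members \emph{and} every node lands in only $\tilO(1)$ sets. You correctly flag this as the delicate step but do not resolve it.

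The paper handles exactly this difficulty by first computing a $(2\mu{+}1,\,2\mu\lceil\log n\rceil)$-ruling set and clustering every node to its nearest ruler. The clusters are \emph{disjoint}, each has size $\ge\mu$ and diameter $\tilO(\mu)$, and within a cluster $C$ every node joins each $H_w$ for $w\in W\cap C$ independently with probability $\Theta(\mu/|C|)$. Since $\E[|W\cap C|]=p\,|C|$ and $\mu\le 1/p$, the expected number of helper sets a node joins is $\Theta(\mu/|C|)\cdot p\,|C|=\Theta(\mu p)\le O(1)$, and Chernoff gives $\tilO(1)$ w.h.p.\ uniformly over all cluster sizes. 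The disjointness of the ruling-set clusters is precisely what your overlapping BFS balls lack.
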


\subsection{Computing Helper Sets}
\label{sec:helper-sets}

Our first task is to compute helper sets satisfying the conditions of Definition \ref{def:helpers}.
\shrt{As this task is not particularly demanding we restrict ourselves to a high level summary of our approach, and skip most of the details. The detailed subroutines a complete proofs are provided in the full version of this article \cite{KS20}.}
We propose a simple protocol that leverages fast symmetry breaking algorithms that are known for the local network (even for the more restrictive $\mathsf{CONGEST}$ model). Specifically, we first compute a $(\alpha, \beta)$-ruling set. Such a set has the following properties:

\begin{defn}[c.f., \cite{awerbuch89}]
	An $(\alpha, \beta)$-ruling set for $G = (V,E)$ is a subset $\mathcal{R} \subseteq V$, such that for every $v \in V$ there is a node $r \in \mathcal{R}$ with $hop(v,r) \leq \beta$ and for any $r_1,r_2 \in \mathcal{R}, r_1 \neq r_2$ we have $hop(r_1,r_2) \geq \alpha$.
\end{defn}

There is a plethora of work on how to compute ruling sets in the distributed setting. A particular result that suits our needs in terms of parameter range and runtime is given by the following Lemma \ref{lem:ruling_set}. This lemma is a Corollary of Theorem 1.1 in \cite{kuhn2018} (which also works in the stronger \CONGEST model). The classic paper \cite{awerbuch89} provides the same result for \LOCAL, which would also be sufficient.

\begin{lem}[c.f., \cite{awerbuch89,kuhn2018}] 
	\label{lem:ruling_set}
	A $\big(2\mu\!+\!1, 2\mu \lceil \log n \rceil\big)$-ruling set can be computed (deterministically) in the local network in $\bigO(\mu \log n)$ rounds.
\end{lem}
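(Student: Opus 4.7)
The plan is to derive this ruling set by invoking a known $(2, O(\log n))$-ruling set algorithm on an auxiliary graph, namely the $2\mu$-th power $H := G^{2\mu}$ (in which two nodes are adjacent iff their hop-distance in $G$ is at most $2\mu$). The key observation is a parameter translation: any $(2, \lceil \log n \rceil)$-ruling set $\mathcal{R}$ in $H$ is automatically a $(2\mu+1, 2\mu \lceil \log n \rceil)$-ruling set in $G$. Indeed, non-adjacency in $H$ of two ruling nodes means $hop_G(r_1, r_2) > 2\mu$, i.e.\ at least $2\mu+1$, and a node $v$ that has a ruling neighbor within $\lceil \log n \rceil$ hops in $H$ sees one within $2\mu \lceil \log n \rceil$ hops in $G$.

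To compute the $(2, \lceil \log n \rceil)$-ruling set in $H$, I would apply the deterministic algorithm of Theorem~1.1 in \cite{kuhn2018} (alternatively, the original \cite{awerbuch89} construction suffices in \LOCAL), which produces such a ruling set in $O(\log n)$ rounds of the underlying communication graph. The only cost of running this algorithm on $H$ instead of $G$ is a simulation overhead: one round of $H$ can be executed in $O(\mu)$ rounds of $G$ in the \LOCAL model, since each node can learn its full $2\mu$-neighborhood by flooding and then apply the local rules. Multiplying through gives the claimed $O(\mu \log n)$ round complexity, and the algorithm remains deterministic because both the power-graph simulation and the underlying ruling set construction are deterministic.

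The main obstacle is purely in matching parameters: the cited theorem is usually phrased for a family of $(\alpha, \beta)$ tradeoffs, and one has to pick the right instantiation so that the independence radius in $H$ translates to exactly $2\mu+1$ and the domination radius to exactly $2\mu \lceil \log n \rceil$ in $G$. Beyond that, the proof is essentially a bookkeeping argument combining the power-graph simulation lemma with the known ruling set algorithm.
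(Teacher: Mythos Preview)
Your argument is correct. The paper itself does not give a proof of this lemma; it merely states it as a corollary of Theorem~1.1 in \cite{kuhn2018} (and notes that \cite{awerbuch89} already suffices in \LOCAL). Your power-graph derivation is precisely the standard way to unpack such a corollary: the parameter translation from a $(2,\lceil\log n\rceil)$-ruling set in $G^{2\mu}$ to a $(2\mu{+}1,\,2\mu\lceil\log n\rceil)$-ruling set in $G$ is exactly right, and the $O(\mu)$-round simulation of one $G^{2\mu}$-round in \LOCAL gives the stated $O(\mu\log n)$ bound.

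The only caveat worth noting is that your simulation step relies on unbounded local messages and therefore establishes the claim in \LOCAL, whereas the paper remarks that the cited \cite{kuhn2018} result holds even in \CONGEST. Since the lemma as stated only asks for the local network (i.e., \LOCAL), this is not a gap in your argument, just a slightly weaker conclusion than what the cited source could yield.
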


\lng{Before we start with the details, we give some high level view how Algorithm \ref{alg:compute-helpers} (given below) uses a ruling set to compute helper sets.} Based on an $(\alpha, \beta)$-ruling set $\mathcal{R} \subseteq V$, we can compute a \textit{clustering} of $G$. Initially, each ``ruler'' $r \in \mathcal{R}$ represents a separate cluster $C_r$. Then each node $v \in V$ determines the ruler $r \in \mathcal{R}$ closest to it (break ties arbitrarily), and joins its cluster $C_r$. 
Now we have the following properties. Each cluster contains at least $(\alpha\!-\!1)/2$ nodes (due to the min.\ hop-distance of $\alpha$ between rulers). Furthermore, the diameter of each cluster is at most $\beta$. 

Presume that each node joins some set $W$ with probability $p \in (0,1)$. That means, $W$ is randomly sampled set of nodes that persists in the network in a distributed sense, i.e., each node is aware whether it is a member of $W$ or not. The goal is to compute helper sets for $W$ based on the aforementioned clustering.
First we compute such a clustering based on a $\big(2\mu\!+\!1,2\mu\lceil \log n \rceil\big)$-ruling set as described above. Note that $\mu$ is the integer parameter used in Definition \ref{def:helpers} (but neglect its exact value for now). Let $C$ be one of the clusters. The number of nodes from $W$ within that cluster is roughly $|C \cap W| \in \tilO(p \!\cdot\! |C|)$ w.h.p. Then each node $v$ in cluster $C$ joins the helper set $H_w$ of each $w \in W \cap C$ with probability $q \in \tilT(\mu/|C|)$. %Note that $q \leq 1$ since $|C| \geq \mu$. 

Now we have the following properties: (1) each $H_w$ has size $|H_w| \in \tilT\big(q \, |C|\big) = \tilT(\mu)$ w.h.p. In fact, we can adjust the constant factors in $q$ such that $|H_w|\geq \mu$ w.h.p. (2) the hop-distance between any pair of nodes in the same cluster $C$ is at most $2\mu\lceil \log n \rceil \in \tilO(\mu)$ and therefore the same is true for any helper set $H_w \subseteq C$. (3) each $v \in C$ joins roughly $\tilO\big(q |C \cap W|\big) = \tilO\big(qp |C|\big) = \tilO(\mu p) = \tilO(\mu|W|/n)$ helper sets. Then for $\mu \!\in\! \tilO(n/|W|)$ (as required by Definition \ref{def:helpers}) we have that each $v$ joins at most $\tilO(1)$ sets $H_w$. Note that these were the three requirements for helper sets of Definition \ref{def:helpers}.

\begin{longversion}
\begin{algorithm}[H]
	\caption{\texttt{Compute-Helpers}$(\mu)$ \Comment{\textit{$\mu$ as in Lemma \ref{lem:helpers}}}}%, $\langle W \rangle$: tag that nodes of $W$ are marked with}}}%\Comment{\textit{$p$ is the probability with which nodes are sampled to $W$}}}
	\label{alg:compute-helpers}
	\begin{algorithmic}
		\State compute a $\big(2\mu\!+\!1,2\mu\lceil \log n \rceil\big)$-ruling set $\mathcal{R}$ with algorithm \cite{kuhn2018}
		\For {$2\mu\lceil \log n \rceil$ rounds} \Comment{\textit{learn $\mathcal{R}$, $W$ within $\mu\lceil \log n \rceil$ hops}}
		\State $v$ forwards information on $\mathcal{R},W$ via incident local edges
		\EndFor
		\State $\mathcal{R}_v \gets$ rulers that $v$ discovered
		\State $r \gets \argmin_{r' \in \mathcal R_v} hop(v,r')$ \Comment{\textit{closest ruler}}
		\State $v$ joins cluster $C_r$ \Comment{\textit{establish clustering}}
		\For {$4\mu\lceil \log n \rceil$ rounds} \Comment{\textit{$v$ learns all members of $C_r$}}
		\State $v$ forwards information on $C_w$ with $w \in \mathcal{R}$ via local edges
		\EndFor
		\State $q \gets \min(2\mu/|C_r|, 1)$ 
		\For {each $w \in W \cap C_r$}
			\State $v$ joins $H_w$ with probability $q$ 
		\EndFor
	\end{algorithmic}
\end{algorithm}
\begin{lem}
	\label{lem:helpers}
	Let $W \subseteq V$ be a random set obtained with sampling probability \smash{$p = \tfrac{1}{n^{\eps}}$} for constant $\eps \in (0,1)$. Let $k \in \mathbb N$. Running Algorithm \ref{alg:compute-helpers} with \smash{$\mu := \big\lfloor\!\min\big(\sqrt{k}, 1/p\big) \big\rfloor$} establishes a family of helper sets $\{H_w \subseteq V \mid w \in W \}$ with respect to Definition \ref{def:helpers} w.h.p. %Specifically, each node from $V$ is member of at most $\bigO(\log^2 n)$ sets $H_w$. 
	Algorithm \ref{alg:compute-helpers} takes $\bigO(\mu \log n)$ rounds.
\end{lem}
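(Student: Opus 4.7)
The plan is to verify each condition of Definition \ref{def:helpers} together with the runtime bound. By Lemma \ref{lem:ruling_set} the ruling set is built in $O(\mu\log n)$ rounds, and the two flooding phases (diffusing information about $\mathcal R, W$, then about cluster memberships) each propagate information over at most $O(\mu\log n)$ hops in \LOCAL, so the total runtime is $O(\mu\log n)$. Before analyzing the helper sets I would record two structural consequences of the ruling-set parameters $(2\mu+1,\beta)$ with $\beta = 2\mu\lceil\log n\rceil$. First, because any two rulers are at hop-distance at least $2\mu+1$, every node within $\mu$ hops of a ruler $r$ joins exactly $C_r$; in a connected graph this forces $|C_r|\geq \mu+1$. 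Second, every node is at hop-distance at most $\beta$ from its ruler, so the hop-diameter of any cluster is at most $2\beta\in\tilO(\mu)$, which immediately yields property~(2) since $w$ and every $x\in H_w$ lie in the same cluster.

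For property~(1), fix a cluster $C_r$ and some $w\in W\cap C_r$. If $|C_r|\leq 2\mu$ the algorithm sets $q=1$, so every $v\in C_r$ joins $H_w$, giving $|H_w|=|C_r|\geq \mu$ deterministically. Otherwise $q=2\mu/|C_r|$ and $\E[|H_w|]=2\mu$; since the decisions of different $v\in C_r$ use independent coin flips, a standard Chernoff bound yields $|H_w|\geq \mu$ with probability at least $1-n^{-c}$ for any desired constant $c$. A union bound over the at most $n$ helper sets then gives property~(1) globally w.h.p.

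For property~(3), fix $v$ in some cluster $C_r$. For each $u\in C_r$ the event ``$u\in W$ \emph{and} $v$ joins $H_u$'' is a Bernoulli with probability at most $pq$, and these events are independent across $u$ because memberships of distinct nodes in $W$ are independent and the coin flips of $v$ for different helper sets are independent. Hence the number of helper sets that $v$ joins is dominated by a sum of $|C_r|$ independent Bernoullis whose mean is at most $pq|C_r|\leq 2\mu p\leq 2$, using the hypothesis $\mu\leq 1/p$. A Chernoff bound gives $\tilO(1)$ w.h.p., and a union bound over the $n$ nodes establishes property~(3). The only genuine subtlety is to combine the two independent sources of randomness (the sampling of $W$ and the helper coin flips) correctly so that the Bernoullis appearing in (1) and (3) really are independent; once this is formalized, everything else is standard concentration and accounting.
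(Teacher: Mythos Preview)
Your proof is correct and follows essentially the same approach as the paper: runtime from the ruling-set computation plus the two flooding phases, property~(2) from the cluster diameter bound, property~(1) by the case distinction on $q$ together with a Chernoff bound, and property~(3) via a Bernoulli sum with mean at most $2\mu p\leq 2$. The one item you omit that the paper includes is a separate Chernoff argument showing $|W|\in\Theta(pn)$ w.h.p., which is needed to confirm that the fixed choice $\mu=\lfloor\min(\sqrt{k},1/p)\rfloor$ actually satisfies $\mu\in\Theta\big(\min(\sqrt{k},n/|W|)\big)$ as Definition~\ref{def:helpers} requires.
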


\begin{proof}
	The round complexity is the sum of the runtime for computing of the ruling set and the local explorations (by local exploration to some depth $d$ we mean that each node forwards all new learned information for $d$ rounds). Each of these steps takes $\bigO(\mu \log n)$ (c.f.\ Lemma \ref{lem:ruling_set}). Next we focus on the correctness.
	
	First we show that $W \in \Theta(pn)$ w.h.p. From this it follows that \smash{$\mu \in \Theta \big(\!\min\big(\sqrt{k}, n/|W|\big)\big)$} (as we require in Definition \ref{def:helpers}). Second we prove that $H_w \geq \mu$ w.h.p., which is condition (1) of Definition \ref{def:helpers}. Third, we show that each node is member of at most $\tilO(1)$ helper sets, corresponding to condition (3). Condition (2) can be concluded immediately from the algorithm, as a node $v \in C_r$ will only join $H_w$ of some $w$ in within the same cluster $C_r$, which has diameter at most $4\mu \lceil\log n\rceil$ hops by construction. %only nodes within $2\mu \lceil\log n\rceil$ hops from a node $w \in W$ will join $H_w$.
	
	The random number $|W| \sim Bin(n,p)$ is a sum of $n$ independent Bernoulli variables and the expected value is $\E\big(|W|\big) = pn$. In the following we apply standard Chernoff bounds (c.f.\ Lemma \ref{lem:chernoffbound}\footnote{For the purpose of self-containedness we include the Chernoff- and union bounds we are using in Appendix \ref{apx:generalnotations}}) and use that by asymptotic growth properties, for any constant $c>0$ there is a $n_0 \in \mathbb{N}$, s.t.\ $n^{\delta}> c \log n$ for any const.\ $\delta \in (0,1)$ and all $n \geq n_0$. Then the tail distributions can be bounded as follows
	\begin{align*}
		&\Pr\big(|W| \!<\! \tfrac{pn}{2} \big) \leq e^{-pn/8} = e^{-n^{1-\eps}/8} \,\stackrel{n \geq n_0}{\leq}\, e^{-c \log n} = \tfrac{1}{n^c},\\
		&\Pr\big(|W| \!>\! 2pn \big) \leq e^{-pn/3} = e^{-n^{1-\eps}/3} \,\stackrel{n \geq n_0}{\leq}\, e^{-c \log n} = \tfrac{1}{n^c}.
	\end{align*}
	
	As outlined before, we have that $|C_r| \geq \mu$ due to the minimum distance of $2\mu\!+\!1$ between any two rulers from $\mathcal{R}$ and since nodes join their closest ruler and $G$ is connected. Let $w \in W \cap C_r$. If $q = 1$ in the algorithm, all $v \in C_r$ join $H_w$, hence $|H_w| \geq \mu$. Otherwise we have $q = 2\mu/|C_r|$. Now $|H_w| \sim Bin(|C_r|,q)$ is a sum of independent random variables over $\{0,1\}$ and $\E(|H_w|) = 2\mu \leq 2/p$. Again we can bound the tail distribution of $|H_w|$ (using Lemma \ref{lem:chernoffbound}):	
	\begin{align*}
	\Pr\big(|H_w| \!<\! \E(|H_w|)/2 \big) & \leq \Pr\big(|H_w| \!<\! \tfrac{1}{p} \big) \leq e^{-1/8p}\\ 
	& = e^{-n^{\eps}/8} \,\stackrel{n \geq n_0}{\leq}\, e^{-c \log n} = \tfrac{1}{n^c}
	\end{align*}
	and therefore we have $|H_w| \geq \E(|H_w|)/2 = \mu$ w.h.p. 
	
	It remains to be shown that no node joins too many helper sets. 	
	Let $X_v$ be the random number of helper sets $H_w$ for $w \in W \cap C_r$ that $v$ joins. Then $X_v \sim Bin\big(|W \cap C_r|, q\big)$ is again a sum of independent Bernoulli variables. The expectation is $\E(X_v) = pq \,|C_r| = 2p\mu \leq 2$. We bound the probability that $X_v$ is at most an additive term $3c\log n$ higher than its expectation (with Lemma \ref{lem:chernoffbound}).	
	\begin{align*}
	\Pr\big(X_v \!>\! \E(X_v)\!+\!3c\log n \big) & = \Pr\big(X_v \!>\! \big(1 \!+\! \tfrac{3}{2}c\log n\big)\E(X_v) \big) \\
	& \leq e^{- c \log n} = \tfrac{1}{n^c}.
	\end{align*}
	Finally, the total number of events we showed to occur w.h.p. above is polynomial in $n$. Using Lemma \ref{lem:unionbound} in Appendix \ref{apx:generalnotations} (``union bound'') we can guarantee that \textit{all} of the above events occur w.h.p.
\end{proof}

\end{longversion}

\subsection{Token Routing Protocol}

With the sub-protocol to compute helper sets in place, the main challenge is to pair up helpers $H_s$ of senders $s \in S$ to the according helpers $H'_r$ of receivers $r \in R$. This means that for each token that has to go from $s \in S$ to some $r \in R$ we must pair up two helpers from $H_s$ and $H'_r$ that do the actual transmission. The challenge is to ensure that the pairing has a balanced degree on each side, i.e., the number of tokens each helper has to send or receive is roughly balanced.
The difficulty comes from the fact that we can not rely on global information about helper sets, since broadcasting this would incur too much contention on the global network. Instead, we route messages via intermediate nodes determined through a randomly seeded hash function. 

%This avoids comprehensive global knowledge about helper sets (broadcasting the set of sources $S$ suffices) and coincidentally also balances the workload per helper sufficiently well.

%Using such a hash function from a suitable universal family ensures on one hand that the source and target of a given token connect to the same intermediate node. On the other hand, it enforces that no node is intermediate node of too many helpers. 

%Finally we schedule the sending of each token from sender to receiver via their dedicated intermediates, such that every involved node receives at most $\bigO(\log n)$ tokens per round via the global network.

In order to keep our algorithms reasonably concise, we make a few assumptions. First, we require that each token has a \textit{label} of the form $(s,r,i)$ where $s$ is the sender $r$ is the receiver and $i \in [x]$ enumerates all tokens from $s$ to $r$, where $x \leq k_S$ is the total number of tokens from $s$ to $r$. Second, each receiver initially knows the set of \textit{labels} of tokens it must receive. Since the receivers are required to know how many tokens they must receive from each sender, each receiver can compute the set of labels from each sender.

%We assume that each receiver $r \in R$ knows how many tokens it must receive from each sender $s \in S$. This assumption is w.l.o.g., since this information can be exchanged between senders and receiver via the global network within the time frame that our algorithm requires anyway (using, e.g., the token dissemination routine of \cite{Augustine2020}).

%Specifically, we require that each token has a label of the form $(s,r,i)$ where $s$ is the sender $r$ is the receiver and $i \in [x]$ enumerates all tokens from $s$ to $r$, i.e., $x$ is the total number of tokens from $s$ to $r$. Each receiver initially knows the set of \textit{labels} of tokens it must receive (w.l.o.g., since these labels can be computed with the available information). 

The last assumption is that each node knows a randomly seeded hash function $h: V \!\times\! V \!\times\! \mathbb{N} \to V$ (from a $k$-wise independent family\lng{, c.f. Definition \ref{def:hashfunctions}}), which maps each token label $(s,r,i)$ to an intermediate node with ID $h(s,r,i)$. We will show that a random seed of size $\tilO(1)$ bits suffices for our purposes, which can be broadcast in $\tilO(1)$ rounds (c.f., Lemma \ref{lem:tokenDissemination}). \lng{Algorithm \ref{alg:token-routing} gives an overview of the involved subroutines.} 

\begin{shortversion}
	In the following we give a consolidated description of the token routing protocol. We outline the correctness and running times along the way to illustrate the proof of Theorem \ref{thm:token-routing}. Comprehensive pseudo-codes and detailed proofs are provided in the full version of this paper \cite{KS20}.
	
	\paragraph*{Routing preparation} Assume that senders $S \subseteq V$ and receivers $R \subseteq V$ were sampled  with probabilities $p_S,p_R$. Then we define \smash{$\mu_S := \lfloor\min\big(\sqrt{k_S}, 1/p_S\big) \rfloor$} and \smash{$\mu_R := \lfloor\min\big(\sqrt{k_R}, 1/p_R\big) \rfloor$} and compute helper sets $H_s, s\in S$ with parameter $\mu_S$ and $H'_r, r\in R$ with parameter $\mu_R$, as outlined in the previous subsection.
	
	Next, we conduct an exploration in the local network for $2(\mu_S\!+\!\mu_R)\lceil \log n \rceil$ rounds (the maximum diameter of any helper set) so that each sender or receiver knows their respective helpers. Based on the knowledge of their helper sets, each sender and receiver assigns their tokens (and token labels) to their helpers in a balanced manner, that is at most \smash{$\lceil \frac{k_S}{\mu_S} \rceil$} tokens or \smash{$\lceil \frac{k_R}{\mu_R} \rceil$} token-labels per helper of a sender or receiver, respectively. Then the token(-labels) are tagged with the according helper and flooded to depth $2(\mu_S\!+\!\mu_R)\lceil \log n \rceil$ in the local network, where each helper simply picks the token(-labels) with its tag on it.
	
	\paragraph*{Routing Scheme} Now tokens are send from helpers of senders to helpers of receivers over the global network. This happens in an indirect manner. First, each sender-helper sends each token $t_{(s,r,i)}$ to the intermediate node with ID $h(r,s,i)$ via the global network. It takes \smash{$\tilO\big(\frac{k_S}{\mu_S})$} rounds until all tokens are at intermediate nodes. Subsequently, each receiver-helper $u$ sends its labels of the form $(s,r,i)$ and their own ID$(u)$ to the intermediate node with ID $h(s,r,i)$, which acts as a request for the token with this label.
	
	The intermediate nodes forward the according tokens to the sender-helpers that requested them. This takes another \smash{$\tilO\big(\frac{k_R}{\mu_R}\big)$} rounds. By using  a hashfunction $h$ from a $k$-wise independent universal family, we can ensure that no node is intermediate of too many tokens, and also that in each round, each intermediate receives at most $\bigO(\log n)$ tokens or requests for tokens (specifics are given in the full paper \cite{KS20}.)
	
	Finally, all tokens that are destined to some receiver $r \in R$ are collectively hold by its helpers $H'_r$. With another $2\mu_R\lceil \log n \rceil$ rounds of flooding (i.e., to the maximum diameter of $H'_r$), all tokens are forwarded from $H'_r$ to $r$.
	
	\paragraph*{Summary} We outline the proof of Theorem \ref{thm:token-routing}, i.e., that token routing can be solved in \smash{$\tilO\big(K/n +\, \sqrt{k_S} +\, \sqrt{k_R}\big)$} rounds.
	The correctness boils down to showing that during the algorithm no messages that are send via the global network are dropped by exceeding the receive bound of $\tilO(\log n)$ messages. This is shown in the full version of this article \cite{KS20}.
	
	It remains to sum up the running times. The first task is the computation of helper sets which takes $\tilO(\mu_s \!+ \!\mu_r)$ rounds as outlined in the previous subsection. Assigning tokens and labels to helpers via flooding takes $\tilO(\mu_s \!+ \!\mu_r)$ rounds as well. The final routing process among helpers via the global network and subsequent collection of tokens via the local network takes \smash{$\tilO\big(\frac{k_S}{\mu_S} \!+\! \frac{k_R}{\mu_R} \!+\! \mu_R\big)$} rounds. The total runtime is \smash{$\tilO\big(\frac{k_S}{\mu_S} \!+\! \frac{k_R}{\mu_R} \!+\! \mu_S \!+\! \mu_R \big)$}. 
	
	Recall that we chose $\mu_W = \lfloor\min\big(\sqrt{k_W}, 1/p_W\big) \rfloor$ for $W = {S,R}$. We have that $|S| = \Theta(n\cdot p_S)$ and $|R| = \Theta(n \cdot p_R)$ w.h.p. Thus $\mu_S \in \Theta\big( \min\big(\sqrt{k_S}, n/|S|\big) \big)$ and $\mu_R \in \Theta\big( \min\big(\sqrt{k_R}, n/|R|\big) \big)$. Therefore the runtime can be bounded by \smash{$\tilO\big(\frac{|S|\cdot k_S}{n} + \frac{|R|\cdot k_R}{n} +\, \sqrt{k_S} +\, \sqrt{k_R} \big)$}.
\end{shortversion}

\begin{longversion}

\begin{algorithm}[H]
	\caption{\texttt{Token-Routing}$(p_S,p_R)$}
	\label{alg:token-routing}
	\begin{algorithmic}
		\State {$\mu_S \gets \lfloor\min\big(\sqrt{k_S}, 1/p_S\big) \rfloor$}; \smash{$\mu_R \gets \lfloor\min\big(\sqrt{k_R}, 1/p_R\big) \rfloor$} %\Comment{\textit{compute parameters that define helper sets}}
		\State \texttt{Compute-Helpers}($\mu_S$); \texttt{Compute-Helpers}($\mu_R$) %\Comment{\textit{compute helper sets with the above parameters}}
		\State \texttt{Routing-Preparation}($\mu_S, \mu_R$) \Comment{\textit{assign tasks to helpers}}
		\State \texttt{Routing-Scheme} \Comment{\textit{forward tokens via intermediate nodes}}
	\end{algorithmic}
\end{algorithm}

The following subroutine \texttt{Routing-Preparation} summarizes the necessary steps to prepare the final protocol \texttt{Routing-Scheme}. Each sender and receiver determines its respective helper set. Then every sender balances its tokens it must send and every receiver balances the labels of tokens it must receive among their helpers in time $\tilO(\mu_S \!+\! \mu_R)$.

\begin{algorithm}[H]
	\caption{\texttt{Routing-Preparation}$(\mu_S,\mu_R)$}
	\label{alg:routing-preparation}
	\begin{algorithmic}
		\For {$2(\mu_S\!+\!\mu_R)\lceil \log n \rceil$ rounds} \Comment{\textit{detect helpers via flooding}}
		\State $v$ forwards information on $H_w,H'_w$ via incident edges
		\EndFor
		\If {$v \in S$ (or $v \in R$)}
		\State compute \textit{balanced} assignment of tasks to helpers $H_v$ ($H'_v$)  %\Comment{\textit{each $w \in H_v$ gets $\, \leq \lceil \frac{k_S}{\mu_S} \rceil$ tokens}}
		%\State distribute tasks to $H_v$ (or $H'_v$) according to assignment %\Comment{\textit{$\tilO(\mu_S\!+\!\mu_R)$ rounds via local network}}
		\EndIf
		\For {$2(\mu_S\!+\!\mu_R)\lceil \log n \rceil$ rounds} \Comment{\textit{flood token(-labels) to helpers}}
		\State $v$ picks token(-labels) assigned to it
		\State $v$ forwards all other token(-labels) to local neighbors
		\EndFor
	\end{algorithmic}
\end{algorithm}

\begin{fact}
	\label{fct:routing-preparation}
	Assume that helper sets $H_s, s \in S$, $H'_r, r \in R$ have already been established for parameters $\mu_S,\mu_R$. Then by executing Algorithm \ref{alg:routing-preparation}, all tokens of any sender $s \in S$ and all token-labels of any receiver $r \in R$ are distributed among their respective helpers via the local network, such that each helper $w \in H_s$ has at most \smash{$\lceil \frac{k_S}{\mu_S} \rceil$} tokens from $s$ and each $w \in H'_r$ has at most \smash{$\lceil \frac{k_R}{\mu_R} \rceil$} token-labels from $r$. The assignment of token(-labels) to helpers can be computed by each sender and receiver after learning its helper set (in the first loop). %Learning the helper sets and the distribution of tokens or token labels according to the precomputed assignment 
	Algorithm \ref{alg:routing-preparation} takes $\bigO\big((\mu_S\!+\!\mu_R) \log n \big)$ rounds, due to Lemma \ref{lem:helpers}.\footnote{We will only briefly summarize claims that we believe to be evident or easily verifiable by the reader, and refer to those claims as ``Fact''.}
\end{fact}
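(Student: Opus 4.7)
The plan is to verify three things about Algorithm \ref{alg:routing-preparation}: (a) the first loop lets every sender and every receiver learn its full helper set, (b) a balanced assignment to helpers respecting the stated bounds exists and can be computed with no further communication, and (c) the second loop delivers each tagged token or label to its assigned helper. All three follow from the diameter bound on helper sets guaranteed by Lemma \ref{lem:helpers}, together with the fact that we operate in the \LOCAL model and hence place no bandwidth restriction on individual edges.

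First I would appeal to the helper-set construction: each $H_s$ (resp.\ $H'_r$) is contained in a cluster of hop-diameter at most $2\mu_S \lceil \log n \rceil$ (resp.\ $2\mu_R \lceil \log n \rceil$), by condition (2) of Definition \ref{def:helpers} and the choice of ruling set. In particular, $s$ is within $2\mu_S\lceil\log n\rceil$ hops of every $w \in H_s$, and symmetrically for $r$. Hence flooding membership information for $2(\mu_S+\mu_R)\lceil\log n\rceil$ rounds, as the first loop does, is enough for every $s$ (and every $r$) to collect the identifiers of all its helpers. No congestion argument is needed because local edges carry arbitrary-size messages.

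Next, the balanced assignment is an elementary pigeonhole: by condition (1) of Definition \ref{def:helpers}, $|H_s| \ge \mu_S$, so distributing the at most $k_S$ tokens round-robin across $H_s$ gives at most $\lceil k_S/\mu_S\rceil$ tokens per helper; the analogous argument with $k_R, \mu_R, H'_r$ handles the receiver side. Since each sender/receiver now knows its helper set from the first loop, this assignment is computed locally with no further communication, and each token or label is tagged with the ID of its designated helper.

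Finally, the second loop floods these tagged messages for another $2(\mu_S+\mu_R)\lceil\log n\rceil$ rounds, which is again at least the distance between any sender/receiver and each of its helpers; each node simply picks the tagged messages addressed to it and forwards the rest. Summing the two flooding phases gives the claimed $O\big((\mu_S+\mu_R)\log n\big)$ round complexity. There is really no hard step here; the only point worth flagging is the reliance on the \LOCAL model's unrestricted bandwidth, so that all membership information, all tagged tokens, and all tagged labels can be flooded in parallel without invoking any edge-congestion argument.
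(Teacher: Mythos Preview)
Your proposal is correct and matches the paper's approach; in fact the paper does not give a separate proof at all, since this is labeled a ``Fact'' and the justification is folded into the statement itself (the footnote explains that Facts are claims the authors deem easily verifiable). Your three-step verification---learn helpers via flooding, assign by pigeonhole using $|H_s|\ge\mu_S$, deliver via a second flood---is exactly the reasoning the authors leave implicit. One tiny quibble: the cluster hop-diameter from the ruling-set construction is $4\mu\lceil\log n\rceil$ rather than $2\mu\lceil\log n\rceil$ (each node is within $2\mu\lceil\log n\rceil$ of its ruler, so pairwise distances can double), but this does not affect the $O\big((\mu_S+\mu_R)\log n\big)$ bound.
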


The final sub-procedure \texttt{Routing-Scheme} uses the hash function $h: V \times V \times \mathbb{N} \to V$ to route each token $t_{(s,r,i)}$ via the node $h(s,r,i)$. Essentially, each helper $w \in H_s$ of a sender $s \in S$ sends its tokens to the respective intermediate nodes given by the hash-value of its label, $O(\log n)$ tokens at a time until all tokens are at intermediate nodes. Subsequently, we do the same in reverse on the receiver side. Each helper $w \in H'_r$ of a receiver $r \in R$, sends requests for its assigned token labels ($\bigO(\log n)$ per round)  to the respective intermediate nodes. The intermediate nodes forward all requested tokens in the subsequent round.%We will use the standard methods for families of $k$-wise independent hash functions to show that w.h.p., no node will receive more than $\bigO(\log n)$ tokens per round.

\begin{algorithm}[h]
	\caption{\texttt{Routing-Scheme} \Comment{\textit{$\!h\!:\! V \!\times\! V \!\times\! \mathbb{N} \!\to\! V$ globally known}}}
	\label{alg:routing-scheme}
	\begin{algorithmic}
		\State $T \gets$ tokens $v$ must send	\Comment{\textit{tasks of $v$ as sender-helper}}
		\While{$T \neq \emptyset$}
			\State \texttt{out} $\gets $ pick $\bigO(\log n)$ tokens from $T$; 		
			\State send each $t_{(s,r,i)} \in$ \texttt{out} to node $h(s,r,i)$ via global network
			\State $T \gets T \setminus$\texttt{out}
		\EndWhile	
		\State $I \gets$ tokens $v$ received as intermediate node %\Comment{\textit{tokens $v$ has to redistribute as intermediate node}}
		\State $T' \!\gets$ \textit{labels} of tokens $v$ must receive \Comment{\textit{tasks as receiver-helper}}
		\While{$T'\neq \emptyset$ or $I \neq \emptyset$}
			\State \texttt{req} $\gets $ pick $\bigO(\log n)$ token \textit{labels} from $T'$;
			\State for each ${(s,r,i)} \in \texttt{req}$ send $(s,r,i)$ and $ID(v)$ to node $h(s,r,i)$ 
			\State $T' \gets T' \setminus$\texttt{req}
			\State \texttt{rcv} $\gets $ all requests received
			\State for each $(s,r,i) \in$ \texttt{rcv} received from some $u$, send $t_{(s,r,i)}$ to $u$
			\State $I \gets I \setminus \{t_{(s,r,i)} \mid {(s,r,i)} \in$ \texttt{rcv}$\}$
		\EndWhile	
		\For {$2\mu_R\lceil \log n \rceil$ rounds} \Comment{\textit{collect tokens from helpers}}	
			\State $v$ forwards tokens to its neighbors via local edges
		\EndFor		
	\end{algorithmic}
\end{algorithm}

\begin{lem} 
	\label{lem:routing-scheme}
	During the execution of Algorithm \ref{alg:routing-scheme} no node receives more than $O(\log n)$ messages per round. After Algorithm \ref{alg:routing-scheme} has terminated, every receiver knows all tokens that it was target of. A random hash function $h$ that guarantees the above, requires a random seed of $\bigO(\log^2 n)$ bits. Algorithm \ref{alg:routing-scheme} takes \smash{$\tilO\big(\frac{k_S}{\mu_S} + \frac{k_R}{\mu_R} + \mu_R\big)$} rounds. 
\end{lem}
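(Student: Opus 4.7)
\textbf{Proof plan for Lemma \ref{lem:routing-scheme}.} The plan is to analyze the three claims of the lemma in turn: the per-round receive bound on the global network, the correctness of delivery, and the runtime (plus the seed size that makes all of this possible).

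\emph{Load on intermediates.} The key step is to bound, for every round and every node $v \in V$, the number of messages $v$ receives in the global network. There are two phases that send to intermediates: (i) sender-helpers shipping tokens to $h(s,r,i)$, and (ii) receiver-helpers shipping requests to $h(s,r,i)$. Focus on phase (i); (ii) is symmetric. In any single round, every sender-helper sends $O(\log n)$ tokens, so the total number of tokens in transit is $O(|S|\, \mu_S \log n)$. Since $\mu_S \in \tilO(n/|S|)$ by Definition~\ref{def:helpers}, this total is $\tilO(n)$, so the \emph{expected} load at any fixed $v$ is $\tilO(1)$. Using a hash function $h$ drawn from a $\Theta(\log n)$-wise independent family, I would invoke the standard tail bound for sums of $k$-wise independent Bernoulli variables (e.g., Bellare--Rompel) to conclude that the load at $v$ is $O(\log n)$ with probability $1-1/\poly(n)$; a union bound over the $\poly(n)$ (node, round) pairs then gives the global bound. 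The hash function sending requests back to receiver-helpers is analyzed identically, and the tokens returned by intermediates in the last phase of the while-loop satisfy the bound because an intermediate only sends out one token per request received.

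\emph{Correctness.} I would argue that every sender-helper eventually places each of its tokens at the intermediate prescribed by $h$, because its outer while loop simply dequeues $O(\log n)$ tokens per round and the global-network receive bound established above ensures no message is dropped. On the receiver side, every receiver-helper $u \in H'_r$ knows the labels $(s,r,i)$ of the tokens it was assigned by Fact~\ref{fct:routing-preparation}, so it can compute $h(s,r,i)$ and issue a request; the matching token is guaranteed to be present at that intermediate because the sending phase has completed. The intermediate then returns the token to $u$. After both while loops terminate, the helpers in $H'_r$ collectively hold every token destined for $r$, and the final $2\mu_R\lceil\log n\rceil$ rounds of local flooding deliver all of these to $r$ because each $H'_r$ has diameter $\tilO(\mu_R)$ in $G$ by property~(2) of Definition~\ref{def:helpers} and the local mode imposes no bandwidth restriction.

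\emph{Runtime.} I would split the time into three contributions. The sender-helper loop runs until each helper has emptied its queue of at most $\lceil k_S/\mu_S\rceil$ tokens at rate $O(\log n)$ per round, contributing $\tilO(k_S/\mu_S)$ rounds. The receiver-side while loop both issues requests (at most $\lceil k_R/\mu_R\rceil$ labels per helper, $O(\log n)$ per round) and collects the responses; both directions finish in $\tilO(k_R/\mu_R)$ rounds because intermediates send out one token per incoming request. The final flooding contributes $\tilO(\mu_R)$ rounds, giving the claimed total of $\tilO(k_S/\mu_S + k_R/\mu_R + \mu_R)$.

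\emph{Seed size and the main obstacle.} The main technical hurdle is choosing the independence parameter of the hash family tightly enough to make the concentration argument go through while keeping the seed short. A $\Theta(\log n)$-wise independent family from $V\times V\times\mathbb{N}$ to $V$ can be represented with $O(\log^2 n)$ bits (e.g., via polynomial hashing over a field of size $\poly(n)$), which is the bound claimed. This seed can be sampled by a single node and broadcast to all others in $\tilO(1)$ rounds using Lemma~\ref{lem:tokenDissemination}, so it does not affect the runtime. The non-routine step is verifying that $\Theta(\log n)$-wise independence is sufficient to drive the Bellare--Rompel tail to $1/\poly(n)$ for load bounds of $O(\log n)$; everything else reduces to bookkeeping on the queue lengths and the helper-set properties from Definition~\ref{def:helpers}.
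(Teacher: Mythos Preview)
Your proposal is correct and follows essentially the same approach as the paper's proof. The paper likewise reduces correctness to the per-round receive bound, proves that bound via a $\Theta(\log n)$-wise independent Chernoff inequality (citing Schmidt--Siegel--Srinivasan rather than Bellare--Rompel, but the two are interchangeable here), derives the $O(\log^2 n)$ seed size from the standard polynomial construction, and accounts for the runtime exactly as you do by summing the sender-helper loop, the receiver-helper loop, and the final local flooding. The only cosmetic difference is that the paper bounds the expected per-node load directly by observing that each node sends at most $O(\log n)$ messages per round (so $\mathbb{E}[X_{v,r}] \le \sigma \in \Theta(\log n)$), whereas you first bound the total traffic by $O(|S|\,\mu_S\log n)\subseteq \tilO(n)$ and divide by $n$; both routes yield the same $\tilO(1)$ expectation and the same tail bound.
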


\begin{proof}
	We start with the runtime. By Fact \ref{fct:routing-preparation}, each node has at most \smash{$\lceil \frac{k_S}{\mu_S} \rceil$} tokens to send and at most \smash{$\lceil \frac{k_R}{\mu_R} \rceil$} tokens to receive. Every node receives $\bigO(\log n)$ messages per round (as we argue below), thus every node can answer all requests for tokens they receive in their role as an intermediate node directly in the subsequent round. Therefore the process of funneling tokens from sender-helpers to receiver-helpers via intermediates takes \smash{$\tilO\big(\frac{k_S}{\mu_S} \!+\! \frac{k_R}{\mu_R}\big)$} rounds. It takes another $\tilO(\mu_R)$ rounds until receivers have collected all tokens from their helpers in the final loop.
	
	The algorithm is correct if each receiver eventually obtains all the tokens it must receive. That is no message must be dropped during the runtime by exceeding the receive bound of $O(\log n)$ messages per node and round. In Lemma \ref{lem:receiveBound} we show that this is guaranteed by selecting targets for tokens (or token requests) is a uniform and $k$-wise independent random process, for some $k \in \Theta(\log n)$.
	
	Therefore, we source our hash function $h: V \times V \times \mathbb{N} \to V$ from a $k$-wise independent family $\mathcal H$ as defined in Definition \ref{def:hashfunctions}. That such a family exists and that selecting a hash function from it requires $\bigO(\log^2 n)$ random bits is implied by Lemma \ref{lem:hashfunctions}. This guarantees that selecting targets by applying the hash function $h(s,r,i)$ on token labels $(s,r,i)$ is $k$-wise independent if the labels are distinct (which they are).
\end{proof}

We prove Theorem \ref{thm:token-routing} by showing that Algorithm \ref{alg:token-routing} solves token routing in $\tilO\big(K/n +\, \sqrt{k_S} +\, \sqrt{k_R}\big)$ rounds.

\begin{proof}[Proof of Theorem \ref{thm:token-routing}]
	The correctness of Algorithm \ref{alg:token-routing} was shown in Lemma \ref{lem:routing-scheme} and what remains is to sum up the runtime of the sub-procedures. 
	The first task is the computation of helper sets in Algorithm \ref{alg:compute-helpers} which takes $\tilO(\mu_s \!+ \!\mu_r)$ rounds due to Lemma \ref{lem:helpers}. Assigning tokens and labels to helpers in Algorithm \ref{alg:routing-preparation} also takes $\tilO(\mu_s \!+ \!\mu_r)$ rounds, as outlined in Fact \ref{fct:routing-preparation}. The final routing process among helpers via the global network and subsequent collection of tokens via the local network can be done in \smash{$\tilO\big(\frac{k_S}{\mu_S} \!+\! \frac{k_R}{\mu_R} \!+\! \mu_R\big)$} rounds, as shown by Lemma \ref{lem:routing-scheme}. 
	
	The total runtime is \smash{$\tilO\big(\frac{k_S}{\mu_S} \!+\! \frac{k_R}{\mu_R} \!+\! \mu_S \!+\! \mu_R \big)$}. In Lemma \ref{lem:helpers} we showed that for the parameters $\mu_S,\mu_R$ computed in Algorithm \ref{alg:token-routing} we have $\mu_S \in \Theta \big(\!\min\big(\sqrt{k_S}, n/|S|\big)\big)$ and $\mu_R \in \Theta \big(\!\min\big(\sqrt{k_R}, n/|R|\big)\big)$. Therefore, the total runtime is \smash{$\tilO\big(\frac{|S|\cdot k_S}{n} + \frac{|R|\cdot k_R}{n} +\, \sqrt{k_S} +\, \sqrt{k_R} \big)$}.
\end{proof}	

\end{longversion}

	\section{All Pairs Shortest Paths}
\label{sec:apsp}

Our first application of the token routing protocol is the APSP problem. Specifically, we show that APSP can be solved \textit{exactly} in the \HYBRID model on weighted graphs in \smash{$\tilO\big(\sqrt{n}\big)$} rounds. This is an improvement over the $\tilO({n^{2/3}})$ complexity algorithm of \cite{Augustine2020}. In fact, \cite{Augustine2020} gives a \smash{$\tilOm\big(\sqrt{n}\big)$} lower bound for APSP algorithms in the \HYBRID model, even for $\alpha$-approximations for some $\alpha \in \tilT\big(\sqrt{n}\big)$. Consequently, our algorithm settles the APSP problem in the \HYBRID model in terms of runtime (up to $\polylog n$ factors) and approximation factors (i.e.\ admitting approximation factors up to $\tilO\big(\sqrt{n}\big)$ does not allow faster algorithms).

The algorithm presented in this paper reuses most of the routines given by \cite{Augustine2020} with a major rework of just one subroutine that constitutes a runtime bottleneck in the original algorithm. This section shows that the bottleneck can be resolved using the token routing protocol of Section \ref{sec:token-routing}. In order to keep this section reasonably concise, we refer to \cite{Augustine2020} for full proofs and detailed pseudo-codes of the base algorithm. In order to preserve some measure of self-containedness, we summarize the exact APSP algorithm of \cite{Augustine2020} in the following few paragraphs (supplemented by some key Lemmas given in Appendix \ref{apx:skeleton-graphs}), therefore knowledge of \cite{Augustine2020} is not strictly required to understand this section.

As in \cite{Augustine2020}, we leverage the concept of a skeleton graph $\mathcal S = (V_\calS,E_\calS)$ (first introduced by \cite{Ullman1991} in the context of the PRAM model). The idea is to sample a subset $V_\calS \subseteq V$ randomly and compute shortest paths among sampled nodes $V_\calS$ that are in relatively close hop distance of each other. These shortest paths form the edge set $E_\calS$ of $\calS$. 
A key insight is that if we sample each node into $V_\calS$ with probability $\frac{1}{x}$ for some $x \in [n]$, then there will be a sampled node on a shortest path $P_{u,v}$ between any pair $u,v \in V$ roughly every $\tilO(x)$ hops (c.f., Lemma \ref{lem:long-paths-skeleton}). 

If we set the hop-length of paths in $E_\calS$ to some value $h \in \tilT(x)$ (the correct value of $h$ is given in Lemma \ref{lem:long-paths-skeleton}), then this implies that for any pair $u,v \in V$ with sufficient hop distance $hop(u,v) \geq h$ a shortest $u$-$v$-path $P_{u,v}$ must run over the skeleton (c.f., Lemma \ref{lem:long-shortest-paths-skeleton}). Then the length $w(P_{u,v})$ of $P_{u,v}$ can be computed by adding the length of the sub-paths $w(P_{u,v}) = w(P_{u,s_1}) + w(P_{s_1,s_2}) + w(P_{s_2,v})$, where $s_1$ and $s_2$ are the first and the last skeleton node on $P_{u,v}$, respectively.

The computation of the distances among skeleton nodes $s_1,s_2 \in V_\calS$ relies on the \textit{global} knowledge of the skeleton $\calS$. To achieve that, each skeleton node determines its neighbors in $\calS$ using the local network. This can be done in $\tilO(x)$ rounds as edges $E_\calS$ correspond to paths in $G$ with at most $h \in \tilO(x)$ hops. Since $\calS$ will have much fewer nodes than $G$ w.h.p.\ (namely $\tilO\big(\frac{n}{x}\big)$) the size of $E_\calS$ is manageable and we can make $E_\calS$ globally known using the \textit{token dissemination protocol} of \cite{Augustine2020} (replicated in Lemma \ref{lem:tokenDissemination}). Specifically, the $\tilO\big((n/x)^2\big)$ edges of $E_\calS$ can be made public knowledge in $\tilO\big(\frac{n}{x}\big)$ rounds. 

Subsequently, the length of shortest paths $P_{s_1,s_2}$ for any $s_1,s_2 \in V_\calS$ can be computed locally by all nodes in the network.
Moreover, any node $u \in V$ can compute its distance to any $s_2 \in V_\calS$ by computing the minimum of $w(P_{u,s_1}) + w(P_{s_1,s_2})$, where $s_1 \in V_\calS$ are the skeleton nodes in $u$'s vicinity (that is within $h \in \tilO(x)$ hops). The latter can be determined fast in the local network.
Now that any node is able to compute its own distance to all skeleton nodes, the part that remains is to compute distances to \textit{any} node in $V$ that has hop distance at least $h$. This is arguably the hardest part. In order that $u \in V$ can learn its distance to some $v \in V$, it must learn over which skeleton node $s_2 \in V_\calS$ in $v$'s vicinity a shortest path $P_{u,v}$ runs, as well as the length of the corresponding sub-path $P_{s_2,v}$. 

In \cite{Augustine2020} this is done by another run of \textit{token dissemination} where each token corresponds to a distance label $d_h(s,v)$ for $(s,v) \in V_\calS \times V$. Having all these distance labels available at every node in the network makes computing APSP simple, as any node $u \in V$ can now compute $w(P_{u,s_1}) \!+ w(P_{s_1,s_2}) \!+ w(P_{s_2,v})$ for any $v \in V$ and any $s_1, s_2 \in V_\calS$ within $h$ hops of $u$ and $v$ respectively (the minimum over $s_1, s_2 \in V_\calS$ yields the true distance $d(u,v)$). However, this approach requires that \smash{$|V|\!\cdot\!|V_\calS| \in \tilT(\frac{n^2}{x}) \gg n$} distance labels are broadcast in the worst case, taking $\tilT\big(n/\,\sqrt{x}\big)$ rounds. This distorts the trade-off between local and global computation and pushes the runtime up to $\tilT(n^{2/3})$ (the total runtime including the last step amounts to $\tilO\big(x + n/\sqrt{x}\big)$ which is optimized for $x = n^{2/3}$).

Note that the above procedure actually broadcasts more information than necessary, which one can see from the fact that afterwards, \textit{every} node can compute the distance between \textit{every} pair $u,v \in V$ (as long as $u$ and $v$ have hop distance at least $h$). However, in the distributed version of the APSP problem, a given node is only required to learn its \textit{own} distance to the other nodes. This is where the {token routing} protocol from Section \ref{sec:token-routing} can help.

%\begin{thm}
%	There is an algorithm that solves APSP on weighted graphs in the \HYBRID model in $\tilO\big(\sqrt{n}\big)$ rounds, w.h.p.
%\end{thm}

\begin{proof}[Proof of Theorem \ref{thm:APSP}]
	Suppose we do the same steps as described above, except for the last, where distance labels $d_h(s,v)$ for $(s,v) \in V_\calS \times V$ are broadcast in the \textit{whole} network with the token dissemination protocol. Up to this point, every node $v \in V$ knows its distance $d(v,s)$ to every skeleton node $s \in V_\calS$ and the first $s'\in V_\calS$ on a shortest path $P_{v,s}$. We call $s'$ a \textit{connector} for $v$ to $s$ (note that $hop(v,s') \leq h$ and $s'=s$ is possible). For the proof of correctness and runtime up to this point we refer to our explanations above and to \cite{Augustine2020} for a detailed analysis. 	
	
	Now each node $v\in V$ creates a token $t_{(v,s)}$ with sender $v$ and receiver $s$ containing $\langle d_h(v,s'), ID(v), ID(s') \rangle$, i.e.\ the distance from $v$ to the connector $s'$ for the path from $v$ to $s$. This creates an instance of the token routing problem, where senders correspond to $V$ and the receivers correspond to $V_\calS$. Each sender has exactly one token for each receiver, that is $k_S = |V_\calS| \in \tilO\big(\frac{n}{x}\big)$. Conversely, each receiver expects $k_R = n$ tokens, precisely one from each node.
	
	Solving this instance of token routing gives each skeleton node $s \in V_\calS$ enough information to compute its distance to every node $v \in V$ by locally computing $d(s,s') + d_h(s',v)$ where $s' \in V_\calS$ is the connector given in the token $t_{(v,s)}$, which also contains $d_h(v,s')$.	
	In another $\tilO(x)$ rounds of local communication, each skeleton nodes can distribute distance labels $\langle d(s,v), ID(s), ID(v)\rangle$ to all nodes in their $\tilO(x)$ neighborhood via the unrestricted local network. Equipped with the information from these labels, each node $u \in V$ can compute its distance to each $v \in V$ by taking the minimum of $d_h(u,s) + d(s,v)$ for each $s \in V_\calS$ in the $h$-hop neighborhood of $u$.
	
	The steps of the original algorithm (excluding the last step) take $\tilO(x)$ rounds for the local communication up to hop-distance $h$ and $\tilO\big(\frac{n}{x}\big)$ for making the skeleton public knowledge. By Theorem \ref{thm:token-routing}, solving the above instance of token-routing takes 
	\[
		\tilO\big(\tfrac{n \cdot (n/x) + (n/x) \cdot n}{n} +\, \sqrt{|V|} +\, \sqrt{|V_\calS|}\big) = \tilO\big(\tfrac{n}{x}+ \sqrt{n}\big)
	\]
    rounds. This results to a total runtime of \smash{$\tilO\big(x + \frac{n}{x}+ \sqrt{n}\big) = \tilO\big(\sqrt{n}\big)$} by choosing $x = \sqrt{n}$.
\end{proof}

	\section{$\mathbf k$-Source Shortest Paths}
\label{sec:cc-sp-sim}

Next, we apply the token routing algorithm to simulate the \textit{Congested Clique} model (or \CC model) on a subset of nodes in a \HYBRID network. This allows us to simulate powerful algorithms from \cite{Censor-Hillel2019a,Censor-Hillel2019b} for the \CC model and transfer their results for shortest paths (this section) and diameter problems (subsequent section) into the domain of the \HYBRID model. In order to be efficient, we provide a general framework that allows to translate \CC algorithms that are parametrized by runtime complexity, number of sources and multiplicative and additive approximation parameters into \HYBRID algorithms.
A subset of results we obtain that way are tight in terms of runtime (bar $\polylog n$ factors), which we demonstrate in \shrt{the final Section \ref{sec:lower-bounds}}\lng{Section \ref{sec:k-src-lb}}.

We start by defining the \CC model, which is a member of the family of synchronous message passing models (like the \HYBRID model). Nodes possess unlimited computing power, unique IDs and exchange messages in synchronous rounds. In a \CC each node is allowed to send one (different) $\bigO(\log n)$ bit message to \textit{every} node each round.\lng{\footnote{Using the constant time routing scheme from the \CC model by Lenzen \cite{Lenzen13} one can even assume that in the \CC model, each node can send $n$ messages to arbitrary targets, i.e, allowing multiple messages to the same node.}}
The following corollary of Theorem \ref{thm:token-routing} demonstrates the connection between \CC and \HYBRID networks that we aim to exploit in this section.

\begin{cor}
	\label{cor:cc-sp-sim}
	Let $S \subseteq V$ be a subset of nodes from a \HYBRID network, obtained by sampling each node with probability $\frac{1}{n^{1-{x}}}$ for ${x} \in (0,1]$. One round of the \CC model can be simulated on $S$ in \smash{$\tilO\big(n^{2{x}-1} \!+ n^{{x}/2}\big)$} rounds w.h.p.
\end{cor}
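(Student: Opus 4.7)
The plan is to cast one round of the \CC model on $S$ as an instance of the token-routing problem from Section \ref{sec:token-routing} and apply Theorem \ref{thm:token-routing} directly. First, a standard Chernoff bound on the sum of $n$ independent Bernoulli trials with parameter $1/n^{1-x}$ gives $|S| \in \Theta(n^x)$ w.h.p.

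For the reduction, one round of the \CC model restricted to $S$ requires every $u \in S$ to deliver a distinct $O(\log n)$-bit message to every other $v \in S$. I would formalize this as a token-routing instance with sender set $S$, receiver set $R := S$, per-node workloads $k_S = k_R = |S| - 1 \in O(n^x)$, and total workload $K \in O(n^{2x})$. The addressing prerequisites stated in Section \ref{sec:token-routing} (each receiver knows which senders send it tokens and how many) are automatic here: in a \CC round every node in $S$ expects exactly one token from every other node in $S$. This implicitly uses that the IDs of $S$ are known to its members, which is the standard premise for \CC simulations on a sampled subset (e.g., the skeleton nodes are made globally known in Section \ref{sec:apsp} at a cost already absorbed in the target runtime). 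Applying Theorem \ref{thm:token-routing} with $\eps = \delta = 1 - x \in [0, 1)$ then yields
\[
\tilO\big(K/n + \sqrt{k_S} + \sqrt{k_R}\big) \;=\; \tilO\big(n^{2x-1} + n^{x/2}\big),
\]
which is the claimed round complexity.

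The main point worth checking is that Theorem \ref{thm:token-routing} remains valid when $R = S$, whereas its statement concerns two independently sampled sets. This is essentially a bookkeeping issue rather than a real obstacle: the helper-set construction of Lemma \ref{lem:helpers} only requires that $S$ (resp.\ $R$) be a uniformly sampled subset of the prescribed density, without any independence between the two. I would therefore compute two helper-set families with fresh randomness, one per role, and observe that the subsequent hash-based routing analysis is likewise oblivious to set coincidence: the load at intermediate nodes is driven by the uniform distribution of the distinct labels $(s,r,i)$ over $V$, independently of whether a given node simultaneously serves as both a sender-helper and a receiver-helper. A secondary sanity check is that the parameter choice $\mu = \lfloor\min(\sqrt{k},1/p)\rfloor$ in Lemma \ref{lem:helpers} correctly interpolates between the two regimes of the bound: for $x \leq 2/3$ one has $\sqrt{k_S} \leq 1/p_S$ and the $n^{x/2}$ term dominates, while for $x > 2/3$ the cap $\mu = 1/p_S$ kicks in and the load term $K/n = n^{2x-1}$ dominates. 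Consequently no ingredient beyond Theorem \ref{thm:token-routing} is needed.
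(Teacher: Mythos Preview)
Your proposal is correct and follows essentially the same route as the paper: reduce one \CC round on $S$ to a token-routing instance with $S$ as both senders and receivers, then invoke Theorem~\ref{thm:token-routing} together with a Chernoff bound on $|S|$. The only minor difference is that the paper explicitly makes $S$ globally known via the token-dissemination protocol (Lemma~\ref{lem:tokenDissemination}) in $\tilO(\sqrt{|S|}) = \tilO(n^{x/2})$ rounds before invoking token routing, whereas you note this prerequisite and correctly observe that its cost is absorbed in the claimed bound; your additional discussion of the $R=S$ coincidence and the two parameter regimes is sound but not needed in the paper's treatment.
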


\begin{proof}
	Each node $v \in S$ is a sender and a receiver of exactly $|S|$ messages, one to and one from each node. We make $S$ public knowledge with a run of token dissemination in \smash{$\tilO\big(\sqrt{|S|}\big)$} (c.f. Lemma \ref{lem:tokenDissemination}), so each node in $S$ knows from whom it receives a message. Then one round of \CC on $S$ corresponds to an instance of the token routing problem. Hence the result follows immediately from Theorem \ref{thm:token-routing}. In the simulation of the \CC model, each $v \in S$ must send/receive at most one message to/from every other node in $S$, i.e. at most $|S|$ messages in total. This takes \smash{$\tilO\big({|S|^2}/{n} \!+ \sqrt{|S|}\big)$}. Using Chernoff bounds\lng{ (Lemma \ref{lem:chernoffbound})} we have $|S| \in \Theta(n^{{x}})$ w.h.p.
\end{proof}

Based on this simulation result we can provide the framework to transform shortest path algorithms in the \CC model into algorithms for the \HYBRID model. The following theorem gives the details on how runtime, number of sources and approximation parameters of \CC algorithms can be translated accordingly. \shrt{Further into this section, we describe the simulation algorithm and outline the proof of Theorem \ref{thm:cc-sp-sim}. Detailed proofs and pseudo-codes are provided in the full paper \cite{KS20}.}\lng{Further into this section, we give the detailed simulation algorithm and then prove individual claims in separate lemmas, which, when combined, culminate in the proof of Theorem \ref{thm:cc-sp-sim}.} At the end of this section, we apply the theorem to a series of known \CC algorithms which generates according results for the \HYBRID model. We remark that future advances in \CC algorithms for shortest paths may give further improvements for the \HYBRID model.\phil{Mention that the new \CC paper that solves shortest paths in loglog(n) does not help?}

\begin{thm}
	\label{thm:cc-sp-sim}
	Let $\calA$ be an $(\alpha,\beta)$-approximation \CC algorithm that computes \emph{weighted} shortest paths for $n^{\gamma}, \gamma \in [0,1]$ sources in time \smash{$T_\calA \in \tilO(\eta n^\delta)$} for constants $\delta \geq 0, \eta \geq 1$. Let \smash{${x} := \frac{2}{3 + 2 \delta}$}.
	
	Then there is a \HYBRID algorithm $\calB$ that solves the $n^{{x}\gamma}$ source shortest paths problem with runtime $T_\calB \in \tilO\big(\eta n^{1-{x}}\big)$ and the following additional properties. On \emph{weighted} graphs, $\calB$ computes a $(2\alpha \!+\! 1 \!+\! {\beta}/T_\calB)$-approximation.
	On \emph{unweighted} graphs, $\calB$ computes an \smash{$\big(\alpha \!+\! \tfrac{2}{\eta} \!+\! \tfrac{\beta}{T_\calB}\big)$}-approximation.
	For $\gamma = 1$ ($\calA$ solves APSP), we can do the above for an arbitrary number of $k \in [n]$ sources but this adds $\tilO\big(\sqrt{k}\big)$ rounds to the runtime $T_\calB$. For $\gamma = 0$ (SSSP) the approximation factor improves to $(\alpha\!+\! {\beta}/T_\calB)$ for weighted and unweighted graphs. 
\end{thm}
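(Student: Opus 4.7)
The plan is to reduce weighted $n^{x\gamma}$-source shortest paths on $G$ to an $n^{x\gamma}$-source shortest paths instance on a small random skeleton $V_\calS$, then invoke Corollary~\ref{cor:cc-sp-sim} to simulate $\calA$ on $V_\calS$ in the \HYBRID model. Concretely, I would sample each node into $V_\calS$ with probability $n^{x-1}$ and additionally insert the $n^{x\gamma}$ sources, keeping $|V_\calS| \in \tilO(n^x)$ w.h.p. By Lemma~\ref{lem:long-paths-skeleton} every shortest path of more than $h \in \tilO(n^{1-x})$ hops meets $V_\calS$, so the $h$-limited skeleton graph $\calS$ built in $h$ rounds of local weighted Bellman--Ford satisfies $d_\calS(s_i,s_j) = d_G(s_i,s_j)$ for all skeleton pairs (Lemma~\ref{lem:long-shortest-paths-skeleton}).

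I then simulate $\calA$ on $V_\calS$ using Corollary~\ref{cor:cc-sp-sim}. One simulated \CC round costs $\tilO(n^{x/2} + n^{2x-1})$ \HYBRID rounds, and $\calA$ needs $T_\calA = \tilO(\eta (n^x)^\delta) = \tilO(\eta n^{x\delta})$ rounds on a skeleton of size $n^x$. The choice $x = 2/(3+2\delta)$ equates the dominant simulation term $\eta n^{x\delta + x/2}$ with the local-BFS term $n^{1-x}$, both giving $T_\calB = \tilO(\eta n^{1-x})$; the $n^{2x-1}$ term is dominated because $x \le 2/3$ follows from $\delta \ge 0$. After $\calA$, every $s'\in V_\calS$ stores $\tilde d_\calS(s',s)$ for every source $s$, and I would have $s'$ broadcast these $n^{x\gamma}$ labels to all nodes within $h$ hops in $\tilO(h)$ \LOCAL rounds (no bandwidth constraint), so each $v\in V$ can output
\[
\tilde d(v,s) := \min_{s' \in V_\calS \cap N_h(v)} \bigl(d_h(v,s') + \tilde d_\calS(s',s)\bigr).
\]

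The approximation splits into two regimes. If $hop(v,s) \le h$, the local Bellman--Ford alone yields the exact distance. Otherwise Lemma~\ref{lem:long-shortest-paths-skeleton} furnishes a first skeleton node $s_1$ on the shortest $v$--$s$ path within $h$ hops of $v$; plugging $s' = s_1$ gives $\tilde d(v,s) \le d(v,s_1) + \alpha d(s_1,s) + \beta \le \alpha d(v,s) + \beta$. Since $d(v,s) \ge h = \Theta(T_\calB/\eta)$ in this regime, the additive $\beta$ absorbs into a multiplicative slack $\beta/T_\calB$, which together with the direct argument for SSSP (where the single source is trivially in $V_\calS$) yields the factor $(\alpha + \beta/T_\calB)$. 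For $\gamma > 0$ on a weighted graph, each source $s$ must instead be routed through its nearest skeleton proxy $s^*_s$; a triangle-inequality expansion $d(s_1,s^*_s) \le d(s_1,s) + d(s,s^*_s)$ with $d(s,s^*_s) \le d(v,s)$ (which holds precisely because the skeleton regime is active, i.e.\ $d(v,s) \ge h \ge d(s,s^*_s)$) contributes one extra $(\alpha+1)d(v,s)$ of slack, producing the advertised $(2\alpha+1+\beta/T_\calB)$. On unweighted graphs the proxy offset is at most $h$ in both hop and weight, so extending the local BFS layer by a constant multiple of $h$ absorbs it and replaces $2\alpha+1$ by $\alpha+2/\eta$.

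The main obstacle I anticipate is the $\gamma = 1$ case, because $k$ arbitrary sources can no longer all be absorbed into a size-$\tilO(n^x)$ skeleton without inflating it. Here $\calA$ solves skeleton-APSP, so distances are known at every $s'\in V_\calS$, but delivering the $k$ per-source distance labels to every $v\in V$ by naive flooding is too expensive. Instead I would invoke the token-routing protocol of Section~\ref{sec:token-routing}: skeleton nodes act as senders and all nodes as receivers, with $\tilO(k)$ tokens per endpoint after a balanced proxy assignment. Theorem~\ref{thm:token-routing} bounds the extra cost by $\tilO(\sqrt{k})$, matching the claimed additive term. The remaining care lies in verifying the proxy-distance dominance $d(s,s^*_s) \le d(v,s)$ exactly when the skeleton detour is used, and in checking that the \LOCAL broadcast of $n^{x\gamma}$ labels indeed incurs no bandwidth penalty in this model.
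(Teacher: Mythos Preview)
Your overall architecture (random skeleton, local $h$-hop BFS, simulate $\calA$ on the skeleton via Corollary~\ref{cor:cc-sp-sim}, then flood results locally) and your runtime balance at $x=\tfrac{2}{3+2\delta}$ match the paper. The main gap is your treatment of the sources. You propose to ``additionally insert the $n^{x\gamma}$ sources'' into $V_\calS$, but Corollary~\ref{cor:cc-sp-sim} (and the underlying Theorem~\ref{thm:token-routing}) only applies to a set obtained by \emph{uniform sampling}; the helper-set construction of Section~\ref{sec:helper-sets} relies on the bound $|W\cap C_r|\in\tilO(p|C_r|)$, which fails once adversarially placed sources are injected (they may all cluster in one region, forcing some nodes to help $\omega(\polylog n)$ skeleton members). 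The paper handles this by \emph{not} inserting sources; instead each source tags its nearest sampled node as a representative and the pairs $(r_s,d_h(s,r_s))$ are made public via token dissemination (Lemma~\ref{lem:tokenDissemination}). You later shift to a proxy $s^*_s$ in the approximation argument, but you never reconcile this with the earlier insertion and never say how all nodes learn $d_h(s,s^*_s)$---that broadcast is precisely the step that costs the extra $\tilO(\sqrt{k})$ in the $\gamma=1$ case, and the paper uses token dissemination, not token routing, for it.

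Two smaller points. In the weighted approximation you justify $d(s,s^*_s)\le d(v,s)$ via ``$d(v,s)\ge h\ge d(s,s^*_s)$'', but in a weighted graph $d(s,s^*_s)$ can far exceed $h$ (only $hop(s,s^*_s)\le h$ holds). The paper's argument is different: because $hop(v,s)\ge h$, some skeleton node $r'$ lies on a shortest $v$--$s$ path within $h$ hops of $s$, whence $d_h(s,r')\le d(v,s)$; since $s^*_s$ minimizes $d_h(s,\cdot)$ over the skeleton, $d_h(s,s^*_s)\le d_h(s,r')\le d(v,s)$. Finally, for $\gamma=1$ your sketch of token routing with ``skeleton nodes as senders and all nodes as receivers, $\tilO(k)$ tokens per endpoint'' does not match any instance covered by Theorem~\ref{thm:token-routing} and is not needed: once the $k$ representative triples are broadcast (cost $\tilO(\sqrt{k})$) and skeleton APSP is simulated, each node assembles $\tild(v,s)$ from purely local information plus those public triples via Equation~\eqref{eq:cc-sp-sim}.
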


\subsection{Simulation Algorithm}

The intuition behind the algorithm is as follows. Initially, we compute a skeleton graph $\calS =(V_\calS, E_\calS)$ in the \HYBRID network by first sampling nodes to $V_\calS$ with probability $\frac{1}{n^{1-{x}}}$. After the sampling, we use the local network to determine the skeleton edges $E_\calS$, which represent paths in $G$ with hop distance at most $h \in \tilO(n^{1-{x}})$. The parameter ${x}$ determines the size of the skeleton (i.e., $|V_\calS| \in  \tilO(n^{x})$ w.h.p.) and optimizes the trade-off between the time required for the simulation of $\calA$ on $\calS$ and the time spent for communication on the local network (e.g., to compute $E_\calS$).
This establishes a fundamental property of skeleton graphs, namely that on a shortest path between two endpoints more than $h$ hops apart there must be a skeleton node at least every $h$ hops. Moreover, a shortest path must run over the skeleton via two skeleton ``entry-nodes'', each within $h$ hops to one of the endpoints (we outlined this earlier in Section \ref{sec:apsp} and formalize it in Lemmas \ref{lem:long-paths-skeleton},\ref{lem:long-shortest-paths-skeleton}).

Note that at this point each skeleton node knows only its incident edges in $\calS$. In contrast to the APSP algorithm in Section \ref{sec:apsp}, we do \textit{not} make $\calS$ public knowledge, since this would take too long with our available tools. Instead, the local knowledge of skeleton edges by incident skeleton nodes serves as input for the corresponding graph problem on $\calS$, which we solve by simulating an appropriate \CC algorithm $\calA$ in $\calS$ as outlined in Corollary \ref{cor:cc-sp-sim}. Afterwards, the distances to the set of sources (or estimations thereof) are known to all skeleton nodes. But since every node in $V$ has a skeleton node within $h$ hops on a shortest path to a source (as formalized by Lemma \ref{lem:long-paths-skeleton}), they can simply gather the required information to compute their distance to all sources via the local network.

There is one catch however, as sources in a graph problem given on the {whole} network $G$ will probably not coincide with the randomly sampled skeleton nodes $V_\calS$. 
Generally speaking, we can not simply add all sources from $G$ to the skeleton since they might be too regionally concentrated in $G$, thereby obstructing the creation of helper sets, where each node must only help $\tilO(1)$ senders or receivers, which would also inhibit the (fast) simulation of the \CC. Note however, that if we are dealing with only a single source in $G$, adding it to $V_\calS$ is actually not a problem as each helper will be assigned at most one additional skeleton in the simulation routine (c.f., Corollary \ref{cor:cc-sp-sim}). 

In general, we circumvent the mentioned problem by letting each source $v \in V$ that has not been sampled to $V_\calS$ tag its closest skeleton $r_v \in \calS$ as its \textit{representative}, i.e.\ as substitute source in $V_\calS$. Subsequently, we make representatives and distances to their original source publicly known (a comparatively small amount of information). After computing distances to the representatives as described above, each node can estimate its distance to the an original source by adding its distance to the according representative to the (publicly known) distance from representative to the original source.
The drawback is that the estimations will increase the additive error by up to $2h$ (two times the maximum hop-length of a skeleton edge) in the unweighted case and the multiplicative error by $(1\!+\!\alpha)$ in the weighted case. We show that additive errors of $\Theta(h)$ are in fact not that troublesome. On one hand, we can use the local network to compute distances up to some arbitrarily large multiple of $h$ \textit{exactly}. On the other hand, the multiplicative impact of additive errors on paths longer than that is relatively small.

\begin{shortversion}
The detailed proof of Theorem \ref{thm:cc-sp-sim} is given in the full article \cite{KS20}. In a nutshell, the proof of the runtime consists of summing up the runtimes of the simulation of $\calA$ and the local exploration to compute small distances exactly. Note that for the chosen $x = \frac{2}{3 + 2 \delta}$ one exponent dominates the runtime \smash{$\tilO\big(n^{2{x}-1} \!+ n^{{x}/2}\big)$} of the simulation given in Corollary \ref{cor:cc-sp-sim}, therefore the runtime of simulating one round simplifies to $\tilO\big(n^{{x}/2}\big)$. Multiplied with the runtime of $\calA$ we have $\tilO\big(\eta  n^{x\delta + x/2}\big)$ rounds to simulate $\calA$. The local communication sums up to $\tilO(\eta  n^{1-x})$. The trade-off between simulation and local communication is optimized for $x := \tfrac{2}{3+2 \delta}$ making the running times of both steps equal to \smash{$\tilO(\eta  n^{1-x})$}.

Although lengthy, proving the claimed approximation ratios is not that technically involved. It boils down to proving the error inflicted by computing shortest paths to representatives instead of directly to the sources. We rely extensively on the properties of skeleton graphs (given in Lemmas \ref{lem:long-paths-skeleton} and \ref{lem:long-shortest-paths-skeleton}) and the triangle inequality. Subsequently, we use the exact knowledge of distances of paths with few hops to turn additive errors into small multiplicative errors. We refer the interested reader to the full paper \cite{KS20}.
\end{shortversion}

\begin{longversion}

\begin{algorithm}[H]
	\caption{\texttt{SP-Simulation($\calA, \gamma, \delta, \eta$)} \Comment{\textit{param.\ of Theorem \ref{thm:cc-sp-sim}}}}
	\label{alg:cc-sp-sim}
	\begin{algorithmic}
		\State ${x} \gets \frac{2}{3 + 2 \delta}$ \Comment{\textit{${x}$ optimizes overall runtime}}
		\State \texttt{Compute-skeleton($\gamma, {x}$)} \Comment{\textit{construct $\calS \!=\!(V_\calS, E_\calS)$}}
		\State \texttt{Compute-Representatives} \Comment{\textit{pick representatives in $V_\calS$}}
		\State \texttt{Clique-Simulation($\calA,x$)} \Comment{\textit{run $\calA$ on $\calS$}}
		\For{$\eta  h$ rounds} \Comment{\textit{flood dist.\ to sources locally}}
			\State $v$ forwards distances to sources via local edges
		\EndFor
		\State $v$ computes own distance to all sources with Equation \eqref{eq:cc-sp-sim}
	\end{algorithmic}
\end{algorithm}

\begin{fact}
	\label{fct:cc-sp-sim}
%Algorithm \ref{alg:cc-sp-sim} takes $\tilO(\eta  n^{1-{x}})$ for the 
From the simulation of $\calA$ and the subsequent local distribution of the results in Algorithm \ref{alg:cc-sp-sim}, each node $v \in V$ knows an $(\alpha,\beta)$-estimation $\tild(u,r_s)$ between each skeleton $u \in V_\calS$ and each representative $r_s$ of some source $s$. Additionally each  node knows the distance $d_h(r_s,s)$ between each source $s$ and its representative $r_s$ from the respective token dissemination in the sub-procedure given by Algorithm \ref{alg:compute-skeleton}. If two nodes $u,v \in V$ are within $h$ hops of each other they know their $\eta  h$-hop limited distance $d_{\eta  h}(u,v)$ from the local exploration. With the available information every node $v \in V$ can compute an estimation $\tild(v,s)$ to each source $s \in V$ as follows
\begin{equation}
	\label{eq:cc-sp-sim}
	\tild(v,s) := \min \big(d_{\eta  h}(v,s), \min_{u \in V_\calS} d_h(v,u) + \tild(u,r_s) + d_h(r_s,s) \big).
\end{equation}
\end{fact}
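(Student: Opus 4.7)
The plan is to verify the Fact as a bookkeeping statement: check that each of the three pieces of information claimed to be available at $v$ is indeed produced by the corresponding step of Algorithm \ref{alg:cc-sp-sim}, and then observe that these pieces together suffice to evaluate Equation \eqref{eq:cc-sp-sim} locally at $v$.

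First I would walk through \texttt{Clique-Simulation}: by assumption $\calA$ is an $(\alpha,\beta)$-approximation $k$-source shortest paths algorithm in the \CC model, and by Corollary \ref{cor:cc-sp-sim} it can be simulated on the skeleton $\calS$. Because each (non-skeleton) source $s$ has been replaced by its representative $r_s \in V_\calS$ during \texttt{Compute-Representatives}, running $\calA$ on $\calS$ with sources $\{r_s\}$ produces, at each skeleton node $u \in V_\calS$, an estimate $\tild(u,r_s)$ satisfying the $(\alpha,\beta)$-guarantee with respect to distances in $\calS$. These estimates are then flooded for $\eta h$ rounds through the local network, so every node $v \in V$ within $\eta h \geq h$ hops of a skeleton node $u$ learns $\tild(u,r_s)$; by Lemma \ref{lem:long-paths-skeleton} this covers exactly those $u$ that can appear in the inner minimum of \eqref{eq:cc-sp-sim} (any $u \in V_\calS$ with $hop(v,u) > h$ contributes $d_h(v,u)=\infty$ and can be safely ignored).

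Next I would argue that the small additional data $d_h(r_s,s)$ is globally available: in \texttt{Compute-Representatives} each source–representative pair is broadcast via token dissemination (Lemma \ref{lem:tokenDissemination}), so every node in $V$ knows $d_h(r_s,s)$ for every source $s$. Finally, the $\eta h$-round local exploration in the last loop of Algorithm \ref{alg:cc-sp-sim} lets every pair of nodes within $\eta h$ hops learn their true $\eta h$-hop-limited distance; in particular $v$ learns $d_{\eta h}(v,s)$ whenever $hop(v,s) \leq \eta h$, and we may adopt the convention $d_{\eta h}(v,s) = \infty$ otherwise.

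Putting these three ingredients together, every quantity on the right-hand side of \eqref{eq:cc-sp-sim} — namely $d_{\eta h}(v,s)$, $d_h(v,u)$ for $u \in V_\calS$ within $h$ hops of $v$, the estimate $\tild(u,r_s)$, and the broadcast value $d_h(r_s,s)$ — is present in $v$'s local memory, so $v$ can evaluate the minimum without any further communication. The only subtle point, which I would flag as the main thing to check rather than a real obstacle, is that restricting the inner minimum to skeleton nodes within $h$ hops of $v$ does not change its value: this is exactly the content of Lemma \ref{lem:long-paths-skeleton}, which guarantees the existence of a skeleton ``entry node'' within $h$ hops along any sufficiently long shortest path, while for short paths the $d_{\eta h}(v,s)$ term already dominates.
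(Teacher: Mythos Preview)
Your proposal is correct and matches the paper's intent. Note that the paper does not actually give a proof of this statement: it is labeled a ``Fact'' precisely because the authors consider it evident from inspecting the algorithm (see the footnote attached to Fact~\ref{fct:routing-preparation}), so your step-by-step bookkeeping is exactly the verification the paper leaves to the reader. Your observation that only skeleton nodes within $h$ hops of $v$ matter (the rest contribute $d_h(v,u)=\infty$) is a useful clarification, since the flooding only propagates the estimates $\tild(u,r_s)$ to nodes within $\eta h$ hops of $u$, not literally to every $v\in V$ as the Fact's wording might suggest.
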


We outsource several tasks from the main algorithm to sub-procedures with the intent of reusing some of them in the next section for the diameter problem. The task of computing the skeleton graph is moved to Algorithm \ref{alg:compute-skeleton}. The task of establishing the relations among sources and representatives and making these public knowledge is done by Algorithm \ref{alg:compute-reps}. The simulation routine as outlined in Corollary \ref{cor:cc-sp-sim} is given by Algorithm \ref{alg:cc-sim}.

\begin{algorithm}[H]
	\caption{\texttt{Compute-Skeleton($\gamma,{x}$)} \Comment{\textit{$\!h \!\in\! \tilO(n^{1-{x}})$ see Lem.\ \ref{lem:long-paths-skeleton}}}}
	\label{alg:compute-skeleton}
	\begin{algorithmic}		
		\State $v$ joins $V_\calS$ with probability $\frac{1}{n^{1-{x}}}$ \Comment{\textit{$|V_\calS| \!\in\! \tilT(n^{x})$ w.h.p.}}
		\If {$v$ is a source \textbf{and} $\gamma = 0$ } $v$ joins $V_\calS$
		\EndIf
		\For {$h \in \tilO(n^{1-{x}})$ rounds} \Comment{\textit{locally determine $E_\calS$}}
		\State $v$ forwards all graph information< via local network
		\EndFor
	\end{algorithmic}
\end{algorithm}

\begin{fact}
	\label{fct:compute-skeleton}
	The size of $V_\calS$ is \smash{$\tilT\big(n^{{x}}\big)$} w.h.p. Let $h := \xi n^{1-{x}} \ln n$, where $\xi$ is the parameter given in Lemma \ref{lem:long-paths-skeleton}. Algorithm~\ref{alg:compute-skeleton} establishes a weighted graph $\calS = (V_\calS,E_\calS)$ among the sampled nodes $V_\calS$ in $\tilO\big(n^{1-{x}}\big)$ rounds, whereas we define $E_\calS \coloneqq \{ \{u,v\} \!\mid\! u,v\!\in\!V_\calS, \text{hop}(u,v) \!\leq\! h\}$. The weight of $\{u,v\} \in E_\calS$ is defined as $d_h(u,v)$.
	After the subroutine, all skeleton nodes know their neighbors in $\calS$ and the distances of the incident edges in $E_S$. Further properties of $\calS$ are outlined in Lemma \ref{lem:long-shortest-paths-skeleton}.
\end{fact}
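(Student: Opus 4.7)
The plan is to verify each of the three claims of the fact: (i) the size bound on $V_\calS$, (ii) the $\tilO(n^{1-{x}})$ round complexity, and (iii) the correctness of the produced data structure (skeleton nodes know their incident edges in $E_\calS$ and the corresponding $h$-limited distances).

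For claim (i), I would observe that $|V_\calS|$ is (essentially) a sum of independent Bernoulli variables with parameter $p = 1/n^{1-{x}}$, so $\E[|V_\calS|] = n^{{x}}$. A standard Chernoff argument (as already employed in the proof of Lemma \ref{lem:helpers}) yields $|V_\calS| \in \Theta(n^{{x}})$ with probability at least $1 - 1/n^c$ for any constant $c$, using that $n^{{x}} = \omega(\log n)$ for constant ${x} > 0$. The case $\gamma = 0$ merely adds a single extra source to $V_\calS$, which does not affect this asymptotic.

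For claim (ii), the running time is governed by the flooding loop: each node forwards all accumulated graph information along every incident local edge for $h = \xi n^{1-{x}} \ln n$ rounds. Since the local mode corresponds to \LOCAL, message sizes are unrestricted, so this flooding takes exactly $h \in \tilO(n^{1-{x}})$ rounds. The sampling step and the self-inclusion for $\gamma = 0$ are free (purely local), so the total round complexity is $\tilO(n^{1-{x}})$.

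For claim (iii), I would argue that after $h$ rounds of unrestricted flooding, every node $v$ knows the complete subgraph of $G$ induced by the ball of radius $h$ around $v$, including which nodes in that ball belong to $V_\calS$ (since membership in $V_\calS$ is a tag that propagates along with the graph information). Consequently, each skeleton node $u \in V_\calS$ can locally enumerate all $v \in V_\calS$ with $\mathrm{hop}(u,v) \leq h$ and, by running a standard Dijkstra (or $h$-bounded BFS in the weighted sense) on its known neighborhood, compute $d_h(u,v)$ exactly. These pairs and weights are precisely the edges of $E_\calS$ and the weights required by the definition. Hence the weighted graph $\calS = (V_\calS, E_\calS)$ is established in the distributed sense that every skeleton node knows its incident edges in $E_\calS$ together with their weights. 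There is no real technical obstacle here; the only thing to be careful about is to flood \emph{edge} information (endpoints and weights), not just node identifiers, so that the local Dijkstra on the collected subgraph actually yields the correct $h$-limited distances.
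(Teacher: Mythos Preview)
Your proposal is correct and covers exactly the routine verifications one would expect. Note that the paper does not give a separate proof of this statement at all: it is labeled a ``Fact'' precisely because the authors consider it evident (see their footnote explaining that claims they believe to be easily verifiable are called Facts), so your argument simply spells out what the paper leaves to the reader.
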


\begin{algorithm}[H]
	\caption{\texttt{Compute-Representatives}}
	\label{alg:compute-reps}
	\begin{algorithmic}		
		\If {$v$ is a source}  
		\State $v$ tags closest node in $V_\calS$ as rep.\ $r_v$ \Comment{\textit{set $v \!=\! r_v$ if $v \!\in\! V_\calS$}}
		\State $v$ creates token $\langle d_h(v,r_v), ID(v),ID(r_v) \rangle$ 
		\State \Comment{\textit{$d_h(v,r_v) = 0$ and $ID(v)=ID(r_v)$ indicates $v \in V_\calS$}}
		\EndIf
		\State $v$ participates in token dissemination protocol
		\If {$v$ received token $\langle \cdot, \cdot ,ID(v) \rangle$} 
		\State $v$ becomes source in $\calS$ \Comment{\textit{$v$ was tagged by some source in $G$}}
		\EndIf
	\end{algorithmic}
\end{algorithm}

\begin{fact}
	\label{fct:compute-reps}
	After Algorithm \ref{alg:compute-reps} terminates, every source $s$ of $G$ has a representative $r_s \in V_\calS$ ($s=r_s$ if $s \in V_\calS$) and $r_s$ and the distances $d_h(s,r_s)$ are known by the whole network for all sources $s$ of $G$.
\end{fact}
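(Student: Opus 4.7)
The plan is to verify the three constituent claims of the fact: existence of a representative $r_s \in V_\calS$ for every source $s$ (with the special case $r_s = s$ when $s \in V_\calS$), the local computation of $r_s$ by the source, and the global dissemination of the pair $(r_s, d_h(s,r_s))$ to the entire network. All three are more or less direct consequences of the pseudocode, so the proof will mostly consist of matching steps of Algorithm \ref{alg:compute-reps} to previously established primitives.

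First I would argue existence and local computability of $r_s$. By Lemma \ref{lem:long-paths-skeleton}, w.h.p.\ every node of $G$ has at least one skeleton node within $h \in \tilO(n^{1-{x}})$ hops; in particular this is true of every source $s$. The local flooding of depth $h$ executed during \texttt{Compute-Skeleton} (Algorithm \ref{alg:compute-skeleton}) makes every skeleton node in a node's $h$-hop neighborhood, together with the corresponding hop-limited weighted distance, known to that node. Hence each source $s$ can pick a closest skeleton node as $r_s$ (breaking ties by ID) and compute $d_h(s,r_s)$ locally. If $s \in V_\calS$ already, then $s$ itself is a valid closest choice with distance $0$, so $r_s = s$ as required.

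Second I would use the token dissemination primitive of Lemma \ref{lem:tokenDissemination} to broadcast the resulting pairs. Each source contributes exactly one token $\langle d_h(s,r_s), ID(s), ID(r_s) \rangle$ of $O(\log n)$ bits, so the total number of tokens is at most the number of sources, which is bounded by $n$. Running the dissemination protocol therefore takes $\tilO(\sqrt{n})$ rounds and, upon termination, every node in $V$ holds the full list of source–representative pairs together with their distances. In particular, a skeleton node $v \in V_\calS$ that observes a token whose third field is $ID(v)$ learns that it has been chosen as the representative of the corresponding source, and correctly promotes itself to a source in $\calS$ as the final step of the algorithm prescribes.

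I do not expect a serious technical obstacle: the only non-trivial ingredient is the w.h.p.\ existence of a skeleton node within $h$ hops of every source, which is exactly what Lemma \ref{lem:long-paths-skeleton} supplies, and everything else follows syntactically from Algorithm \ref{alg:compute-reps} and the correctness of Lemma \ref{lem:tokenDissemination}. The only point requiring mild care is to take a union bound over the w.h.p.\ events invoked here and in the surrounding subroutines so that all stated guarantees hold simultaneously.
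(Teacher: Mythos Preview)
Your proposal is correct; the paper does not give an explicit proof of this statement, as it is labeled a ``Fact'' (the paper's terminology for claims considered evident from the pseudocode), and your argument is precisely the justification one would write: existence of a nearby skeleton node via Lemma~\ref{lem:long-paths-skeleton}, local computability of $r_s$ and $d_h(s,r_s)$ from the depth-$h$ flooding in \texttt{Compute-Skeleton}, and global knowledge via the token dissemination of Lemma~\ref{lem:tokenDissemination}. One minor remark: the number of tokens is the number of sources $k$, not $n$, so the dissemination takes $\tilO(\sqrt{k})$ rounds rather than $\tilO(\sqrt{n})$---but this does not affect the correctness claim of the Fact itself.
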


\begin{algorithm}[H]
	\caption{\texttt{Clique-Simulation($\calA,x$)} \Comment{\CC alg.\ $\calA$}}
	\label{alg:cc-sim}	
	\begin{algorithmic}
		\For{$i \gets 1$ \textbf{to} $T_\calA$} \Comment{\textit{simulation of $\calA$}}
		\If {$v \in V_\calS$} \Comment{\textit{nodes $V_\calS$ take part in simulation}}
		\State $v$ performs computations based on $\calA$ round $i$
		\State $v$ creates tokens \smash{$t^i_{(v,u)}$} to nodes $u\in V_\calS$ for round $i$		
		\EndIf
		\State \texttt{Token-Routing($\frac{1}{n^{1-{x}}},\frac{1}{n^{1-{x}}}$)} \Comment{\textit{$V \!\setminus\! V_\calS$ take part as helpers}}	
		\EndFor
	\end{algorithmic}
\end{algorithm}

The full proof of Theorem \ref{thm:cc-sp-sim} is given in the following subsection. In a nutshell, the proof of the runtime of Algorithm \ref{alg:cc-sp-sim} consists of summing up the runtimes of the simulation of $\calA$ (which is $\tilO\big(\eta  n^{x\delta + x/2}\big)$) and the local exploration to compute small distances exactly (which is $\tilO(\eta  n^{1-x})$). The trade-off is optimized for the parameter $x := \tfrac{2}{3+2 \delta}$ given in the proof.

Although lengthy, proving the claimed approximation ratios is not that technically involved. It boils down to proving the error inflicted by computing shortest paths to representatives of sources instead of directly to the sources. For this we rely extensively on the properties of skeleton graphs (given in Lemmas \ref{lem:long-paths-skeleton} and \ref{lem:long-shortest-paths-skeleton}) and the triangle inequality. Subsequently, we leverage the exact knowledge of distances of paths with few hops to turn additive errors into small multiplicative errors.

\subsection{Proof of Theorem \ref{thm:cc-sp-sim}}
\label{sec:cc-sp-sim-proof}

The most straight forward part of the proof of Theorem \ref{thm:cc-sp-sim} is showing the runtime of Algorithm \ref{alg:cc-sp-sim}. Note that in the formulation of this and subsequent lemmas we will assume Algorithm \ref{alg:cc-sp-sim} receives a \CC algorithm $\calA$ as input, which is parametrized according to the specifications given in Theorem \ref{thm:cc-sp-sim}.

\begin{lem}
	\label{lem:cc-sim-runtime}
	%Let $\calA$ be parametrized as in Theorem \ref{thm:cc-sp-sim}. 
	\hspace*{-1mm}Given $n^{{x}\gamma}\!$ sources, Algorithm \ref{alg:cc-sp-sim} takes \smash{$\tilO\big(\eta  n^{1-x}\big)$} rounds.
\end{lem}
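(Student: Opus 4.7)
The plan is to add up the round complexities of the four blocks of Algorithm \ref{alg:cc-sp-sim} and check that, with $x = \tfrac{2}{3+2\delta}$, every block is $\tilO(\eta n^{1-x})$. Three of the four blocks follow directly from previously stated facts, and the main work is concentrated in the clique-simulation loop.

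By Fact \ref{fct:compute-skeleton}, \texttt{Compute-Skeleton} consists only of local flooding to depth $h \in \tilO(n^{1-x})$ and so contributes $\tilO(n^{1-x})$ rounds. \texttt{Compute-Representatives} performs a single token-dissemination on at most $n^{x\gamma}$ source--representative tokens, which by Lemma \ref{lem:tokenDissemination} costs $\tilO\bigl(\sqrt{n^{x\gamma}}\bigr) \leq \tilO(n^{x/2})$; since $x \leq 2/3$ for every $\delta \geq 0$, we have $n^{x/2} \leq n^{1-x}$, so this block is absorbed into $\tilO(n^{1-x})$. The concluding $\eta h$-round local flood of source distances then trivially contributes $\tilO(\eta n^{1-x})$.

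The only block whose bound requires real work is \texttt{Clique-Simulation}. I would sum its cost by noting that it executes $T_\calA = \tilO(\eta n^\delta)$ calls to \texttt{Token-Routing} on the skeleton $V_\calS$, each of which simulates one \CC round in $\tilO(n^{2x-1} + n^{x/2})$ rounds by Corollary \ref{cor:cc-sp-sim}. The total is therefore $\tilO\bigl(\eta n^{\delta+2x-1} + \eta n^{\delta+x/2}\bigr)$, and the only real step of the proof is a short algebraic check that the choice $x = \tfrac{2}{3+2\delta}$ keeps both exponents bounded by $1-x$. That exponent arithmetic is the main (and only non-mechanical) obstacle; once discharged, summing the four block costs yields the claimed $\tilO(\eta n^{1-x})$.
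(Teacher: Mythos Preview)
Your decomposition into four blocks and the treatment of \texttt{Compute-Skeleton}, \texttt{Compute-Representatives}, and the final local flood are all fine and match the paper. The gap is in the \texttt{Clique-Simulation} block: you write $T_\calA = \tilO(\eta n^\delta)$ for the number of simulated \CC rounds, but that is the runtime of $\calA$ on a graph with $n$ nodes. Here $\calA$ is run on the skeleton $\calS$, which has only $|V_\calS| \in \tilO(n^x)$ nodes, so the number of simulated rounds is $\tilO\bigl(\eta (n^x)^\delta\bigr) = \tilO(\eta n^{x\delta})$, not $\tilO(\eta n^\delta)$.

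This is not a cosmetic slip: with your exponent the arithmetic does \emph{not} close. For $x = \tfrac{2}{3+2\delta}$ one gets $\delta + \tfrac{x}{2} = \tfrac{2\delta^2 + 3\delta + 1}{3+2\delta}$, which exceeds $1-x = \tfrac{1+2\delta}{3+2\delta}$ whenever $\delta > 0$ (for instance, the SSSP application with $\delta = 1/6$ would yield $n^{7/15}$ instead of the required $n^{2/5}$). With the correct factor $n^{x\delta}$ the paper's computation $x\delta + \tfrac{x}{2} = \tfrac{2\delta+1}{3+2\delta} = 1-x$ goes through exactly. So the ``short algebraic check'' you defer is actually impossible with the exponents you wrote; once you replace $\delta$ by $x\delta$ in the simulation cost, the rest of your outline is correct and essentially identical to the paper's proof.
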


\begin{proof}
	During the execution of Algorithm \ref{alg:cc-sp-sim} we conduct two explorations in the local network, the longer of which takes $\eta h \in \tilO(\eta  n^{1-x})$ rounds ($h$ is the parameter from Lemma \ref{lem:long-paths-skeleton}). Moreover, we simulate the \CC algorithm $\calA$ on the skeleton $\calS$. A round of simulation takes \smash{$\tilO\big(n^{2{x}-1} \!+ n^{{x}/2}\big)$} rounds in accordance with Corollary \ref{cor:cc-sp-sim}. 
	
	Note that for the chosen \smash{$x = \frac{2}{3 + 2 \delta}$} one exponent dominates the runtime of the simulation, as \smash{$2x\!-\!1 = \tfrac{1-2\delta}{3 + 2 \delta} \leq \tfrac{1}{3 + 2 \delta} = \tfrac{x}{2}$}. Therefore the runtime simplifies to $\tilO\big(n^{{x}/2}\big)$. 	
	Since the number of nodes on which we simulate $\calA$ is only $|V_\calS| = \tilO(n^x)$, running $\calA$ takes $\tilO\big(\eta  n^{x\delta}\big)$ \textit{simulated} rounds, that is $\tilO\big(\eta  n^{x\delta + x/2}\big)$ \textit{actual} rounds. For the value we chose for $x$ the exponent equals 
	$$x \delta \!+\! \tfrac{x}{2} = \tfrac{2\delta}{3+2\delta} \!+\! \tfrac{1}{3+2\delta} = 1 \!-\! \tfrac{2}{3+2 \delta} = 1\!-\!x,$$ thus running $\calA$ takes $\tilO(\eta  n^{1-x})$ rounds in total. 
	
	Finally, we note that since each source in $G$ picks at most one representative in $V_\calS$ that acts as a source in $\calS$, the number of sources $ \Theta\big(|V_\calS|^\gamma\big) = \Theta\big(n^{{x}\gamma}\big)$ in $\calS$ is roughly in line with the allowed number to execute $\calA$ on $\calS$. Specifically, we have at most a constant factor more sources on $\calS$ than $\calA$ allows w.h.p. (due to the randomized sampling of $V_\calS$). This does not matter much as we can repeat $\calA$ a constant number of times with a constant fraction of sources. 
\end{proof}

Let us now turn to the correctness of the distance approximation computed by Algorithm \ref{alg:cc-sp-sim}.

\begin{lem}
	\label{lem:approx-ratio}
	On unweighted graphs with $n^{{x}\gamma}$ sources and $\gamma >0$, Alg.\ $\calB$ (represented by Alg.\ \ref{alg:cc-sp-sim}) computes an \smash{$\big(\alpha \!+\! \tfrac{2}{\eta } \!+\! \tfrac{\beta}{T_\calB}\big)$}-approximation w.h.p., where \smash{$T_\calB \in \tilO\big( \eta n^{1-x}\big)$}. We get a \smash{$\big(1 \!+\! 2\alpha \!+\! \frac{\beta}{T_\calB}\big)$}-approximation on weighted graphs, w.h.p.
\end{lem}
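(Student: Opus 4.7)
The plan is to compare the estimate $\tild(v,s)$ produced by Equation \eqref{eq:cc-sp-sim} against $d(v,s)$, splitting on the hop-distance $hop(v,s)$. If $hop(v,s) \leq \eta h$ then the $\eta h$-hop exploration in Algorithm \ref{alg:cc-sp-sim} already yields $d_{\eta h}(v,s) = d(v,s)$, so the estimate is exact; all the work is in the complementary ``long-path'' regime $hop(v,s) > \eta h$, which I treat next.

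In the long-path regime I would invoke the skeleton Lemmas \ref{lem:long-paths-skeleton} and \ref{lem:long-shortest-paths-skeleton} (which hold w.h.p.\ over the skeleton sampling) to fix a shortest $v$-$s$-path $P_{v,s}$ together with two on-path skeleton witnesses: a node $u \in V_\calS$ within $h$ hops of $v$ (so that $d_h(v,u) = d(v,u)$ and $d(v,u)+d(u,s)=d(v,s)$) and a node $u' \in V_\calS$ within $h$ hops of $s$ (so that $d(u,s) = d(u,u')+d(u',s)$). The same lemmas imply that $\calS$ preserves weighted distances, so the $(\alpha,\beta)$-guarantee of $\calA$ on $\calS$ combined with the triangle inequality in $G$ gives $\tild(u,r_s) \leq \alpha\bigl(d(u,s)+d(s,r_s)\bigr) + \beta$. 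Plugging this into the minimisation of \eqref{eq:cc-sp-sim} with the fixed $u$, and using $d_h(r_s,s) = d(r_s,s)$, yields the master inequality
\[
\tild(v,s) \;\leq\; d(v,u) + \alpha\, d(u,s) + (\alpha+1)\, d(s,r_s) + \beta.
\]
Everything then reduces to bounding the detour $(\alpha+1)\,d(s,r_s)$ against $d(v,s)$ and to absorbing the additive $\beta$ into a $\beta/T_\calB$ multiplicative slack, which is available because $d(v,s) \geq hop(v,s) > \eta h = \Theta(T_\calB)$ in the long regime.

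For the \emph{unweighted} claim I would use that the representative $r_s$ lies within $h$ hops of $s$ and all edges have unit weight, so $d(s,r_s) \leq h$; meanwhile $hop(v,s) > \eta h$ forces $d(v,s) > \eta h$, so the detour contributes at most $\tfrac{\alpha+1}{\eta}\,d(v,s) = O(1/\eta)\,d(v,s)$, which together with the leading $\alpha\,d(v,s)$ and the converted additive term matches the claimed $(\alpha + 2/\eta + \beta/T_\calB)$ factor. For the \emph{weighted} claim this hop-to-weight trick is unavailable since $d(s,r_s)$ is not tied to $h$. Instead, I would use that $r_s$ is the \emph{closest} skeleton to $s$ and that $u'$ is a valid candidate, giving $d(s,r_s) \leq d(s,u')$; because $u'$ lies on $P_{v,s}$ between $u$ and $s$, we also have $d(s,u') \leq d(u,s)$. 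Substituting into the master inequality and using $\alpha \geq 1$ collapses it to $\tild(v,s) \leq (2\alpha+1)\,d(v,s) + \beta$, from which the stated $(1+2\alpha+\beta/T_\calB)$-factor follows.

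The main obstacle I anticipate is precisely the weighted case: because $r_s$ is only guaranteed to be \emph{hop}-close to $s$, the weighted detour $d(s,r_s)$ cannot be controlled directly, and the fix requires the structural interplay between the on-path witness $u'$ and the ``closest skeleton'' property defining $r_s$. To conclude, a union bound over the high-probability events of skeleton sampling, of the hash-based token routing scheme (Lemma \ref{lem:routing-scheme}) invoked inside the \CC simulation, and of $\calA$ itself combines the above into the claimed w.h.p.\ guarantee.
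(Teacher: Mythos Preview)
Your approach is essentially the paper's: split on $hop(v,s)\lessgtr\eta h$, pick an on-path skeleton witness near $v$, push the $(\alpha,\beta)$-guarantee of $\calA$ through the triangle inequality, and control the representative detour $d(s,r_s)$ via the ``closest skeleton'' property of $r_s$ (the paper uses exactly the same witness $w'$ on $P_{v,s}$ and exactly the same $d_h(s,r_s)\le d(v,s)$ step in the weighted case).

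Two small points to tighten. First, you never verify the lower bound $\tild(v,s)\ge d(v,s)$; the paper checks this explicitly by observing $d(v,s)\le d(v,r_s)+d_h(r_s,s)\le \min_{u\in V_\calS} d_h(v,u)+\tild(u,r_s)+d_h(r_s,s)$. Second, your assertion that ``$u'$ lies on $P_{v,s}$ between $u$ and $s$'' (hence $d(s,u')\le d(u,s)$) is not guaranteed when $\eta h<2h$, i.e.\ $\eta<2$, since the $h$-neighbourhoods of $v$ and $s$ on $P_{v,s}$ may overlap. You do not actually need it: the weaker bound $d(s,u')\le d(v,s)$, which follows just from $u'\in P_{v,s}$, already gives $d(v,u)+\alpha d(u,s)+(\alpha{+}1)d(v,s)+\beta\le (2\alpha{+}1)d(v,s)+\beta$ after using $\alpha\ge 1$ on the first two terms, matching the claim.
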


\begin{proof}
	Let $v,s \in V$ be a node and a source, respectively. In case $hop(u,v) \leq \eta  h$, the value $\tild(v,s)$ computed in Equation \eqref{eq:cc-sp-sim} is equal to $d_{\eta  h}(v,s)$ (since $s$ is a source and due to Lemma \ref{lem:long-paths-skeleton}), i.e. the algorithm computes the exact distance. From now on we assume $hop(u,v) > \eta  h$.	
	
	Next we give some definitions and state a few facts that we implicitly use for the subsequent calculations. Let $s\in V$ be a source, let $r_s \in V_\calS$ be the representative of $s$ and let $Q$ be a shortest $v$-$r_s$-path. According to Lemma \ref{lem:long-paths-skeleton} there is a skeleton node $w \in V_\calS$ on $Q$ within $h$ hops of $v$ (possibly $w = r_s$). Note that $v$ learns the distance $d_h(v,w)$ and in fact $d_h(v,z)$ for any $z \in V_\calS$ within $h$ hops during the exploration via the local network in Algorithm \ref{alg:cc-sp-sim}.	
	
	By definition, $\calA$ computes an approximation $\tild(u,r_s)$ with $$d(u,r_s) \leq \tild(u,r_s) \leq \alpha \!\cdot\! d(u,r_s) \!+\! \beta$$ from any skeleton node $u \in V_\calS$ to $r_s$ and after the algorithm has finished, the values $\tild(u,r_s)$ are known by $v$ (c.f. Fact \ref{fct:cc-sp-sim}). Finally, since $s$ must have at least one skeleton node within $h$ hops distance, its representative $r_s \in V_\calS$ (the skeleton closest to $v$) must clearly be within $h$ hops of $s$ as well. 
	The distance label $d_h(r_s,s)$ is made public knowledge, i.e., also known by $v$ (c.f.\ Fact \ref{fct:compute-reps}).
	We have 
	\begin{align*}
	w(Q) & = d(v,r_s) = d_h(v,w) + d(w,r_s)\\
	&= \min_{u\in V_\calS} d_h(v,u) + d(u,r_s)\\
	&\leq \min_{u\in V_\calS} d_h(v,u) + \tild(u,r_s).
	\end{align*}
	Let $P$ be an \textit{actual shortest path} from $v$ to $s$. We know $|P| \geq \eta h$. We show that we do not underestimate $w(P)$
	\begin{align*}
	w(P) & =  d(v,s) \leq d(v,r_s) + d_h(r_s,s) = w(Q) + d_h(r_s,s) \\ & \leq \min_{u\in V_\calS} d_h(v,u) + \tild(u,r_s) + d_h(r_s,s) = \tild(v,s).
	\end{align*}
	Note that the right hand side of the above inequality equals the right hand side of Equation \eqref{eq:cc-sp-sim}, given that $hop(u,v) > \eta h$. Next we upper bound the estimation $\tild(v,s)$ given in Equation \eqref{eq:cc-sp-sim} with a constant multiple of $w(P)$ as follows 
	\begin{align}
	\begin{split}	
	\label{eq:approx-ratio}	
	\tild(v,s) &= \smash{\min_{u\in V_\calS}} d_h(v,u) + \tild(u,r_s) + d_h(r_s,s) \\
	&\leq d_h(v,w) + \tild(w,r_s) + d_h(r_s,s)\\
	&\leq d_h(v,w) + \alpha\!\cdot\!d(w,r_s) \!+\! \beta + d_h(r_s,s).
	\end{split}
	\end{align}
	The first inequality is due to the fact that $d_h(v,w) + \tild(w,r_s)$ is one of the options occurring in the minimum function.
	Assume $G$ is \textit{unweighted}. Then from the above inequality and by exploiting $d(v,s) = hop(v,s) > \eta  h$ we obtain
	\begin{align*}
	\tild(v,s) & \leq d_h(v,w) + \alpha\!\cdot\!d(w,r_s) \!+\! \beta + d_h(r_s,s) 
	\leq \alpha \!\cdot\! d(v,s) + \beta + 2h\\
	&\leq \alpha \!\cdot\! d(v,s) + \tfrac{\beta}{\eta h}d(v,s) + \tfrac{2}{\eta}d(v,s)
	= \big(\alpha \!+\! \tfrac{\beta}{\eta  h} \!+\! \tfrac{2}{\eta }\big)d(v,s)
	\end{align*}	 
	Now assume that $G$ is \textit{weighted}. We continue from Inequality \eqref{eq:approx-ratio}, where we replace $w$ with the first skeleton node $w' \in V_\calS$ on a shortest path from $v$ to $s$ (which also upper bounds the minimum). Note that $hop(v,w') \leq h$ (by Lemma \ref{lem:long-paths-skeleton}). More explanations about the individual steps are given after the following equations.
	\begin{align*}
	\tild(v,s) & \leq d_h(v,w') + \alpha\!\cdot\!d(w',r_s) \!+\! \beta + d_h(r_s,s)\\
	& \leq d_h(v,w') + \alpha\!\cdot\!\big(d(w',s)+ d(s,r_s)\big) \!+\! \beta + d_h(r_s,s)\\% \tag*{\small $\text{(triangle ineq.)}$}\\
	& \leq \alpha\!\cdot\!\big(d_h(v,w') + d(w',s)\big) + (1 \!+\! \alpha)d_h(r_s,s) + \beta \tag*{\small ($\alpha \geq 1$)}\\
	& = \alpha\!\cdot\!d(v,s) + (1 \!+\! \alpha)d_h(s,r_s) + \beta \tag*{\small (Def. of $w'$)} \\
	& \leq \alpha\!\cdot\!d(v,s) + (1 \!+\! \alpha)d(v,s) + \beta \tag*{\small ($d_h(s,r_s) \leq d(v,s)$)} \\
	&  = (1 \!+\! 2\alpha)\!\cdot\! d(v,s) + \beta \\
	& \leq (1 \!+\! 2\alpha + \beta/\eta  h)\!\cdot\! d(v,s) \tag*{\small ($d(v,s) \geq \eta  h$)}
	\end{align*}
	In the second last inequality we exploit that $r_s \in V_\calS$ minimizes $d_h(s,r_s)$ and therefore $d_h(s,r_s) \leq d(v,s)$ w.h.p., as we show in the following. For a contradiction, presume $d_h(s,r_s) > d(v,s)$. Since $hop(v,s) \geq h$ there is a skeleton node $r'$ on a shortest $v$-$s$ path $P$ within $h$ hops of $s$ w.h.p. due to Lemma \ref{lem:long-paths-skeleton}. But $d_h(s,r') \leq d(v,s) < d_h(s,r_s)$, thus $r'$ would be the representative of $s$.
	
	In the last inequality we use $d(v,s) \geq \eta  h$ for the weighted case. Since the minimum weight of an edge is one, this follows because $P$ has more than $\eta  h$ edges.
	
	Finally, we note that we can do the exploration of the graph via the local network in parallel during the whole runtime $T_\calB \in \tilO\big(\eta n^{1-x}\big)$ of Algorithm \ref{alg:cc-sp-sim}, which gives exact $T_\calB$-hop distances. Thus, we can actually replace occurrences of $\eta h$ in the above approximations with $T_\calB$ (the terms are equal up to $\polylog n$ factors anyway).
\end{proof}

Next we show that the number of sources does not have to be restricted if $\calA$ is an APSP algorithm. The downside of having many sources is an additional \smash{$\tilO\big(\sqrt{k}\big)$} rounds required to make representatives public knowledge.

\begin{lem}
	If $\calA$ solves APSP in \CC we achieve the same parameters as in Lemma \ref{lem:approx-ratio} for an arbitrary number of $k \in [n]$ sources, w.h.p. This adds $\tilO\big(\sqrt{k}\big)$ rounds to the runtime $T_\calB$ of Algorithm \ref{alg:cc-sp-sim}.
\end{lem}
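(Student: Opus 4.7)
The plan is to reuse Algorithm \ref{alg:cc-sp-sim} almost verbatim, with one adjustment that accommodates an arbitrary number $k$ of sources instead of only $n^{x\gamma}$. Since $\calA$ now solves APSP on the skeleton, a single invocation of its simulation leaves every skeleton node with an estimate $\tild(u,r)$ for every pair $u,r \in V_\calS$. Hence, no matter how many distinct representatives the $k$ sources select in $V_\calS$, all approximate distances to representatives required in Equation \eqref{eq:cc-sp-sim} are automatically available at every skeleton node once the CC simulation ends. This eliminates the original restriction that the number of sources in $\calS$ cannot exceed what $\calA$ tolerates, and is the reason APSP is special.

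The only subroutine whose cost scales with the number of sources is \texttt{Compute-Representatives} (Algorithm \ref{alg:compute-reps}), which broadcasts one token $\langle d_h(s,r_s), ID(s), ID(r_s)\rangle$ per source so that every node in $V$ learns the source-representative pairing and the associated hop-limited distance. With up to $k$ sources this amounts to $\tilO(k)$ tokens, each of size $O(\log n)$. I would simply replace the call to the token dissemination protocol by one invocation bounded via Lemma \ref{lem:tokenDissemination}, which completes in $\tilO(\sqrt{k})$ rounds. Everything else in Algorithm \ref{alg:cc-sp-sim} — the skeleton construction, the CC simulation of $\calA$, and the parallel local flooding — is unchanged and bounded by Lemma \ref{lem:cc-sim-runtime} at $T_\calB \in \tilO(\eta n^{1-x})$. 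The total round complexity is therefore $T_\calB + \tilO(\sqrt{k})$ as claimed.

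For correctness and for the approximation guarantee I would appeal directly to the proof of Lemma \ref{lem:approx-ratio}. That argument only uses three pieces of information available at each node $v$: the estimates $\tild(u,r_s)$ for all $u \in V_\calS$ and all representatives $r_s$, the labels $d_h(r_s,s)$ for every source $s$, and the exact values $d_{T_\calB}(v,w)$ for every $w$ within $T_\calB$ hops of $v$. The first is supplied by the APSP simulation on $\calS$; the second by the modified broadcast above; the third by the local exploration run in parallel for the full duration $T_\calB$. None of these three ingredients degrades when the number of sources grows from $n^{x\gamma}$ to $k$, so the same bound $(\alpha + 2/\eta + \beta/T_\calB)$ in the unweighted case and $(1 + 2\alpha + \beta/T_\calB)$ in the weighted case is inherited verbatim.

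The main point I would check carefully is that the $k$-fold broadcast really only costs $\tilO(\sqrt{k})$ and does not interfere with the CC simulation of $\calA$ on $\calS$: both use the global channel, but they are executed sequentially rather than concurrently, so the bounds add rather than multiply. A secondary sanity check is that each sampled skeleton node in $V_\calS$ may now be tagged as representative by many sources simultaneously; however, since each source broadcasts its own token with its own ID, the number of tokens — not the multiplicity of the tagging — determines the broadcast cost, and the analysis of Lemma \ref{lem:approx-ratio} is insensitive to how many sources share a representative. Together these observations yield the claimed $\tilO(\eta n^{1-x} + \sqrt{k})$ bound with the same approximation ratio.
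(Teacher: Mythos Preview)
Your proposal is correct and follows essentially the same approach as the paper: since $\calA$ solves APSP the restriction on the number of representatives in $\calS$ disappears, and the only step whose cost grows with $k$ is the token dissemination of the $k$ source-representative labels, which costs $\tilO(\sqrt{k})$ by Lemma~\ref{lem:tokenDissemination}. Your write-up is in fact more detailed than the paper's, including the sanity checks about sequential execution and shared representatives, but the underlying argument is identical.
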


\begin{proof}
	From the point of correctness, the only reason that is prohibiting the use of an arbitrary number of sources is that the number of representatives tagged as sources in $\calS$ by sources in $G$ might be significantly more than $|V_\calS|^\gamma$, i.e.\ the number of sources allowed by $\calA$. If $\calA$ is an APSP algorithm we have no such restriction and may allow an arbitrary number of sources in $G$.
	
	With respect to runtime, each source in $G$ must announce its representative in $\calS$ and the distance to it, in order that correctness is preserved. In case we have $k$ sources this takes $\tilO\big(\sqrt{k}\big)$ time using token dissemination (c.f., Lemma \ref{lem:tokenDissemination}), whereas the rest of the algorithm and its runtime remains unaffected. Therefore, the runtime only increases by \smash{$\tilO\big(\sqrt{k}\big)$} rounds.
\end{proof}	

Finally, we show that the approximation factor improves significantly for SSSP. This is due to the observation that we can always summon the single source into the skeleton without straining helpers in the subsequent token routing protocol too much (each node has to help at most one additional node).

\begin{lem}
	If $\calA$ solves SSSP in the \CC, then $\calB$ gives an $(\alpha\!+\! {\beta}/{ T_\calB})$-approximation on weighted graphs w.h.p.
	%	
	%	$(\alpha,\beta)$-approximation for the weighted and an $(\alpha \!+\! {\beta}/{\eta n^{1-{x}}})$-approximation in the unweighted case in $\tilO(\eta n^{1-{x}})$ rounds w.h.p.
\end{lem}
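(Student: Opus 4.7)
The plan is to exploit the fact that with only a single source $s$ there is no need to introduce a representative at all: the algorithm can simply force $s$ into the skeleton $V_\calS$, which is exactly what the conditional branch for $\gamma = 0$ in Algorithm~\ref{alg:compute-skeleton} already does. Adding one extra node to $V_\calS$ costs essentially nothing in the token-routing/helper-set machinery, because it only increases the number of senders/receivers of the simulated \CC round by one, so at most one additional helper per node is recruited and the asymptotic runtime $T_\calB \in \tilO(\eta n^{1-x})$ from Lemma~\ref{lem:cc-sim-runtime} is unaffected. Hence I may assume $r_s = s$ and, in particular, $d_h(r_s,s) = 0$.

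Next I would redo the error analysis used in Lemma~\ref{lem:approx-ratio} under this simplification. Let $v \in V$ with $hop(v,s) > \eta h$ (the short case is settled exactly by the $\eta h$-hop local exploration). Let $w \in V_\calS$ be the first skeleton node on a shortest $v$-$s$ path; by Lemma~\ref{lem:long-paths-skeleton}, $hop(v,w) \leq h$ w.h.p., so $d_h(v,w) = d(v,w)$ is known exactly at $v$. Plugging $r_s = s$ and the choice $u = w$ into the minimum defining $\tild(v,s)$ in Equation~\eqref{eq:cc-sp-sim} yields
\begin{align*}
\tild(v,s) &\leq d_h(v,w) + \tild(w,s) + 0 \\
&\leq d_h(v,w) + \alpha \cdot d(w,s) + \beta \\
&\leq \alpha\bigl(d_h(v,w) + d(w,s)\bigr) + \beta
= \alpha \cdot d(v,s) + \beta,
\end{align*}
using $\alpha \geq 1$ in the third line and the fact that $w$ lies on a shortest $v$-$s$ path in the last equality. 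Lower-boundedness $\tild(v,s) \geq d(v,s)$ carries over verbatim from the argument in Lemma~\ref{lem:approx-ratio}, since that part never relied on $G$ being unweighted.

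Finally I would convert the additive $\beta$ into a multiplicative term. Because $G$ has integer weights at least $1$ and we are in the long case $hop(v,s) > \eta h$, we have $d(v,s) \geq \eta h$, and (exactly as in the closing remark of Lemma~\ref{lem:approx-ratio}) I can replace $\eta h$ by $T_\calB$ up to $\polylog n$ factors by running the local exploration for the full duration of $\calB$. This gives
\[
\tild(v,s) \;\leq\; \alpha \cdot d(v,s) + \beta \;\leq\; \Bigl(\alpha + \tfrac{\beta}{T_\calB}\Bigr)\, d(v,s),
\]
which is the desired $(\alpha + \beta/T_\calB)$-approximation on weighted graphs.

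The only real obstacle is the sanity check that forcing the single source into $V_\calS$ does not degrade any w.h.p.\ property used downstream: the Chernoff bounds on $|V_\calS| \in \tilT(n^x)$ are unchanged by adding one deterministic node, the helper sets of Definition~\ref{def:helpers} still exist since $s$ contributes only a single extra membership constraint, and the \CC simulation of Corollary~\ref{cor:cc-sp-sim} absorbs one extra sender/receiver without affecting its bound. Once these are noted, the argument above yields the claimed approximation factor.
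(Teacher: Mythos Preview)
Your proposal is correct and follows essentially the same approach as the paper's own proof: force the single source into $V_\calS$ (noting this does not disturb the helper-set and simulation guarantees), observe that $r_s = s$ eliminates the representative detour, bound $\tild(v,s) \leq \alpha\, d(v,s) + \beta$ via the first skeleton node $w$ on a shortest $v$-$s$ path, and convert $\beta$ to a multiplicative error using $d(v,s) \geq \eta h$ (and $\eta h$ replaceable by $T_\calB$). Your sanity-check paragraph on the effect of the single deterministic addition to $V_\calS$ is slightly more explicit than the paper's treatment but captures the same reasoning.
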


\begin{proof}
	Algorithm \ref{alg:compute-skeleton} always adds $s$ to the skeleton if it is a single source. It is not hard to see that the token routing protocol still works if we add a single node to the randomly sampled skeleton $\calS$. Therefore, runtime of Algorithm \ref{alg:cc-sp-sim} remains unaffected. However, the approximation factor improves since the distance to the source does not have to take into account the detour via a representative. 
	
	Specifically, all nodes learn the distance approximations $\tild(u,s)$ computed by $\calA$ for all $u \in V_\calS$. Additionally each node $v \in V$ knows the distance to the first skeleton $w \in V_\calS$ on a shortest path $P$ from $v$ to $s$ w.h.p. (since $w$ must be within $h$ hops due to Lemma \ref{lem:long-paths-skeleton}). Hence in any case all nodes learn at least an $(\alpha,\beta)$-approximation by taking the minimum of $d_h(v,u)+d(u,s)$ for all $u \in V_\calS$ (c.f.\ Equation \eqref{eq:cc-sp-sim} where $r_s = s$).
	
	If the shortest $v$-$s$-path has fewer than $\eta  h$ hops, we can give an exact solution (c.f. Fact \ref{fct:cc-sp-sim} and Equation \eqref{eq:cc-sp-sim}). Otherwise, we know that $d(v,s) \geq \eta  h$ (and even $d(v,s) \geq T_\calB \in \tilO(\eta  h)$ since each node learns $G$ up to depth $T_\calB$ during the whole runtime of $\calB$). Thus, we can turn the additive error into a small multiplicative error
	\begin{align*}
	\tild(v,s) & \leq \alpha \!\cdot\! d(v,s) + \beta \leq \big(\alpha \!+\! \tfrac{\beta}{d(v,s)}\big) \!\cdot\! d(v,s) \\
	&\leq \big(\alpha \!+\! \tfrac{\beta}{\eta  h}\big) \!\cdot\! d(v,s) \leq \big(\alpha \!+\! \tfrac{\beta}{T_\calB}\big) \!\cdot\! d(v,s). \qedhere
	\end{align*}
\end{proof}

\end{longversion}

\subsection{Implied Results for k-SSP and SSSP}

In the following subsection we apply Theorem \ref{thm:cc-sp-sim} on known results for the \CC model \cite{Censor-Hillel2019a,Censor-Hillel2019b}, which will generate a series of results for the \HYBRID model for different numbers of sources, approximation parameters and time complexity.

The first \CC algorithm is from \cite{Censor-Hillel2019a} (Theorem 1.2), which allows for a variable number $n^{\gamma}$ of sources and achieves a runtime of $\tilO(1/\eps)$ for $\gamma = 1/2$ and approximation factor $(1\!+\!\eps)$. Note that the corollary below could be stated more generally for arbitrary $\gamma \geq 1/2$, which we opted not to do for conciseness.

\begin{cor}
	There is an algorithm that approximates the \smash{$n^{1/3}$}-source shortest path problem in the \HYBRID model with running time $\tilO(n^{1/3}/\eps)$ and approximation factor $(1\!+\!\eps)$ on unweighted graphs and $(3\!+\!\eps)$ on weighted graphs, w.h.p.
\end{cor}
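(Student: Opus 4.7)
The plan is to apply Theorem \ref{thm:cc-sp-sim} directly to the $(1+\eps)$-approximate $\sqrt{n}$-source shortest paths algorithm of Censor-Hillel et al.\ for the \CC model (their Theorem 1.2), which serves as the input algorithm $\calA$. First I would identify the parameters of $\calA$ in the notation of Theorem \ref{thm:cc-sp-sim}: the approximation factors are $\alpha = 1+\eps$ and $\beta = 0$, the number of sources is $n^{\gamma}$ with $\gamma = 1/2$, and the running time is $T_\calA \in \tilO(1/\eps)$, which fits the template $\tilO(\eta n^{\delta})$ with $\eta = 1/\eps$ and $\delta = 0$.

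Next I would plug these values into the formulas provided by Theorem \ref{thm:cc-sp-sim}. The parameter ${x} = \frac{2}{3+2\delta}$ becomes ${x} = 2/3$, so the resulting \HYBRID algorithm $\calB$ handles $n^{{x}\gamma} = n^{1/3}$ sources, as required. Its running time is
\[
T_\calB \in \tilO\bigl(\eta\, n^{1-{x}}\bigr) = \tilO\bigl(n^{1/3}/\eps\bigr),
\]
matching the claim. For the approximation ratio, I would read off the two cases from the theorem: on unweighted graphs $\calB$ yields an $(\alpha + 2/\eta + \beta/T_\calB)$-approximation, which here equals $(1+\eps) + 2\eps + 0 = 1+3\eps$, while on weighted graphs $\calB$ yields $(2\alpha + 1 + \beta/T_\calB) = 2(1+\eps) + 1 = 3 + 2\eps$. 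A simple rescaling of $\eps$ (replace $\eps$ by $\eps/3$ or $\eps/2$ throughout) absorbs these constant factors and gives exactly the $(1+\eps)$ and $(3+\eps)$ approximation ratios claimed in the corollary, at the cost of only a constant blow-up in the $1/\eps$ factor hidden in the runtime.

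Since Theorem \ref{thm:cc-sp-sim} has already been proven, there is no real obstacle here; the proof is essentially a parameter-matching exercise. The only point requiring a brief sanity check is that the simulation in Theorem \ref{thm:cc-sp-sim} permits up to $\tilO(|V_\calS|^{\gamma})$ sources in the simulated \CC instance, and that the $n^{1/3}$ sources in $G$ indeed induce only $\tilO(n^{{x}\gamma}) = \tilO(n^{1/3})$ representatives in the skeleton, which is within the allowance of $\calA$ on a skeleton of size $\tilO(n^{2/3})$. With this observation the corollary follows immediately from Theorem \ref{thm:cc-sp-sim}.
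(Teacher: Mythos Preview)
Your proposal is correct and essentially identical to the paper's proof: both apply Theorem~\ref{thm:cc-sp-sim} to the \CC algorithm of \cite{Censor-Hillel2019a} (Theorem~1.2) with parameters $\gamma=1/2$, $\delta=0$, $\eta=1/\eps$, $\alpha=1+\eps$, $\beta=0$, obtain $x=2/3$, and then rescale $\eps$ to absorb the resulting factors $3+2\eps$ and $1+3\eps$. Your additional sanity check about the number of representatives in the skeleton is not needed (it is already subsumed by Theorem~\ref{thm:cc-sp-sim}), but it does no harm.
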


\begin{proof}
	Let $\calA$ be the algorithm from \cite{Censor-Hillel2019a} Theorem 1.2. We fix $\gamma = 1/2$. Then $\calA$ has the parameters  $\delta = 0$, $\eta  = 1/\eps$ and  $\alpha = (1\!+\!\eps)$. This implies $x = 2/3$.	
	From Theorem \ref{thm:cc-sp-sim} we obtain a runtime of $\tilO(n^{1/3}/\eps)$ on graphs with $n^{1/3}$ sources. On weighted graphs the approximation factor is \smash{$2\alpha + 1 = 3 + 2\eps \stackrel{\eps':=2\eps}{=} 3+\eps'$}. On unweighted graphs it is \smash{$\alpha + \tfrac{2}{\eta } = \alpha + 2 \eps = 1 + 3 \eps \stackrel{\tilde \eps := 3\eps }{=} 1+\tilde\eps$}.
\end{proof}

The next \CC algorithm stems from \cite{Censor-Hillel2019a} (Theorem 1.1). It solves APSP in $\tilO(1/\eps)$ rounds with approximation parameters $(\alpha, \beta) = (2\!+\!\eps,(1\!+\!\eps)w_{uv})$ for a path from $u$ to $v$, whereas $w_{uv}$ is the heaviest edge on that path. %This implies a runtime of $\tilO\big(\sqrt{k}\big)$ for $k \geq n^{2/3}$ sources, which is tight as shown in Section \ref{sec:lower-bounds}.

\begin{cor}
	There is an algorithm that approximates shortest paths for $k$ sources in the \HYBRID model with running time \smash{$\tilO\big(\frac{n^{1/3}}{\eps} \!+ \sqrt{k}\big)$} and approximation factor $(7\!+\!\eps)$ on weighted graphs and $(2\!+\!\eps)$ on unweighted graphs, w.h.p.
\end{cor}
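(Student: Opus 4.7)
The plan is to plug the congested-clique APSP algorithm $\calA$ of \cite{Censor-Hillel2019a} (Theorem 1.1) into the framework of Theorem \ref{thm:cc-sp-sim} and then reconcile the path-dependent additive error $(1\!+\!\eps)w_{uv}$ with the $(\alpha,\beta)$-template used in that theorem. The \CC-parameters of $\calA$ are $\delta = 0$, $\eta = 1/\eps$ (so $T_\calA \in \tilO(1/\eps)$), and $\gamma = 1$ since $\calA$ solves APSP, which forces $x = \frac{2}{3+2\delta} = 2/3$. Theorem \ref{thm:cc-sp-sim} then yields a simulation with running time $T_\calB \in \tilO(\eta n^{1-x}) = \tilO(n^{1/3}/\eps)$; and because $\gamma = 1$, the number of sources may be boosted to an arbitrary $k \in [n]$ at an additional cost of $\tilO(\sqrt{k})$ rounds, giving the claimed $\tilO(n^{1/3}/\eps + \sqrt{k})$ bound.

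The only non-mechanical step is translating the guarantee $d(u,v) \leq \tilde d(u,v) \leq (2\!+\!\eps)\,d(u,v) + (1\!+\!\eps) w_{uv}$ of $\calA$ into the $(\alpha,\beta)$-format of Theorem \ref{thm:cc-sp-sim}. I will handle the weighted and the unweighted cases separately. In the unweighted case every edge has weight $1$, so $w_{uv} \leq 1$ and the guarantee simplifies to a genuine $(\alpha,\beta) = (2\!+\!\eps,\, 1\!+\!\eps)$-approximation with a constant additive term. Plugging these values into the unweighted bound $\alpha + \tfrac{2}{\eta} + \tfrac{\beta}{T_\calB}$ of Theorem \ref{thm:cc-sp-sim} gives $(2\!+\!\eps) + 2\eps + \tfrac{1+\eps}{T_\calB}$; since $T_\calB$ is a polynomial in $n$ divided by $\eps$, the last summand is absorbed in a rescaling $\eps' := \Theta(\eps)$, producing the stated $(2\!+\!\eps)$ factor.

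In the weighted case the additive term $(1\!+\!\eps)w_{uv}$ is no longer constant, so I first absorb it into the multiplicative factor using $w_{uv} \leq d(u,v)$, which is the heaviest edge on a shortest $u$-$v$-path. This upgrades the guarantee of $\calA$ to a pure multiplicative $(3\!+\!2\eps)$-approximation with $\beta = 0$. Feeding $\alpha = 3\!+\!2\eps$ and $\beta = 0$ into the weighted bound $2\alpha + 1 + \tfrac{\beta}{T_\calB}$ of Theorem \ref{thm:cc-sp-sim} yields $2(3\!+\!2\eps) + 1 = 7 + 4\eps$, which after renaming $\eps' := 4\eps$ becomes the claimed $(7\!+\!\eps)$ factor.

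The main conceptual subtlety — and the step I would double-check carefully — is the handling of the path-dependent additive term $(1\!+\!\eps)w_{uv}$, because Theorem \ref{thm:cc-sp-sim} treats $\beta$ as a single fixed constant. The bound $w_{uv} \leq d(u,v)$ lets this term be re-expressed multiplicatively for weighted graphs, which is the critical observation making the hypothesis of Theorem \ref{thm:cc-sp-sim} applicable; otherwise all other ingredients (runtime, source count, approximation transfer) follow directly from the framework.
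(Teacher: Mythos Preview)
Your weighted case and the runtime derivation match the paper exactly. The gap is in the unweighted case: you bound $w_{uv}\le 1$ because ``every edge has weight $1$'', but in the framework of Theorem~\ref{thm:cc-sp-sim} the \CC algorithm $\calA$ is simulated on the \emph{skeleton} $\calS$, not on $G$. Even when $G$ is unweighted, skeleton edges carry weights $d_h(\cdot,\cdot)$ of size up to $h\in\tilO(n^{1-x})$, so the additive term of $\calA$ on $\calS$ is $(1{+}\eps)w_{uv}\le(1{+}\eps)W_{\calS}\le(1{+}\eps)h$, not $(1{+}\eps)$. The paper therefore sets $\beta=(1{+}\eps)W_{\calS}$ and uses $T_\calB\ge\eta h$ to obtain
\[
\alpha+\tfrac{2}{\eta}+\tfrac{\beta}{T_\calB}\;\le\;(2{+}\eps)+2\eps+\tfrac{(1{+}\eps)h}{h/\eps}\;\le\;2+5\eps.
\]
Your conclusion $2{+}\eps'$ happens to survive because $\beta/T_\calB$ is still $O(\eps)$ with the correct $\beta$, but the argument ``$w_{uv}\le 1$'' is not valid as written; you must account for the skeleton edge weights.
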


\begin{proof}
	The defining parameters of the algorithm from \cite{Censor-Hillel2019a} Theorem 1.1 are $\gamma = 1$, $\delta = 0$ and $\eta  = 1/\eps$. This implies $x = 2/3$ and a runtime of $\tilO(n^{1/3}/\eps)$ of the resulting \HYBRID algorithm by Theorem \ref{thm:cc-sp-sim}.		
	The multiplicative error is $\alpha = (2\!+\!\eps)$ and the additive error computed on the skeleton is $\beta = (1\!+\!\eps)w_{uv}$, where $w_{u,v}$ is the heaviest edge on a shortest path from $u$ to $v$. 	
	
	We first consider the case where $G$ is weighted. We observe that $(1\!+\!\eps)w_{uv} \leq (1\!+\!\eps)d(u,v)$, thus we can get rid of the additive error of $\calA$ by simply adding $(1\!+\!\eps)$ to the multiplicative error instead. That is, we can consider $\calA$ as $(3\!+\!2\eps)$-approximation algorithm. Then, by Theorem \ref{thm:cc-sp-sim}, we have a multiplicative error $2(3 + 2\eps) + 1 =$ \smash{$ 7 + 4 \eps \stackrel{\tilde\eps := 4 \eps}{=} 7 + \tilde\eps$}.
	
	Let us consider the unweighted case. We can also bound the additive error by $(1\!+\!\eps)W_{\calS}$ where $W_{\calS}$ is the largest edge weight of the skeleton $\calS$. Since $G$ is unweighted we have $W_\calS \leq h$. Then, by Theorem \ref{thm:cc-sp-sim} we obtain a multiplicative error of 
	\vspace*{-1mm}
	\[
		\alpha + \tfrac{2}{\eta } + \tfrac{(1+\eps)W_\calS}{T_\calB} \!\stackrel{T_\calB \geq \eta h}{\leq}\!\!\!\! \alpha + 2\eps + \tfrac{(1+\eps)h}{\eta h} \leq 2 + 3\eps + \tfrac{1+\eps}{1/\eps} \!\stackrel{\eps \leq 1}{\leq}\! 2 + 5 \eps \stackrel{\eps' := 5 \eps}{=}\!  2 + \eps'\!\!.	\vspace*{-2mm}
	\]
\end{proof}

We turn to a result of \cite{Censor-Hillel2019b}, which offers a more accurate APSP algorithm for the \CC. They solve APSP with multiplicative error $(1\!+\!o(1))$ in \smash{$\tilO(n^{\rho+o(1)})$} rounds, where $\rho$ is the so called \textit{distributed matrix multiplication coefficient} \cite{Censor-Hillel2019b}. The value $\rho \leq 1\!-\! \frac{2}{\omega}$ is algorithmically bounded by the \textit{matrix multiplication coefficient} $\omega$, for which $2 \leq \omega < 2.3728639$ is known \cite{Gall2014}, thus $\rho < 0.15715$. 
It is conjectured by some that $\omega = 2$, raising the possibility of further improvements in the bound of $\rho$. To keep the result brief, we do not express the runtime as a function of $\rho$ and only insert its known upper bound. The algorithm of \cite{Censor-Hillel2019b} allows edge weights up to \smash{$2^{(n^{o(1)})}$} which is $n^c$ if we choose $(\log \log n + \log c)/{\log n}$ for $o(1)$. %The result is tight for $k \geq n^{0.397}$ (c.f.\ Section \ref{sec:lower-bounds}). 
This and the previous two corollaries constitute the proof of Theorem \ref{thm:kSSP}.

\begin{cor}
	There is a \HYBRID algorithm for the $k$-SSP problem with runtime \smash{$\tilO\big(n^{0.397} \!+ \sqrt{k}\!\big)$} and approx.\ factor $(3\!+\!o(1))$ on weighted graphs and runtime \smash{$\tilO\big(n^{0.397}/\eps + \sqrt{k}\!\big)$} and approx.\ factor $(1\!+\!\eps)$ on unweighted graphs w.h.p.
\end{cor}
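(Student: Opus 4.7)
The plan is to follow exactly the pattern of the two preceding corollaries: take an APSP algorithm $\calA$ for the \CC model from \cite{Censor-Hillel2019b} and plug it into the framework of Theorem \ref{thm:cc-sp-sim}. Since \cite{Censor-Hillel2019b} provides a $(1+o(1))$-approximation of weighted APSP running in \smash{$\tilO(n^{\rho+o(1)})$} rounds of \CC, where $\rho$ is the distributed matrix multiplication coefficient satisfying $\rho < 0.15715$, I would instantiate Theorem \ref{thm:cc-sp-sim} with $\gamma = 1$ (so we are in the APSP regime and may allow an arbitrary number of $k$ sources at the cost of an additive \smash{$\tilO(\sqrt{k})$} term), $\delta = \rho + o(1)$, and $\eta$ absorbing the $n^{o(1)}$ and $\polylog n$ slack, and $\alpha = 1+o(1)$, $\beta = 0$.

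For the resulting HYBRID parameter I would compute \smash{$x = 2/(3+2\delta)$}, yielding $1-x = (1+2\delta)/(3+2\delta)$. Plugging $\delta < 0.15715$ gives $1-x < 0.3966 < 0.397$, so the base runtime is $T_\calB \in \tilO(\eta  n^{1-x}) \subseteq \tilO(n^{0.397})$. Adding the \smash{$\tilO(\sqrt{k})$} overhead from the $\gamma=1$ clause of Theorem \ref{thm:cc-sp-sim} for routing the representative information for an arbitrary set of sources yields the claimed \smash{$\tilO(n^{0.397} + \sqrt{k})$}. For the approximation, the weighted bound from Theorem \ref{thm:cc-sp-sim} becomes $2\alpha + 1 + \beta/T_\calB = 2(1+o(1)) + 1 + 0 = 3+o(1)$, matching the stated factor.

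For the unweighted variant, I would invoke the same CC framework but with the $(1+\eps)$-approximation version of the APSP algorithm of \cite{Censor-Hillel2019b}, which runs in \smash{$\tilO(n^{\rho + o(1)}/\eps)$} rounds; this corresponds to the parameter choice $\alpha = 1+o(1)$, $\beta = 0$, $\delta = \rho + o(1)$, and $\eta = 1/\eps$. The value of $x$ is unchanged, hence the base runtime is $T_\calB \in \tilO(\eta  n^{1-x}) = \tilO(n^{0.397}/\eps)$, again with the \smash{$\tilO(\sqrt{k})$} additive term for $k$ sources. The unweighted approximation guarantee from Theorem \ref{thm:cc-sp-sim} is \smash{$\alpha + 2/\eta + \beta/T_\calB = (1+o(1)) + 2\eps$}, which is $1 + O(\eps)$ and becomes $(1+\eps)$ after rescaling $\eps' := c\eps$ for a suitable constant $c$.

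The main obstacle is book-keeping of the $o(1)$ and $n^{o(1)}$ factors inherited from the matrix-multiplication based CC algorithm: we must absorb them into $\eta$ without breaking the constancy assumption on $\delta$ required by Theorem \ref{thm:cc-sp-sim}. This is handled by taking $\delta$ to be any fixed constant strictly greater than $\rho$, e.g.\ $\delta = 0.158$, so that $1-x \le 0.397$, and pushing the $n^{o(1)}$ factor inside the $\tilO(\cdot)$-notation; once this is done, the two approximation bounds fall out of Theorem \ref{thm:cc-sp-sim} identically to the previous two corollaries.
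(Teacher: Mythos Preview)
Your proposal is correct and follows essentially the same route as the paper: instantiate Theorem \ref{thm:cc-sp-sim} with the $(1+o(1))$-approximate APSP algorithm of \cite{Censor-Hillel2019b} having $\delta \approx \rho < 0.15715$, $\alpha = 1+o(1)$, $\beta = 0$, and $\gamma = 1$; the runtime exponent $1-x = (1+2\delta)/(3+2\delta) < 0.397$ and the weighted factor $2\alpha+1 = 3+o(1)$ match exactly.

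The one small divergence is in how you justify $\eta = 1/\eps$ for the unweighted case. You invoke a separate ``$(1+\eps)$-approximation version'' of the \CC algorithm with runtime $\tilO(n^{\rho+o(1)}/\eps)$, but the paper does not rely on such a variant existing. Instead it uses the \emph{same} $(1+o(1))$-approximate \CC algorithm and simply inflates $\eta$ to $1/\eps$ artificially: since Theorem \ref{thm:cc-sp-sim} only requires $T_\calA \in \tilO(\eta n^\delta)$, one is always free to take $\eta$ larger than necessary, which buys a deeper local exploration (depth $\eta h$) and hence shrinks the $2/\eta$ term in the unweighted approximation bound to $2\eps$. This is a cleaner justification because it avoids citing a \CC result in a form that may not be stated in \cite{Censor-Hillel2019b}; otherwise your computation $\alpha + 2/\eta \leq 1 + o(1) + 2\eps \leq 1 + 3\eps$ (for $n$ large) and the rescaling $\eps' := 3\eps$ are identical to the paper's.
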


\begin{proof}
	We characterize the above algorithm with $\delta = \rho < 0.15715$, $\eta = 0$ and $\alpha = 1 \p o(1) \leq 1 \p \varepsilon$ (for sufficiently large $n$) and $\beta = 0$. We (somewhat artificially but w.l.o.g.) increase $\eta = 1/\eps$ for the unweighted case (to do a more extensive local exploration). Then the runtime is \smash{$\tilO\big(n^{0.397}/\eps + \sqrt k\big)$} in the unweighted and \smash{$\tilO\big(n^{0.397} \!+ \sqrt k\big)$} in the weighted case.
	On unweighted graphs the approximation is \smash{$\alpha \p \tfrac{2}{\eta} \leq 1 \p 3 \eps \!\stackrel{\eps' := 3 \eps}{=}\! 1 \p \eps'$}. On weighted graphs we have \smash{$2 \alpha \p 1 = 3 \p o(1)$}.
\end{proof}

The last corollary is derived from \cite{Censor-Hillel2019a} Theorem 5.2, stating that SSSP can be solved \textit{exactly} in the \CC model in $\tilO(n^{1/6})$ rounds. As we have shown in the proof of Theorem \ref{thm:cc-sp-sim}, we can conserve this approximation ratio (or rather lack thereof) in case of a single source, which gives us an exact algorithm for the \HYBRID model. %We emphasize that the runtime of $\tilO(n^{2/5})$ improves the state of the art for large diameter graphs, as the best known exact SSSP algorithm from \cite{Augustine2020} takes time $\tilO\big(\sqrt{\SPD}\big)$, where the shortest path diameter $\SPD$ can potentially be as large as $n$.

\begin{cor}[Theorem \ref{thm:SSSP}]
	There is a \HYBRID algorithm that solves SSSP exactly in $\tilO(n^{2/5})$ rounds, w.h.p.
\end{cor}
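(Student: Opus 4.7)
The plan is to derive this corollary by directly applying the general simulation framework of Theorem \ref{thm:cc-sp-sim} to the exact \CC-model SSSP algorithm of \cite{Censor-Hillel2019a} (Theorem 5.2), which runs in $\tilO(n^{1/6})$ rounds. No new algorithmic idea is required beyond matching up parameters and verifying that the SSSP special case of Theorem \ref{thm:cc-sp-sim} preserves exactness.

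First, I would identify the parameters of the input \CC algorithm $\calA$ in the notation of Theorem \ref{thm:cc-sp-sim}. Since $\calA$ solves SSSP, we set $\gamma = 0$. Since its runtime is $\tilO(n^{1/6})$, we take $\delta = 1/6$ and $\eta = 1$. Since it is exact, we have multiplicative factor $\alpha = 1$ and additive factor $\beta = 0$. Plugging into the formula $x = \tfrac{2}{3 + 2\delta}$ from Theorem \ref{thm:cc-sp-sim} gives $x = \tfrac{2}{3 + 1/3} = \tfrac{3}{5}$, so the resulting \HYBRID algorithm $\calB$ has runtime
\[
T_\calB \in \tilO\bigl(\eta\, n^{1-x}\bigr) = \tilO\bigl(n^{2/5}\bigr).
\]

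Next I would invoke the SSSP-specific clause of Theorem \ref{thm:cc-sp-sim}, which guarantees an approximation factor of $(\alpha + \beta/T_\calB)$ on weighted graphs when $\gamma = 0$. With $\alpha = 1$ and $\beta = 0$ this evaluates to $1$, i.e.\ the output is exact. The reason this clause is available at all, as outlined in the discussion preceding Theorem \ref{thm:cc-sp-sim}, is that a single extra source can be forced into the skeleton $V_\calS$ without harming the helper-set construction of Section \ref{sec:token-routing}: each helper ends up responsible for at most one additional node, so the token-routing runtime (and hence Corollary \ref{cor:cc-sp-sim}) is unaffected. Consequently there is no detour through a representative, and the error bound from $\calA$ passes through unchanged to $\calB$.

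The main step that requires any care is ensuring that the SSSP-specialization of Theorem \ref{thm:cc-sp-sim} genuinely applies here, i.e.\ that the local-network exploration phase of Algorithm \ref{alg:cc-sp-sim} together with the simulated run of $\calA$ on $\calS$ together cover every $v$-to-source shortest path, regardless of its hop length. For paths of at most $\eta h = \tilO(n^{2/5})$ hops, the local exploration yields exact distances via Fact \ref{fct:cc-sp-sim}. For longer paths, Lemma \ref{lem:long-shortest-paths-skeleton} guarantees that a shortest path decomposes as a prefix of at most $h$ hops into $V_\calS$, a path inside $\calS$, and (since the source itself lies in $V_\calS$ when $\gamma = 0$) nothing more; the prefix length is known exactly from the local exploration and the inner distance is computed exactly by $\calA$. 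Taking the minimum over the at most $|V_\calS|$ skeleton entry points, as in Equation \eqref{eq:cc-sp-sim} with $r_s = s$, yields $d(v,s)$ exactly. Combining this with the $\tilO(n^{2/5})$ runtime bound and a union bound over the high-probability events (sampling of $V_\calS$, correctness of $\calA$, helper-set construction, and token routing) gives the claimed w.h.p.\ statement.
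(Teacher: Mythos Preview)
Your proposal is correct and matches the paper's approach: the paper's proof is the one-liner ``The SSSP algorithm from \cite{Censor-Hillel2019a} can be parametrized by $\gamma = 0$, $\delta = 1/6$, $\eta = 1$ and $\alpha = \beta = 0$,'' i.e., exactly the parameter instantiation of Theorem \ref{thm:cc-sp-sim} that you carry out (your $\alpha = 1$ is the correct value; the paper's $\alpha = 0$ is a typo). Your additional paragraphs expanding on why the SSSP clause preserves exactness are faithful to the argument the paper gives in the supporting lemma for the $\gamma = 0$ case.
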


\begin{proof}
	The SSSP algorithm from \cite{Censor-Hillel2019a} can be parametrized by $\gamma = 0$, $\delta = 1/6$, $\eta = 1$ and $\alpha = \beta = 0$.
\end{proof}

	\section{Simulation Results for Diameter}
\label{sec:cc-diam-sim}

In the distributed diameter problem all nodes must learn the parameter $D = D(G)$, which equals the maximum hop-distance $hop(u,v)$ for any pair $u,v\in V$. There are fast \CC algorithms known for the \textit{weighted} diameter problem and the \textit{weighted} APSP problem (which obviously solves the weighted diameter problem). We aim to apply the same technique of simulating the \CC on a skeleton graph of a \HYBRID network to the diameter problem.

The core idea is relatively simple. If $D$ is large, then the diameter of the skeleton will give a good approximation of the diameter of the whole graph. Therefore, computing a good approximation on the skeleton using a known \CC algorithm will give a decent approximation for the \HYBRID model. If, on the other hand, $D$ is relatively small we can use the local network to compute $D$ exactly.
The following theorem formalizes the result in a more general way.

\begin{thm}
	\label{thm:cc-diam-sim}
	Let $\calA$ be an $(\alpha,\beta)$-approximation algorithm for the \CC model for the \emph{weighted} diameter with runtime \smash{$T_\calA \in \tilO(\eta n^\delta)$} for const.\ $\delta \geq 0, \eta \geq 1$. Let ${x} \!:=\! \frac{2}{3 + 2 \delta}$. 	
	Then there is an algorithm $\calB$ for the \HYBRID model with runtime \smash{$T_\calB \in \tilO\big(\eta n^{1-{x}}\big)$} that gives an \smash{$\big(\alpha \!+\! \tfrac{2}{\eta} \!+\! \tfrac{\beta}{T_\calB}\big)$}-approximation of the \emph{unweighted} diameter $D$ w.h.p.
\end{thm}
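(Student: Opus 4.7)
The plan is to reuse the skeleton-plus-simulation template from the proof of Theorem~\ref{thm:cc-sp-sim} almost verbatim and attach a small amount of diameter-specific glue. First I would construct the skeleton graph $\calS = (V_\calS, E_\calS)$ with sampling probability $1/n^{1-x}$ for $x = 2/(3+2\delta)$ and hop-threshold $h \in \tilO(n^{1-x})$, so that Lemma~\ref{lem:long-paths-skeleton} guarantees a sampled node on every shortest path at least every $h$ hops. In parallel, every node floods graph information locally to depth $T_\calB = \Theta(\eta h)$; because $G$ is unweighted, this suffices for each $v$ to learn its exact hop-eccentricity whenever that eccentricity is at most $T_\calB$.

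The second step is to simulate $\calA$ on $\calS$ via Corollary~\ref{cor:cc-sp-sim}, producing an estimate $\tilde D_\calS$ with $D_\calS \le \tilde D_\calS \le \alpha D_\calS + \beta$. A single broadcast in $\tilO(1)$ rounds disseminates this number, and a standard NCC max-aggregation computes $M := \max_v e(v)$ restricted to nodes whose eccentricity is known to be at most $T_\calB$ (with a sentinel value used otherwise). The final output is then determined by a case distinction: if $M \le T_\calB$, every node outputs $M$, which equals $D$ exactly; otherwise every node outputs $\tilde D := \tilde D_\calS + 2h$.

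The key lemma driving the second case is the sandwich
\[
D - 2h \;\le\; D_\calS \;\le\; D.
\]
The upper inequality is immediate because edges of $\calS$ are weighted by true $G$-distances and, via Lemma~\ref{lem:long-shortest-paths-skeleton}, weighted shortest paths in $\calS$ between two skeleton nodes coincide with hop-shortest paths in $G$. For the lower inequality I would take a diametral pair $u^\ast, v^\ast$ in $G$ and invoke Lemma~\ref{lem:long-paths-skeleton} to find skeleton nodes $u', v'$ within $h$ hops of $u^\ast$ and $v^\ast$ on a shortest $u^\ast$-$v^\ast$ path, yielding $hop_G(u', v') \ge D - 2h$. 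Combining this sandwich with the guarantee on $\tilde D_\calS$ gives $D \le \tilde D \le \alpha D + \beta + 2h$, and the assumptions $D > T_\calB \ge \eta h$ convert both additive slacks into the claimed multiplicative factors $\beta / T_\calB$ and $2/\eta$.

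The runtime accounting is identical to that of Theorem~\ref{thm:cc-sp-sim}: local exploration costs $\tilO(\eta n^{1-x})$, and the simulated execution of $\calA$ on the $\tilO(n^x)$-node skeleton costs $\tilO(\eta n^{x\delta + x/2}) = \tilO(\eta n^{1-x})$ by the choice $x = 2/(3+2\delta)$, while the broadcast of $\tilde D_\calS$ and the max-aggregation contribute only polylogarithmic overhead. The main conceptual obstacle is the small-diameter regime: neither Lemma~\ref{lem:long-paths-skeleton} nor the sandwich above is useful when $D < h$, and the additive error $\beta$ of $\calA$ would ruin the approximation ratio if we naively plugged in the skeleton estimate. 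Routing around this by thresholding at $T_\calB$ and falling back to an exact local computation whenever $D \le T_\calB$ is what makes the bookkeeping come out cleanly.
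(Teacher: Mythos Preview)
Your proposal is correct and follows essentially the same approach as the paper: construct the skeleton with the same sampling parameter, simulate $\calA$ on it, run a local exploration to depth $\Theta(\eta h)$, aggregate the locally observed maximum hop-distance, and decide via the threshold whether to output the exact local value or $\tilD_\calS + 2h$; the sandwich $D - 2h \le D_\calS \le D$ and the additive-to-multiplicative conversion using $D > \eta h$ are exactly the paper's argument. The only cosmetic differences are that the paper disseminates $\tilD(\calS)$ via the same local flooding (rather than a separate $\tilO(1)$ broadcast) and phrases the aggregated quantity as $\widehat h = \max_v h_v$ with $h_v$ the largest hop-distance seen in the $(\eta h{+}1)$-ball, which is equivalent to your sentinel formulation.
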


\shrt{The diameter algorithm as well as the full proof of Theorem \ref{thm:cc-diam-sim} are given in the full paper \cite{KS20}. We provide a brief summary of both.}\lng{We give a brief summary of the algorithm and its proof before going into the details.} As in Section \ref{sec:cc-sp-sim}, we first compute a skeleton graph $\calS =(V_\calS, E_\calS)$ of appropriate size $\Theta(n^{x})$ (where \smash{${x} := \frac{2}{3 + 2 \delta}$} balances the trade-off between the runtime of the simulation and the runtime for the local exploration of $G$). Second, we simulate the \CC algorithm $\calA$ on $\calS$, which gives each node in $V_\calS$ an approximation $\tilD(\calS)$ of $D(\calS)$. 

Third, we let nodes exchange information via the local model for $\eta h \p 1$ rounds (where \smash{$h \in \tilO(n^{1-x})$} is the maximum length of skeleton edges). This achieves two things. (I) it floods $\tilD(\calS)$ to all nodes (recall that each node has a skeleton within $h$ hops w.h.p.). (II) each node learns its largest hop distance $h_v := \max_{w \in N_v} \leq \eta h \!+\! 1$ to any node $w$ in $v$'s $(\eta h \!+\! 1)$-hop neighborhood $N_v$. Fourth, we apply a standard technique from \cite{augustine19_NCC} (given in Lemma \ref{lem:aggregationProtocol}) to aggregate the maximum value out of $n$ values $h_u, u \in V$ (one held by each node) and to announce the biggest value \smash{$\widehat h = \max_{u \in V}h_u$} to all nodes in \smash{$\tilO(1)$} rounds. Finally, all nodes can compute an approximation $\tilD$ of $D(G)$ by setting \smash{$\tilD := \widehat h$} if \smash{$\widehat h \leq \eta h$ and $\tilD := \tilD(\calS)+2h$} else.

For the correctness, the important observation is that the diameter of the skeleton can be at most $2h$ (the maximum length of two skeleton edges) shorter than the diameter of $G$ and we make up for that by adding $2h$ to the estimation $\tilD(\calS)$. This gives us an additive error of at most $2h \p \beta$, which does not impact our approximation much if we turn it into a multiplicative error, using that we know $D$ exactly if $D \leq \eta h$. We move on to the application of Theorem \ref{thm:cc-diam-sim}.

\begin{longversion}

\subsection{Diameter Simulation - Algorithm and Proof}
\label{apx:cc-diam-sim}

We use the same basic approach of simulating the \CC model on a skeleton, thus the following algorithm is quite similar to the one of the previous section and reuses the according sub-procedures.

\begin{algorithm}[H]
	\caption{\texttt{Diam-Simulation($\calA, \delta, \eta$)} \Comment{\textit{$\delta, \eta, \calA$ as in Th.\ \ref{thm:cc-diam-sim}}}}
	\label{alg:cc-diam-sim}
	\begin{algorithmic}
		\State ${x} \gets \frac{2}{3 + 2 \delta}$ \Comment{\textit{${x}$ optimizes overall runtime}}
		\State \texttt{Compute-Skeleton($\bot,{x}$)} \Comment{\textit{compute $\calS\!=\!(V_\calS,\!E_\calS),|V_\calS| \!\in\! \tilO(n^{x}\!)$}}
		\State \texttt{Clique-Simulation($\calA,x$)} \Comment{\textit{run $\calA$ on $\calS$}}
		\For{$\eta  h + 1$ rounds} \Comment{\textit{$v$ learns approx.\ $\tilD(\calS)$ of $D(\calS)$}}
		\State $v$ forwards diameter approximations via local edges
		\EndFor
		\State $h_v \gets \max_{u \in N_{\eta h + 1}(v)} hop(v,u)$ \Comment{\textit{largest hop-dist.\ $v$ sees locally}}
		\State $v$ participates in aggregation for $\widehat h = \max_{u \in V}h_u$ %\Comment{\textit{protocol of \cite{augustine19_NCC}, c.f.\ Lemma \ref{lem:aggregationProtocol}}}
		\State $v$ computes approx.\ $\tilD$ of $D(G)$ with Equation \eqref{eq:cc-diam-sim}
	\end{algorithmic}
\end{algorithm}

\begin{fact}
	\label{fct:cc-diam-sim} 
	With the same arguments as in Lemma \ref{lem:cc-sim-runtime} the runtime of Algorithm \ref{alg:cc-diam-sim} is $\tilO(\eta n^{1- x})$.	
	Algorithm \ref{alg:cc-diam-sim} first computes a skeleton with the properties outlined in Fact \ref{fct:compute-skeleton}. The simulation of a diameter algorithm $\calA$ makes an $(\alpha,\beta)$-approximation $\tilD(\calS)$ of $D(\calS)$ known to all skeleton nodes. The subsequent local exploration of graph $G$ for $\eta h \!+\! 1$ rounds accomplishes two things. First, it spreads \smash{$\tilD(\calS)$} to all nodes (recall that each node has a skeleton within $h$ hops w.h.p.). Second, each node learns the largest hop distance $h_v \leq \eta h \!+\! 1$ to any node $u$ in $v$'s $(\eta h \!+\! 1)$-neighborhood. We apply the aggregation scheme of \cite{augustine19_NCC} (c.f., Lemma \ref{lem:aggregationProtocol}) to aggregate the maximum value out of $n$ values $h_u, u \in V$ (one held by each node) and announce the biggest value \smash{$\widehat h = \max_{u \in V}h_u$} to all nodes in $\tilO(1)$ rounds. Finally, each node computes an approximation of $\tilD$ of $D$ as follows	
	\begin{equation}
	\label{eq:cc-diam-sim}
	\tilD := 
	\begin{cases}
	\widehat h, & \text{if } \widehat h \leq \eta h\\
	\tilD(\calS)+2h, &\text{else}
	\end{cases}
	\end{equation}
\end{fact}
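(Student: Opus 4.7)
The plan is to verify, in order, the runtime and algorithmic behavior asserted in Fact \ref{fct:cc-diam-sim}, and then to use them to establish the approximation bound of Theorem \ref{thm:cc-diam-sim}. The runtime analysis is almost verbatim from Lemma \ref{lem:cc-sim-runtime}: constructing $\calS$ and flooding for $\eta h + 1$ rounds each cost $\tilO(\eta n^{1-x})$, simulating $\calA$ costs $T_\calA$ multiplied by the per-round cost $\tilO(n^{2x-1}+n^{x/2})$ from Corollary \ref{cor:cc-sp-sim}, and for $x=2/(3+2\delta)$ a short exponent calculation gives $x\delta + x/2 = 1-x$, so the whole simulation also fits in $\tilO(\eta n^{1-x})$ rounds. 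The final aggregation via Lemma \ref{lem:aggregationProtocol} adds only polylogarithmic overhead.

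For the remaining algorithmic content of the Fact I would check (i) that after the $(\eta h + 1)$-round flood every node has learned $\tilD(\calS)$ --- by Lemma \ref{lem:long-paths-skeleton} every $v$ has a skeleton node within $h \leq \eta h + 1$ hops, and that skeleton node already knows $\tilD(\calS)$ from the simulation --- and (ii) that $h_v$ is correctly the maximum hop-distance $v$ sees inside its $(\eta h + 1)$-neighborhood, so $\widehat h = \max_v h_v \leq \eta h + 1$ is well-defined and can be aggregated in $\tilO(1)$ rounds via Lemma \ref{lem:aggregationProtocol}.

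The heart of the argument is the approximation guarantee, which I would split on $\widehat h$. If $\widehat h \leq \eta h$, then every $v$'s BFS to depth $\eta h + 1$ revealed no vertex at depth exactly $\eta h + 1$; by connectedness of $G$ this forces the eccentricity of $v$ to be at most $\eta h$, hence $D = \max_v \mathrm{ecc}(v) = \widehat h$ and the output $\tilD = \widehat h$ is exact. If instead $\widehat h > \eta h$, then $D \geq \widehat h > \eta h$ and I would establish the sandwich
\begin{equation*}
D - 2h \;\leq\; D(\calS) \;\leq\; D.
\end{equation*}
The upper bound follows from the identity $d_\calS(u,v) = hop_G(u,v)$ for $u,v \in V_\calS$, which holds because skeleton-edge weights are defined as $d_h(\cdot,\cdot) = hop_G(\cdot,\cdot)$ in the unweighted setting and because Lemma \ref{lem:long-shortest-paths-skeleton} exhibits a realizing skeleton path. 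For the lower bound I take a diametral pair $u^\star,v^\star$ in $G$ and apply Lemma \ref{lem:long-paths-skeleton} to obtain skeleton entry-nodes $s_u, s_v$ on a shortest $u^\star$-$v^\star$-path within $h$ hops of $u^\star$ and $v^\star$, giving $d_\calS(s_u,s_v) \geq D - 2h$ w.h.p.

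Combining with the $(\alpha,\beta)$-guarantee $D(\calS) \leq \tilD(\calS) \leq \alpha D(\calS) + \beta$ yields $\tilD = \tilD(\calS) + 2h \geq D$ on one side and $\tilD \leq \alpha D + \beta + 2h$ on the other. Dividing the latter by $D > \eta h$ turns the additive $\beta + 2h$ into the multiplicative $\beta/(\eta h) + 2/\eta$, and since $T_\calB = \Theta(\eta h)$ up to $\polylog n$ factors this is exactly the claimed $\alpha + 2/\eta + \beta/T_\calB$ approximation. The main subtlety I foresee lies in the case $\widehat h \leq \eta h$: one must argue carefully that the local bound $h_v \leq \eta h$ genuinely certifies $\mathrm{ecc}(v) \leq \eta h$ (and not merely that $v$ sees no node at depth exactly $\eta h + 1$), which is where connectedness of $G$ must be invoked; beyond that, everything is bookkeeping on top of the properties of $\calS$ established in Section \ref{apx:skeleton-graphs}.
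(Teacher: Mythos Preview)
Your proposal is correct and essentially mirrors the paper's own argument. The Fact itself is stated without proof in the paper (as a self-evident summary of the algorithm's behavior), and the approximation guarantee you derive is exactly the content of the Lemma immediately following it; your sandwich $D-2h \leq D(\calS) \leq D$, your use of Lemmas~\ref{lem:long-paths-skeleton} and~\ref{lem:long-shortest-paths-skeleton}, and your conversion of the additive $\beta+2h$ into a multiplicative $\beta/(\eta h)+2/\eta$ all match the paper's proof step for step. The only cosmetic difference is that you split cases on $\widehat h$ while the paper splits on $D$, but since both proofs establish the equivalence $\widehat h \leq \eta h \Leftrightarrow D \leq \eta h$ (you via the connectedness argument you flag as the main subtlety, the paper via the contrapositive ``if $D$ were larger then $\widehat h = \eta h+1$''), the two presentations are interchangeable.
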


The runtime is already accounted for in Fact \ref{fct:cc-diam-sim}. The following lemma completes the proof of Theorem \ref{thm:cc-diam-sim}.

\begin{lem}
	Algorithm \ref{alg:cc-diam-sim} computes an \smash{$\big(\alpha \!+\! \tfrac{2}{\eta} \!+\! \tfrac{\beta}{T_\calB}\big)$}-approximation of the diameter $D$ of an unweighted graph $G$ w.h.p.
\end{lem}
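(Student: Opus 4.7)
The plan is to split the argument into two cases based on the magnitude of the unweighted diameter $D$ relative to the local-exploration depth. As noted at the end of the previous section's proof, we may assume that the exploration in the local network is carried out for the entire runtime $T_\calB \in \tilO(\eta h)$ of $\calB$ rather than just $\eta h+1$ rounds; the aggregation and skeleton simulation happen in parallel, so every node can afford to learn its $T_\calB$-neighborhood exactly. With this in mind I would first treat the easy case $D \le T_\calB$: then for any pair $u,v$ realizing the diameter, $u$ lies in the $T_\calB$-hop neighborhood of $v$, hence $h_v \ge D$; conversely $h_v \le D$ by definition, so $\widehat h = D$. In Equation \eqref{eq:cc-diam-sim} the first branch triggers and $\tilD = D$ exactly.

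The interesting case is $D > T_\calB \ge \eta h$. Here I would first argue that $\widehat h \le \eta h$ cannot hold deterministically while the diameter is so large---actually, a cleaner approach is to show that the output of the second branch is always a valid upper bound and to also show, via $D>T_\calB$, that the first branch (if it triggers) would still be $\leq D$, so it suffices to bound $\tilD(\calS)+2h$. The key structural claim is $D - 2h \le D(\calS) \le D$. The upper bound is immediate: any weighted $\calS$-path of weight $w$ corresponds to an actual walk of hop-length $w$ in $G$ (the skeleton-edge weights are exact $h$-hop distances in the unweighted $G$), so no two skeleton nodes can have weighted distance exceeding their hop distance in $G$, which is $\le D$. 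For the lower bound, take endpoints $u,v$ with $hop(u,v)=D$ and a shortest $u$-$v$-path $P$. By Lemma \ref{lem:long-paths-skeleton}, applied at both ends of $P$, there are skeleton nodes $s_1,s_2$ on $P$ with $hop(u,s_1)\le h$ and $hop(s_2,v)\le h$, hence their sub-path on $P$ has $D-2h$ hops. By Lemma \ref{lem:long-shortest-paths-skeleton}, their weighted distance in $\calS$ equals this hop-distance, giving $D(\calS)\ge D-2h$.

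Combining with the $(\alpha,\beta)$-approximation guarantee of $\calA$,
\[
D \;\le\; D(\calS)+2h \;\le\; \tilD(\calS)+2h \;\le\; \alpha\, D(\calS)+\beta+2h \;\le\; \alpha D + \beta + 2h.
\]
Dividing by $D$ and using $D>T_\calB\ge \eta h$ yields the claimed ratio
\[
\frac{\tilD}{D} \;\le\; \alpha + \frac{2h}{D} + \frac{\beta}{D} \;\le\; \alpha + \frac{2}{\eta} + \frac{\beta}{T_\calB}.
\]
The w.h.p.\ qualifier is inherited from the underlying skeleton-construction lemmas (Lemmas \ref{lem:long-paths-skeleton} and \ref{lem:long-shortest-paths-skeleton}) and from the $\calA$-simulation in $\calS$ via Corollary \ref{cor:cc-sp-sim}; a single union bound over these polynomially many bad events suffices.

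The main subtlety---and what I expect to be the only place requiring care---is the two-sided relationship $D-2h \le D(\calS) \le D$ between the weighted skeleton diameter and the unweighted graph diameter, because the skeleton is a weighted graph on a sparse node set but $G$ is unweighted. This is why the lemma is stated only for unweighted $G$: the equality between skeleton-edge weights and hop counts in $G$ is crucial for both inequalities. Everything else (case split on $\widehat h$, aggregation correctness, turning the additive slack $2h+\beta$ into the multiplicative error $2/\eta+\beta/T_\calB$) is a routine unrolling of the algorithm.
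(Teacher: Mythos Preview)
Your argument is essentially the same as the paper's: establish the sandwich $D-2h \le D(\calS) \le D$ via Lemmas~\ref{lem:long-paths-skeleton} and~\ref{lem:long-shortest-paths-skeleton}, chain it with the $(\alpha,\beta)$-guarantee of $\calA$, and divide through by $D > \eta h$ (or $T_\calB$) to convert the additive slack into the multiplicative term. That part is correct and complete.

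There is one slip in your handling of the branch in Equation~\eqref{eq:cc-diam-sim}. In the large-diameter case you propose to show that ``the first branch (if it triggers) would still be $\le D$'' and conclude it suffices to analyze the second branch. But an $\alpha'$-approximation here means $D \le \tilD \le \alpha' D$; an output strictly below $D$ is not valid. So you cannot dismiss the first branch by observing its output is $\le D$---that would be an underestimate. What you actually need (and what you started to say before switching approaches) is that when $D > \eta h$ the first branch \emph{cannot} trigger: since the exploration runs for $\eta h + 1$ rounds, any node on a diameter-realizing path sees a neighbor at hop distance $\eta h + 1$, so $\widehat h = \eta h + 1 > \eta h$ and the algorithm necessarily falls through to the second branch. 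The paper makes exactly this point. Restore your original instinct and the proof is complete.
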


\begin{proof}
	As summarized in Fact \ref{fct:cc-diam-sim}, each node already knows an \smash{$(\alpha,\beta)$}-approximation \smash{$\tilD(\calS)$} of the diameter of $\calS$ and the maximum hop distance \smash{$\widehat h$} that any node sees in its $(\eta h \!+\! 1)$-neighborhood. First, assume $D \leq \eta h$. By definition there is a pair of nodes $u,v \in V$ with $hop(u,v) = D$ and therefore \smash{$\tilD=\widehat h = D$} by Equation \eqref{eq:cc-diam-sim} and the definition of $\widehat{h}$. We would like to point that if \smash{$\widehat h \leq \eta h$}, then all nodes can be sure that $D \leq \eta h$ and thus \smash{$\widehat{h} =D$}. Because if $D$ would be larger, then \smash{$\widehat{h} = \eta h \!+\! 1$} and the nodes use the diameter approximation $\tilD = \tilD(\calS) + 2h$ instead, by Equation \eqref{eq:cc-diam-sim}.
	
	Consequently, we now have to look at the case $D > \eta h$. We analyze how good the approximation $\tilD = \tilD(\calS)+2h$ is for $G$. By definition we know $D(\calS) \leq \tilD(\calS) \leq \alpha D(\calS) + \beta$. Next, we relate the diameter of the skeleton $\calS$ to the diameter of $G$.	
	Let $P$ be a shortest path in $G$ with $|P| = D$. By Lemma \ref{lem:long-paths-skeleton} there are two skeleton nodes $u,v \in V_\calS$ within the first $h$ hops and the last $h$ hops of $P$ w.h.p. By Lemma \ref{lem:long-shortest-paths-skeleton} we have $d_\calS(u,v) = d(u,v) = hop(u,v)$. Consequently, $D(\calS) \geq d(u,v) \geq |P|-2h = D - 2h$. As the skeleton graph preserves distances (Lemma \ref{lem:long-shortest-paths-skeleton}) we have $D(\calS) \leq D$. Thus
	\[
	D \leq D(\calS) + 2h \leq \tilD(\calS) + 2h = \tilD \leq \alpha D(\calS) + \beta + 2h \leq \alpha D + \beta + 2h.
	\]
	Note that the computed approximation $\tilD$ in the middle is squeezed from above and below by expressions in $D$. Since we know that we have computed the diameter exactly if it was smaller than $\eta h$, we can apply the same trick as in the previous section and turn the additive approximation error into a smaller multiplicative one.
	\[
	\tilD \leq \alpha D + \beta + 2h \leq \big(\alpha \!+\! \tfrac{\beta}{D} \!+\! \tfrac{2h}{D}\big) D \leq \big(\alpha \!+\! \tfrac{\beta}{\eta h} \!+\! \tfrac{2}{\eta}\big)D
	\]
	Finally we point out that we can do the local exploration for $T_\calB+1$ rounds without changing the asymptotic runtime of the algorithm (since $T_B \in \tilT(\eta h)$ anyway). Hence we can get rid of the parameter $h$ in our expression: \smash{$\tilD \leq \big(\alpha \!+\! \tfrac{\beta}{T_\calB} \!+\! \tfrac{2}{\eta}\big)D$}.
\end{proof}

\end{longversion}

\subsection{Implied Results for Computing Diameter}

First, we apply a dedicated diameter \CC algorithm from \cite{Censor-Hillel2019a}. It achieves a $\big(\tfrac{3}{2} \!+\! \eps, W\big)$-approximation in $\tilO\big(\tfrac{1}{\eps}\big)$ rounds, where $W$ is the maximum edge weight. (Note that this and the subsequent corollary prove Theorem \ref{thm:diameter}.)

\begin{cor}
	There is a \HYBRID algorithm that computes a \smash{$(\tfrac{3}{2}\!+\!\eps)$}-approximation of $D(G)$ in \smash{$\tilO\big(\tfrac{n^{1/3}}{\eps}\big)$} rounds w.h.p.
\end{cor}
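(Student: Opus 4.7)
The plan is to invoke Theorem \ref{thm:cc-diam-sim} with the diameter \CC algorithm $\calA$ of \cite{Censor-Hillel2019a} cited just before the corollary, which computes a $(\tfrac{3}{2}\p\eps, W)$-approximation of the weighted diameter in $\tilO(1/\eps)$ rounds. Matching this to the template of Theorem \ref{thm:cc-diam-sim}, I read off the parameters $\alpha = \tfrac{3}{2}\p\eps$, $\beta = W$, $\delta = 0$, and $\eta = 1/\eps$, so that $T_\calA \in \tilO(\eta n^\delta) = \tilO(1/\eps)$. With these values, the theorem fixes $x = \tfrac{2}{3\p2\delta} = \tfrac{2}{3}$, which in turn yields the target runtime $T_\calB \in \tilO(\eta n^{1-x}) = \tilO(n^{1/3}/\eps)$.

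Next, I need to collapse the approximation guarantee $\alpha \p \tfrac{2}{\eta} \p \tfrac{\beta}{T_\calB}$ supplied by the theorem into the claimed $(\tfrac{3}{2}\p\eps)$ factor for the unweighted diameter $D(G)$. The first summand is $\tfrac{3}{2}\p\eps$, the second is $2\eps$, and for the third I use that $G$ is unweighted, so $W = \beta = 1$, making $\beta/T_\calB$ a $o(1)$ term (in fact $\tilO(\eps/n^{1/3})$). Summing gives $\tfrac{3}{2} \p 3\eps \p o(1)$, and a straightforward rescaling $\eps' := 4\eps$ (absorbing the lower-order term for all sufficiently large $n$) brings the factor down to $\tfrac{3}{2}\p\eps'$, as required. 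Because $\eps$ only appears polynomially in the runtime, this constant-factor rescaling of $\eps$ does not change the $\tilO(n^{1/3}/\eps)$ complexity.

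I do not anticipate any real obstacle here: this corollary is essentially a plug-in of parameters into the machinery already set up by Theorem \ref{thm:cc-diam-sim}. The only bookkeeping point worth flagging is the conversion of the multiplicative-plus-additive guarantee of $\calA$ into a purely multiplicative one; this is exactly the role played by the local exploration to depth $\eta h \in \tilT(T_\calB)$ inside the diameter-simulation algorithm, which ensures that whenever $D \leq \eta h$ the diameter is computed exactly, and whenever $D > \eta h$ the additive terms $\beta$ and $2h$ are dominated by $D$ and fold into a small multiplicative slack. Thus the stated upper bound follows directly from Theorem \ref{thm:cc-diam-sim} applied to the algorithm of \cite{Censor-Hillel2019a}.
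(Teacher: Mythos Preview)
Your overall approach matches the paper's: plug the parameters $\alpha = \tfrac{3}{2}+\eps$, $\beta = W$, $\delta = 0$, $\eta = 1/\eps$ into Theorem~\ref{thm:cc-diam-sim}, read off $x=2/3$ and $T_\calB \in \tilO(n^{1/3}/\eps)$, and then rescale $\eps$ by a constant factor. That part is fine.

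There is, however, a genuine slip in your treatment of the additive term $\beta/T_\calB$. You write that ``$G$ is unweighted, so $W = \beta = 1$,'' but $\calA$ is not executed on $G$; it is simulated on the \emph{skeleton} $\calS$, whose edges carry weights $d_h(u,v)$. Even when $G$ is unweighted, these skeleton edge weights can be as large as $h \in \tilO(n^{1-x}) = \tilO(n^{1/3})$, so $W$ is not $1$ but at most $h$. The paper makes exactly this observation (``Since $G$ is unweighted, the maximum edge weight in the skeleton $\calS$ is $h$'') and then bounds
\[
\tfrac{\beta}{T_\calB} \;\leq\; \tfrac{W}{\eta h} \;\leq\; \tfrac{h}{h/\eps} \;=\; \eps,
\]
arriving at $\tfrac{3}{2} + 4\eps$ before rescaling. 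Your $o(1)$ bound for this term is therefore unjustified. Fortunately the correct bound is still $\eps$, and your chosen rescaling $\eps' := 4\eps$ already has enough slack to absorb it, so the final statement survives --- but the reasoning as written needs this correction.
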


\begin{proof} The algorithm of \cite{Censor-Hillel2019a} can be characterized with $\delta = 0, \eta = 1/\eps$ and $(\alpha,\beta) = \big(\tfrac{3}{2} \!+\! \eps, W\big)$. The runtime follows immediately from Theorem \ref{thm:cc-diam-sim}. Since $G$ is unweighted, the maximum edge weight in the skeleton $\calS$ is $h$. Thus, the approx.\ factor is
	\[
		\alpha + \tfrac{2}{\eta} + \tfrac{\beta}{T_\calB} \leq \tfrac{3}{2} + \eps + 2 \eps + \tfrac{W}{\eta h} \leq  \tfrac{3}{2} + 3 \eps + \tfrac{h}{h/\eps} = \tfrac{3}{2} + 4 \eps \stackrel{\eps' := 4 \eps}{=} \tfrac{3}{2} + \eps'.
		\vspace*{-2mm}
	\]
\end{proof}

Another noteworthy result can be deduced from the APSP algorithm of \cite{Censor-Hillel2019b} that we used before for shortest paths. It has runtime \smash{$\tilO\big(n^{\rho+o(1)}\big) \subseteq  \tilO\big(n^{0.15715}\big)$} (for $n$ large enough) and a multiplicative error $\big(1\!+\!o(1)\big)$ (recall that $\rho$ is the distributed matrix multiplication coefficient with \smash{$\rho \leq 1\!-\! \frac{2}{\omega} < 0.15715$}). It allows for edge weights up to \smash{$2^{(n^{o(1)})}$} which is $n$ if we choose $\log \log n / \log n$ for the $o(1)$. To keep the result concise, we omit expressing it as a function of $\rho$ and insert the known upper bound instead. %Note that improvements in $\rho$ imply improvements of this corollary.

\begin{cor}
	There is a \HYBRID algorithm that computes a $(1\!+\!\eps)$-approximation of $D(G)$ in \smash{$\tilO\big(\tfrac{n^{0.397}}{\eps}\big)$} rounds w.h.p.
\end{cor}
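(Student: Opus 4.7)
The plan is to instantiate Theorem~\ref{thm:cc-diam-sim} with $\calA$ being the \CC APSP algorithm of \cite{Censor-Hillel2019b}, which in particular yields a diameter algorithm: after $\calA$ terminates, each node in $V_\calS$ knows its eccentricity in $\calS$, and the global maximum $\tilD(\calS)$ can be aggregated among the simulating nodes via Lemma~\ref{lem:aggregationProtocol} in $\tilO(1)$ rounds. The relevant parameters of $\calA$ are then $\delta = \rho < 0.15715$, multiplicative error $\alpha = 1 + o(1)$, and additive error $\beta = 0$. To both absorb the $o(1)$ slack into a clean $(1+\eps)$ factor and to make the local exploration deep enough, I would set $\eta = 1/\eps$.

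What remains is routine bookkeeping. Theorem~\ref{thm:cc-diam-sim} prescribes $x = 2/(3 + 2\delta)$, and evaluating $1 - x = (1+2\delta)/(3+2\delta)$ at $\delta < 0.15715$ gives $1 - x < 0.397$. Consequently $T_\calB \in \tilO(\eta\, n^{1-x}) \subseteq \tilO(n^{0.397}/\eps)$, matching the claimed runtime. The approximation factor supplied by the same theorem is $\alpha + 2/\eta + \beta/T_\calB = (1 + o(1)) + 2\eps + 0$; for $n$ large enough the $o(1)$ is dominated by $\eps$, and a constant relabelling of $\eps$ yields the claimed $(1+\eps)$-approximation.

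The only point worth double-checking, and what I would flag as the lone non-trivial step, is that $\calA$ tolerates the edge weights that arise in the skeleton $\calS$. The algorithm of \cite{Censor-Hillel2019b} permits weights up to $2^{n^{o(1)}}$, whereas the weights in $\calS$ come from $h$-hop path lengths in the \emph{unweighted} input graph $G$, and hence are bounded by $h \in \tilO(n^{1-x})$, which is polynomially bounded in $n$. This sits well inside the weight budget, so the hypothesis of Theorem~\ref{thm:cc-diam-sim} is satisfied and the corollary follows as a direct substitution; there is no real technical obstacle beyond verifying this compatibility and carrying out the arithmetic above.
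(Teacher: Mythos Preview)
Your proposal is correct and follows essentially the same route as the paper: instantiate Theorem~\ref{thm:cc-diam-sim} with the \CC APSP algorithm of \cite{Censor-Hillel2019b}, set $\delta=\rho<0.15715$, $\eta=1/\eps$, $\alpha=1+o(1)$, $\beta=0$, and read off the runtime and approximation ratio. Your additional remarks about aggregating the diameter from the APSP output and about the skeleton edge weights fitting within the weight budget of \cite{Censor-Hillel2019b} are valid sanity checks that the paper either handles implicitly or in the surrounding discussion.
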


\begin{proof}
	Clearly, solving APSP on the skeleton also solves the diameter problem, hence Theorem \ref{thm:cc-diam-sim} is applicable. We characterize the above algorithm with parameters $\delta = 0.15715, \eta = 1/\eps$ and $\alpha = 1 + o(1) \leq 1 + \varepsilon$ (for sufficiently large $n$) and $\beta = 0$. Then the runtime is $\tilO\big(n^{0.397}/\eps\big)$ and the approximation parameter is \smash{$\alpha + \tfrac{2}{\eta} + \tfrac{\beta}{T_\calB} \leq 1 + 3 \eps \stackrel{\eps' := 3 \eps}{=} 1+ \eps'$}.
\end{proof}

	\begin{shortversion}

\section{Lower Bounds}
\label{sec:lower-bounds}

We complement our algorithmic upper bounds with corresponding lower bounds. We only have room to summarize our strategy, whereas the details are given in the full paper \cite{KS20}. The first bound stems from the observation that the lower bound for APSP of \cite{Augustine2020} can be adapted, which proves that computing an approximation of the $k$-source shortest paths problem takes \smash{$\tilOm\big(\sqrt k\big)$} rounds, even on unweighted graphs and for large approximation factors up to some \smash{$\alpha \in \Theta\big(n/\sqrt k\big)$}. This implies that our constant approximation, $k$-source shortest path algorithms with runtime \smash{$\tilO\big( \sqrt{k} \big)$} (for $k \geq n^{1/3}$ and $k \geq n^{0.397}$ respectively) are tight up to $\polylog n$ factors (c.f.\ Theorem \ref{thm:kSSP} and the subsequent discussion).

The idea is to construct an path that has a dedicated node $b$ at one end. Moreover, we have two sets of sources $S_1,S_2$ of size $k$ each, attached (with an edge) to the path at two dedicated nodes $v_1,v_2$ on the path. Node $v_2$ is at the opposite end of $b$ and $v_1$ is at distance $L \in \tilT\big(\sqrt{k}\big)$ from $b$. Now an adversary randomly ``shuffles'' the sources in $S_1,S_2$. We consider the state of $S_1,S_2$ as a random variable $X$. Then $X$ has entropy $H(X) = \tilOm(k)$ bits. If $b$ does not know for a single source $s$ whether $s \in S_1$ or $s \in S_2$ then $b$ can not give a satisfying answer to the $k$-SSP problem.
Hence, $b$ has to learn the state of $X$. If given only $L\m 1$ rounds then using only the local network is out of the question since the information of $X$ can not reach $b$. In fact, all information about $X$ that can reach $b$ in the given time must at some point be transmitted to a node within at most $L\m 1$ hops of $b$ via a global message. But the combined receive capacity of all those nodes is at most $L \in \tilT\big(\sqrt{k}\big)$ bits per round. Hence it takes $\tilOm\big(\sqrt{k}\big)$ rounds to transmit all $H(X) = \tilOm(k)$ bits.

\end{shortversion}

\begin{longversion}

\section{Lower Bounds for $k$-SSP}
\label{sec:k-src-lb}

Since the lower bound of $\tilOm\big(\sqrt k\big)$ for $k$-source shortest paths is only a slight modification of the lower bound of $\tilOm\big(\sqrt n\big)$ for APSP given in \cite{Augustine2020}, we would like to keep this subsection brief by referring to \cite{Augustine2020} for technical details. In order to keep this paper reasonably self-contained, we give the definition of the according worst case graph and the core idea behind the lower bound. Thus, knowledge of \cite{Augustine2020} is not strictly required to obtain an understanding of the proof.

\begin{figure}[h]
	\centering
	\includegraphics[width=0.95\linewidth]{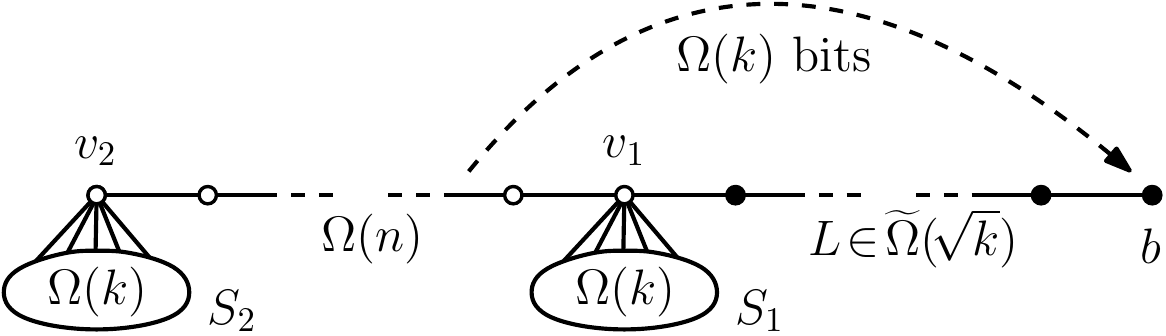}
	\caption{Worst case graph for $k$-SSP.}
	\label{fig:lb-k-source}
\end{figure}

The worst case graph that we use (depicted in Figure \ref{fig:lb-k-source}) is unweighted. It consists of a path of length $\Omega(n)$ with a dedicated node $b$ at one end. Additionally, there are two sets of sources $S_1,S_2$ attached to the path, one to a node $v_1$ at distance \smash{$L \in \tilOm\big(\sqrt k\big)$} from $b$ the other to a node $v_2$ at the opposite end of the path from $b$. Attached means that each of the sources in the sets $S_1,S_2$ has an edge to the respective node $v_1,v_2$.

The first insight is the following. Given a set $S$ of sources of size $|S| = k$ then we randomly assign $|S|/2$ nodes to each set $S_1$ and $S_2$. This creates a random state which $b$ must necessarily learn in order to learn the distance to each node. Considered as a random variable, the state of $S$ carries a Shannon entropy of $\tilOm\big(k\big)$ bits that must somehow be transmitted to $b$ since in order that $b$ can solve the $k$-SSP problem. Note that the difference to \cite{Augustine2020} is $k=n$.

The second insight is that given some graph $G$ that consists of a subgraph $G'$ and a path with $L$ hops attached to $G'$ and a random variable $X$ with entropy of $\Omega(L^2)$ bits that is initially only known to nodes in $G'$, then it takes $\tilOm(L)$ rounds until the node $b$ at the end of the path learns all bits. The intuition behind this is as follows.

Assume we are given at most $L \m 1$ rounds. Then any information that reaches $b$ from $G'$ could not have relied only local edges, since the path has $L$ hops. Therefore, any information that ever reaches $b$ must at some point be transmitted to a node on the path via a global edge. However, the combined capacity of all nodes on the path for receiving information is $\bigO\big(L\log^2 n\big)$ bits. Hence this takes at least $\tilOm(L)$ rounds. In \cite{Augustine2020} this is formalized in the technical Lemma 4.4.

Combining the two insights above already leads to a lower bound of \smash{$\tilOm(L) = \tilOm\big(\sqrt k\big)$} for solving the $k$-source shortest path problem exactly. The third insight concerns the approximation ratio. Assume $b$ does not know for a single node $u$ whether $u \!\in\! S_1$ or $u \!\in\! S_2$. Then $b$ must assume $u \!\in\! S_2$ in order to give a valid approximation (recall that approximations must be lower bounded by the true distance).
But if in reality $u \!\in\! S_1$, then $b$'s distance approximation is off by a factor \smash{$\alpha' \!\in\! \Theta\big(n/\,\sqrt{k}\big)$}. This means that, given slightly less time than required above, $b$ can not even give an $\alpha'$-approximation. Hence, computing an $\alpha$-approximation for any $\alpha \!\leq\! \alpha'$ takes \smash{$\tilOm\big(\sqrt k\big)$} rounds.

From the above insights we conclude Theorem \ref{thm:k-src-lb} (which can be seen as corollary of its counterpart in \cite{Augustine2020}).

%\begin{thm}
%	\label{thm:k-src-lb}
%	A (randomized) {$\alpha$}-approximation algorithm for the $k$-source shortest paths problem in the \HYBRID model on unweighted graphs takes $\tilOm\big(\sqrt k\big)$ rounds, for any $\alpha \leq \alpha'$ and a fixed value $\alpha' \in \Theta\big(n/\,\sqrt{k}\big)$.
%\end{thm}

\end{longversion}

\lng{\section{Lower Bound for Diameter}
\label{sec:diam-lb}}

The second lower bound is for the diameter problem. We demonstrate that classical proofs of lower bounds for the \CONGEST model based on 2-party set disjointness problems, can be adapted for lower bounds in the \HYBRID model. \shrt{In the full article \cite{KS20} we provide some technical insight how to achieve this for the diameter problem, which may be of use for repeating this procedure for other graph problems based on 2-party set disjointness.}
\lng{The $2$-party set disjointness problem asks whether for two given bit sequences \smash{$a,b \in \{0,1\}^{K}$}, there is an index $i \in [K]$, such that $a_i = b_i = 1$. If two parties (Alice and Bob) are given one of the bit sequences $a,b$ each, then they must exchange $\Omega(K)$ bits to solve this problem \cite{kalyanasundaram92,razborov92}. 
	
To obtain lower bounds in models of distributed computing one has to create a problem instance that solves the 2-party set disjointness problem by solving the problem instance (i.e., provide an appropriate reduction). Additionally, the problem instance must have an information bottleneck between the part of the network that represents Alice and the part that represents Bob. As communication in \HYBRID is conceptually different from communication in \CONGEST, we need to slightly alter the existing worst case graphs and provide additional technical lemmas to demonstrate a corresponding information bottleneck for the \HYBRID model. The technical tools to adapt existing proofs using this framework to the \HYBRID model may also be of use for repeating the procedure for other graph problems based on the same proof strategy.}
The worst case graph in our proof is an adaptation of the one by \cite{holzer_et_al16} for the \CONGEST model and we use a similar simulation argument as in \cite{Sarma2012}. Specifically, we show that \smash{$\tilOm\big(n^{1/3}\big)$} rounds are required to compute the unweighted diameter exactly and to compute a $(2\m\eps)$-approximation of the weighted diameter. The lower bounds hold for randomized protocols, even if both players have access to public randomness.

\begin{longversion}

%As discussed above, we prove the lower bound of \Cref{thm:diameter_lower} by reduction from $2$-party set disjointness. Our lower bound is proven in a similar way, in particular, we adapt the graph construction from the \CONGEST diameter lower bound in \cite{holzer_et_al16} and we use a similar simulation argument as in \cite{Sarma2012}.

More specifically, we show how to construct a family of graphs, which for some parameter $k$ encodes set disjointness instances over a universe of size $k^2$ and such that for a given set disjointness input \smash{$a,b \in \{0,1\}^{k^2}$}, the graph has a slightly higher diameter if and only if $a$ and $b$ are disjoint. We will show that if we give $a$ to Alice and $b$ to Bob, Alice and Bob can together efficiently simulate a given \HYBRID model algorithm for computing the diameter of the graph defined by $a$ and $b$. Because by the end of the simulation Alice and Bob know the diameter of the graph (with sufficiently large probability), they have solved set disjointness and they must therefore have exchanged $\Omega(k^2)$ bits. We will see that the parts of the graph simulated by Alice and Bob respectively, can exchange only $\tilO(n)$ bits in the simulated \HYBRID model protocol, resulting in a lower bound of $\tilOm \big(k^2/n\big)$ for the diameter problem (we are able to choose $k \in \Theta(n^{2/3})$).

We use the following family of graphs \smash{$\Gamma^{a,b}_{k,\ell,W} = (V,E)$} to show a bound of \smash{$\tilOm(n^{1/3})$} rounds to compute the diameter in the \HYBRID model. It is based on the construction of \cite{holzer_et_al16} and an example is depicted in Figure \ref{fig:diameter_graph}. We define \smash{$\Gamma^{a,b}_{k,\ell,W}$} as follows.
We have four sets of nodes $V_1,V_2,U_1,U_2$ each of size $k$. These node sets form \textit{cliques} with edges of weight $W$. Then the node sets $V_i$ and $U_i$ ($i=1,2$) are perfectly matched with $\ell$-hop-paths of edges of weight $1$. We further have two nodes $\hat v$ and $\hat u$, connecting to all nodes in $V_1 \cup V_2$ and in $U_1 \cup U_2$, respectively, with edges of weight $W$. The nodes $\hat v$, $\hat u$ are connected by a path of length and hop-length $\ell$.

\begin{figure}
	\centering
	\includegraphics[width=0.8\linewidth]{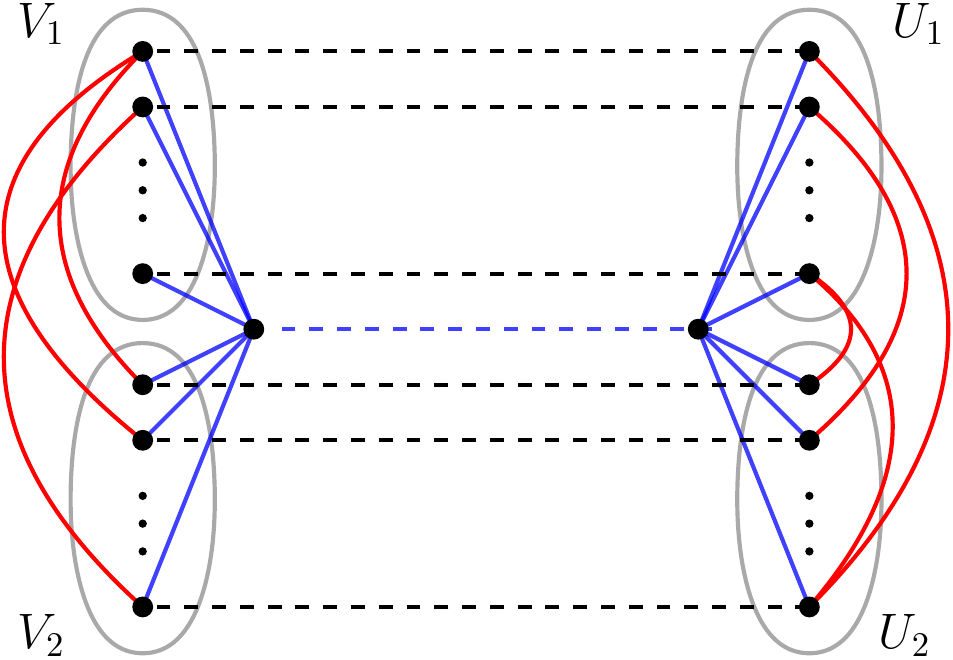}
	\caption{Graph \smash{$\Gamma^{a,b}_{k,\ell,W}$} with $|V_i| \!=\! |U_i| \!=\! k$. The $V_i, U_i$ form internal cliques (gray oval shapes) with edges of weight $W$. Inputs \smash{$a, b \in \{0,1\}^{k^2}\!\!$} are encoded via red edges. Dashed lines are paths with $\ell$ hops. Bold lines are edges of weight $W$. The blue network guarantees $D \leq 2W + \ell$. It is $D = W \!+\! \ell$, if and only if $a,b$ are disjoint.}
	\label{fig:diameter_graph}
\end{figure}

Let \smash{$a,b \in \{0,1\}^{k^2}$} be two input sequences of the set disjointness problem. We represent $a$ and $b$ in \smash{$\Gamma^{a,b}_{k,\ell,W}$} with edges (c.f., Figure \ref{fig:diameter_graph}, red lines). More specifically, we first map each bit $a_i$ in the sequence $a$ to a pair in $p_i \in V_1 \times V_2$ and each bit $b_j$ in $b$ to a pair in $q_j \in U_1 \times U_2$. Since \smash{$|V_1 \times V_2| = |U_1 \times U_2| = k^2$}, we have a pair for each bit. 

Additionally, we assign bits to pairs such that the order of bits in the sequence is consistent with the matching between $V_1,U_1$ and $V_2, U_2$. If for some $i \in [k^2]$ the bit $a_i$ is mapped to $p_i = (v_{1},v_{2})$ and $b_i$ to $q_i = (u_{1},u_{2})$ then $v_{1}, u_{1}$ and $v_{2}, u_{2}$ are  matched with paths of length $\ell$.
Finally, we add an actual edge between a pair $p_i = (v_{1},v_{2})$ or $q_j = (u_{1},u_{2})$ if and only if the corresponding bit $a_i$ or $b_j$ in the sequence is 0 (c.f., Figure \ref{fig:diameter_graph}, red lines).

The main property of the graph \smash{$\Gamma^{a,b}_{k,\ell,W}$} is that it has low diameter if and only if $a$ and $b$ are disjoint. This is shown by the following lemma. We first consider the weighted case, where we are interested in large weights $W > \ell$.

\begin{lem}
	\label{lem:reduction_diam_weighted}
	Suppose $W>\ell$. Then sequences \smash{$a, b \in \{0,1\}^{k^2}\!\!$} are disjoint, if and only if the diameter of \smash{$\Gamma^{a,b}_{k,\ell,W}$} is at most $W \!+\! 2\ell$. Else, the diameter is at least $2W \!+\! \ell$.
\end{lem}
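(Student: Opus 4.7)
My plan is to analyze shortest paths between all pairs of nodes in $\Gamma^{a,b}_{k,\ell,W}$ by exploiting the two structural mechanisms that allow paths to cross from the ``$V$-side'' to the ``$U$-side'': the blue backbone through $\hat v$ and $\hat u$, whose cheapest $V$-to-$U$ route costs $W + \ell + W = 2W + \ell$, and the four matching paths of weight $\ell$ between $V_i$ and $U_i$. Every cross-side path decomposes into one of these basic mechanisms plus clique edges and encoding edges, and the edge weights are chosen so that the only configurations cheaper than the backbone are (i) a single matching leg combined with one additional weight-$W$ step, which is available whenever one of $a$ or $b$ is $0$ at the relevant coordinate, or (ii) a direct encoding edge. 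Disjointness is precisely the guarantee that one of these cheap configurations is available at every coordinate.

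For the forward direction I would assume $a, b$ are disjoint and case-split on the pair $(x, y)$. Intra-clique pairs and pairs involving $\hat v$, $\hat u$, or a node on the $\hat v$-$\hat u$ path are handled by the backbone with distance at most $W + \ell$. The critical cross pairs have both endpoints in $V_1 \cup V_2 \cup U_1 \cup U_2$ on opposite cliques. For a pair $(v_1^s, v_2^t)$, disjointness supplies either the direct $V$-encoding edge (if $a_{(s,t)} = 0$) or the detour $v_1^s \to u_1^s \to u_2^t \to v_2^t$ using the $U$-encoding edge at $(s, t)$ and the two matching paths (if $b_{(s,t)} = 0$), of total weight $\ell + W + \ell = W + 2\ell$. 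For a pair $(v_1^s, u_2^t)$, one of $v_1^s \to v_2^t \to u_2^t$ or $v_1^s \to u_1^s \to u_2^t$ has weight $W + \ell$. Internal nodes on matching paths can be reduced to these endpoint cases by routing through the nearer matching endpoint. This bounds the diameter by $W + 2\ell$.

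For the backward direction I would pick an index $(s, t)$ with $a_{(s,t)} = b_{(s,t)} = 1$ and focus on the pair $v_1^s, u_2^t$, arguing that every $v_1^s$-$u_2^t$ path has weight at least $2W + \ell$. Such a path either traverses the full blue backbone (incurring $2W + \ell$) or uses at least one matching leg of weight $\ell$; in the latter case the non-disjointness at $(s, t)$ removes both direct encoding edges that would have paired that matching with a single weight-$W$ hop, so the path must pick up two additional weight-$W$ contributions -- some combination of a clique hop inside $V_i$ or $U_j$, an encoding edge at a different coordinate, or a further backbone edge -- summing to at least $2W + \ell$. The main obstacle will be a clean enumeration of these matching-based topologies that rules out any combination of repeated matchings, clique hops, and encoding edges evading both forbidden shortcuts with fewer than two independent weight-$W$ contributions; the hypothesis $W > \ell$ is then used to conclude $2W + \ell > W + 2\ell$, ensuring the two diameter regimes stated in the lemma are genuinely separated.
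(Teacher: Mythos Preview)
Your overall strategy matches the paper's: the same witness pair $(v_1^s,u_2^t)$ for the non-disjoint direction, and essentially the same case analysis for the disjoint direction. One step does not work as written, though.

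For internal nodes on matching paths, ``routing through the nearer matching endpoint'' can overshoot the bound. Take $a_{(s,t)}=1$, $b_{(s,t)}=0$, let $w_1$ sit one hop from $v_1^s$ on the $V_1$--$U_1$ path and $w_2$ one hop from $v_2^t$ on the $V_2$--$U_2$ path. Both nearer endpoints are on the $V$-side, so your reduction lands in the $(v_1^s,v_2^t)$ case, whose bound is $W+2\ell$ (achieved only via the $U$-detour, since the $V$-encoding edge is absent); adding the two hops gives $W+2\ell+2$. The true distance is $W+2\ell-2$, obtained by walking to the \emph{far} endpoints $u_1^s,u_2^t$ and using the $U$-encoding edge directly. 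In general, composing ``$\le \ell/2$ to the nearer endpoint'' with an endpoint bound that is itself $W+2\ell$ yields up to $W+3\ell$.

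The fix is the one the paper uses and that you already deploy in your $(v_1^s,u_2^t)$ analysis: do not pick the nearer endpoint; pick the endpoint on whichever side (both $V$, or both $U$) carries the encoding edge at coordinate $(s,t)$. Each matching leg then contributes at most $\ell$ and the crossing is a single weight-$W$ edge, giving $\le W+2\ell$ with no slack. With this adjustment your forward direction goes through. Your backward direction is fine as proposed and is in fact more explicit than the paper's one-line ``it is not hard to see that one must use at least two edges of weight $W$ and at least one path of total weight $\ell$''.
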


\begin{proof}
	Considering the graph \textit{without} the edges added through $a,b$ (everything except red edges in Figure \ref{fig:diameter_graph}), the distance between any pair of nodes is at most $2W + \ell$. We believe that the respective cases are readily verifiable by the reader (the most noteworthy case is for two nodes on different paths, represented by the dashed lines in Figure \ref{fig:diameter_graph}).
	
	Next, we show that the distance between any pair of nodes is at most $W \!+\! 2\ell$ if $a,b$ are disjoint.
	Regardless of $a,b$, this is true for any node on the path from $\hat u$ to $\hat v$ (including those two) and any other node (c.f., blue path in Figure \ref{fig:diameter_graph}). The distance between such a pair is in fact at most $W \!+\!\ell$. 
	
	This is also true, independently from $a,b$, for any pair of nodes only from either the \textit{top half} of the graph (defined as $V_1 \cup U_1$ and nodes on the matching paths of length $\ell$ running between $V_1$ and $U_1$) or the \textit{bottom half} of the graph (defined analogously, c.f.\ Figure \ref{fig:diameter_graph}). More specifically, the distance between such a pair is at most $W \!+\!\ell$ due to the fact that each of the sets $V_1,V_2,U_1,U_2$ forms an internal clique with edges of weight $W$.
	
	If $a,b$ are disjoint, then (and only then) we can also bound the distance by $W \!+\! 2 \ell$ between two nodes from opposing parts of the graph (one node in the top, the other in the bottom half). Given that $a,b$ are disjoint, for each $i \in [k^2]$ we have $a_i = 0$ or $b_i = 0$ (or both) in the input sequences $a,b$. That means that for any two pairs $p_i = (v_1,v_2)$ and $q_i = (u_1,u_2)$ we have $\{v_1,v_2\} \in E$ or $\{u_1,u_2\} \in E$ by construction.
	
	Let \smash{$w^\uparrow,w^\downarrow$} be two nodes in the top and the bottom half (as defined above) respectively. Let $v_1 \in V_1,u_1 \in U_2$ be the two nodes with a matching path that contains \smash{$w^\uparrow$} (\smash{$w^\uparrow$} may be equal to $v_1$ or $u_1$), and let $v_2\in V_2,u_2\in U_2$ be the analogous nodes for \smash{$w^\downarrow$}. Since $\{v_1,v_2\} \in E$ or $\{u_1,u_2\} \in E$ the distance between \smash{$w^\uparrow$} and \smash{$w^\downarrow$} can be at most $W \!+\! 2 \ell$.
	
	So far we have shown that all possible combinations of two nodes have distance at most $W \!+\! 2 \ell$ if $a,b$ are \textit{disjoint}. It remains to be shown that this is \textit{not} the case %there exists a pair of nodes which has distance \textit{at least} $2W \!+\! \ell$ in case 
	if $a,b$ are \textit{not} disjoint. Then we have nodes $v_1 \in V_1, u_1 \in U_1, v_2 \in V_2, u_2 \in U_2$ such that $\{v_1,v_2\} \notin E$ and  $\{u_1,u_2\} \notin E$. It is not hard to see that in order to get from $v_1$ to $u_2$ one must use at least two edges of weight $W$ and at least one path of total weight $\ell$. Hence the diameter of \smash{$\Gamma^{a,b}_{k,\ell,W}$} is at least $2W \!+\! \ell$.
\end{proof}

In the unweighted case $W=1$ we have to proceed even more carefully with our case distinction, specifically for nodes on different paths. However, we can reuse parts of the proof of the previous lemma. We show the following.

\begin{lem}
	\label{lem:reduction_diam_unweighted}
	Suppose $W=1$. Then \smash{$a, b \in \{0,1\}^{k^2}\!\!$} are disjoint, if and only if the diameter of  \smash{$\,\Gamma^{a,b}_{k,\ell,1}$} is $\ell \!+\! 1$. Else, the diameter is $\ell \!+\! 2$.
\end{lem}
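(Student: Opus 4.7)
The plan is to follow the blueprint of the weighted case (Lemma~\ref{lem:reduction_diam_weighted}), but to carry out the hop-distance analysis case-by-case: with $W=1$ every edge contributes exactly one hop, so the convenient $W>\ell$ separation used in the weighted proof is no longer available. The argument will split into three pieces: (a) an unconditional upper bound of $\ell+2$ on the hop diameter via the central $\hat v$-$\hat u$ path; (b) a strengthened upper bound of $\ell+1$ in the disjoint case obtained by exploiting red edges; and (c) a matching lower bound of $\ell+2$ witnessed by a specific pair whenever $a,b$ are not disjoint.

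First I would establish (a). Every node of $\Gamma^{a,b}_{k,\ell,1}$ lies within one hop of $\hat v$ or $\hat u$ (for nodes in the four cliques), sits on the central $\hat v$-$\hat u$ path, or lies on a matching path whose endpoints are one hop from $\hat v$ and $\hat u$. Routing any pair $x,y$ through the central path therefore yields $d(x,y)\leq \ell+2$. The only mildly nontrivial case is when both endpoints are internal matching-path nodes; here one picks for each of them the shorter segment to reach $\hat v$ or $\hat u$, which keeps the total bounded by $\ell+2$.

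The heart of the disjoint upper bound (b) is the ``cross'' case $v_1\in V_1$, $u_2\in U_2$. Let $\alpha,\beta\in[k]$ index $v_1$ and $u_2$ within their cliques and let $i\in[k^2]$ be the pair index of $(\alpha,\beta)$, so that $p_i=(v_1,v_2')$ and $q_i=(u_1',u_2)$, with $v_2'$ matched to $u_2$ and $v_1$ matched to $u_1'$ by $\ell$-hop paths. Disjointness of $a,b$ forces $a_i=0$ or $b_i=0$: in the first case the red edge $\{v_1,v_2'\}$ followed by the matching $v_2'\to u_2$ yields a route of $1+\ell$ hops, and in the second case the matching $v_1\to u_1'$ followed by the red edge $\{u_1',u_2\}$ gives the same bound. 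All remaining pair types (two nodes on the same side, a clique node paired with a central-path or matching-path node, two internal matching-path nodes on the same side) can be routed through cliques, $\hat v$, or $\hat u$ within $\ell+1$ hops; this is essentially bookkeeping and mirrors the analogous enumeration in the weighted proof.

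The lower bound (c) is the main obstacle. I would pick an index $i$ with $a_i=b_i=1$, let $v_1,v_2,u_1,u_2$ be the endpoints of $p_i$ and $q_i$, and show $d(v_1,u_2)\geq \ell+2$. The intuition is that every $v_1$-$u_2$ path must cross from the ``$V$-side'' to the ``$U$-side'' of the graph, and each crossing bridge---the central path or one of the $V_j$-$U_j$ matchings---contributes at least $\ell$ hops, while entering and leaving that bridge costs at least two additional hops because the two direct red shortcuts at pair $i$ are both absent. I would formalize this by a case distinction on the first edge by which a shortest path leaves the $V$-side: routing through $\hat v$ costs $1+\ell+1$; using a $V$-red edge forces the pair index $i'\neq i$, so after the matching the walk lands at the $U_2$-partner of $v_2^{(\text{index}(i'))}\neq u_2$ and needs one extra clique hop, again $\geq 1+\ell+1$; the symmetric argument handles $U$-red edges. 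Combining (a) with (c) gives $D=\ell+2$ in the non-disjoint case, and together with (b) this yields the equivalence claimed in the lemma.
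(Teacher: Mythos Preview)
Your outline has a genuine gap in part (b). The ``remaining pair types'' you enumerate omit the case of two \emph{internal} matching-path nodes $w^\uparrow$ on a top $V_1$--$U_1$ path and $w^\downarrow$ on a bottom $V_2$--$U_2$ path, and this case is not bookkeeping. The weighted argument you appeal to bounds such a pair by $W+2\ell$, which for $W=1$ only gives $2\ell+1$; the paper explicitly notes that ``a somewhat different approach is needed'' here. Concretely, if $w^\uparrow$ sits at distance $i$ from its $V_1$-endpoint and $w^\downarrow$ at distance $j$ from its $V_2$-endpoint, the two $\hat v$/$\hat u$ routes cost $i+j+2$ and $2\ell+2-i-j$ respectively. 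These suffice when $i+j\neq\ell$, but in the balanced sub-case $i+j=\ell$ both routes give exactly $\ell+2$, and you must invoke disjointness to obtain a red edge $\{v_1,v_2\}$ or $\{u_1,u_2\}$ and shave the extra hop. Your ``cross'' argument for $v_1\in V_1$, $u_2\in U_2$ is precisely the endpoint instance $i=0$, $j=\ell$ of this balanced case, so you already have the right mechanism---you just need to run it for arbitrary positions along the two paths rather than only at the clique endpoints.

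Parts (a) and (c) are fine and match the paper's reasoning; once you patch the internal-node case in (b) with the $i+j$ trichotomy above, the proof goes through.
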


\begin{proof}
	As in the previous proof the maximum distance among any pair of nodes is at most $\ell + 2W = \ell + 2$. For nodes on the path from $\hat v$ to $\hat u$ (blue path) we have shown a distance of at most $\ell \!+\! W = \ell \!+\! 1$. The same (distance at most $\ell \!+\! W = \ell \!+\! 1$) is true for pairs of nodes both either in the top half or the bottom half (as defined in the previous proof).
	
	A somewhat different approach is needed to show a distance of at most $\ell \!+\! 1$ between nodes $w^{\uparrow}$ and $w^{\downarrow}$ in the top and the bottom half respectively, if $a,b$ are disjoint.
	Like before let $v_1 \in V_1,u_1 \in U_2$ be the two nodes with a matching path that contains \smash{$w^\uparrow$} and let \smash{$L^{\uparrow}$} be the according path. Let $v_2\in V_2,u_2\in U_2$ and \smash{$L^{\downarrow}$} be the analogous objects for \smash{$w^\downarrow$}. Let $v_1 = x_1, \ldots, x_\ell = u_1$ and $v_2 = y_1, \ldots, y_\ell = u_2$ be the nodes on the paths \smash{$L^{\uparrow}$} and \smash{$L^{\downarrow}$} ordered from left to right. Let $i,j \in [\ell]$ such that \smash{$x_i = w^\uparrow$} and \smash{$y_j = w^\downarrow$}. Presume $i\p j \leq \ell\!-\!1$ then 
	\begin{align*}
	d(w^\uparrow,w^\downarrow) &= d(x_i,y_j) \leq d(x_i,x_1) + d(x_1,y_1) + d(y_1,y_j) \\ &= i  + 2  + j \leq  \ell - 1.
	\end{align*}
	Presume $i+j \geq \ell\!+\!1$. This implies $(\ell \m i) + (\ell \m j)  \leq \ell\!-\!1$. Then almost analogously
	\begin{align*}
	d(w^\uparrow,w^\downarrow) &= d(x_i,y_j) \leq d(x_i,x_\ell) + d(x_\ell,y_\ell) + d(y_\ell,y_j) \\ &= (\ell \m i) + 2 + (\ell \m j) \leq  \ell - 1.
	\end{align*}
	Finally presume $i\p j = \ell$. In this case we actually require the set disjointness. This means we have $\{x_1,y_1\} \in E$ or $\{x_\ell,y_\ell\} \in E$, making one of the distances $d(x_1,y_1)$ or $d(x_\ell,y_\ell)$ equal to $W=1$. Assume $\{x_1,y_1\} \in E$. Then
	\begin{align*}
	d(w^\uparrow,w^\downarrow) & = d(x_i,y_j) \leq d(x_i,x_1) + d(x_1,y_1) + d(y_1,y_j) \\ & = i  + 1  + j =  \ell + 1.
	\end{align*}
	The case $\{x_\ell,y_\ell\} \in E$ is analogous. This covers all possible pairs of nodes, i.e., the distance between any such pair is at most $\ell \p 1$ if $a,b$ \textit{are disjoint}. As in the previous proof, it remains to argue that there is a pair of nodes with distance strictly bigger than $\ell \p 1$, if the sets \textit{are not disjoint}. We have already shown  the existence of a pair with distance $2W \p \ell = \ell \p 2$ in the previous proof and are therefore done.	
\end{proof}

We move on to the reduction of the 2-party set disjointness problem to the diameter graph problem using the graph \smash{$\Gamma^{a,b}_{k,\ell,W}$} and its properties. Assume we are given a randomized \HYBRID model algorithm $\calA$ that computes the diameter $D$ of the graph. We assume that at the end of Algorithm $\calA$, at least one of the nodes of the graph must know the output of Algorithm $\calA$ (i.e., the diameter or an approximation of the diameter).

Assume that we are given a set disjointness instance, where Alice is given set $a$ and Bob is given set $b$. Then the two parties together simulate \smash{$\Gamma^{a,b}_{k,\ell,W}$} and run $\calA$ on the simulated  graph. %Note that the parts simulated by each party must make up the whole graph \smash{$\Gamma^{a,b}_{k,\ell,W}$}. 
During the simulation the players exchange messages that go between nodes simulated by opposing parties so they can continue the correct simulation. 

\begin{lem}
	\label{lem:player-simulation}
	Presume Alice knows \smash{$a\in \{0,1\}^{k^2}$} and Bob knows \smash{$b\in \{0,1\}^{k^2}$}. Let $\calA$ be a randomized \HYBRID model algorithm. Assume further that Alice and Bob have access to public randomness. Then the two parties can together \emph{correctly} simulate up to $\lfloor\ell/2\rfloor$ rounds of Algorithm $\calA$, such that (1) the only communication between Alice and Bob stems from messages via the global network and (2) each node is simulated by at least one party throughout the whole simulation of $\calA$.
\end{lem}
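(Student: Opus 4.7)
The plan is a standard simulation argument: Alice and Bob partition the nodes of $\Gamma^{a,b}_{k,\ell,W}$ between them and jointly simulate $\calA$ round by round, using only their inputs $a$, $b$ and their shared public randomness, while forwarding to each other exactly those global-network messages whose endpoints are split across the partition.

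First, I would set $V_A := V_1 \cup V_2 \cup \{\hat v\}$ together with all internal path nodes at depth at most $\lfloor \ell/2 \rfloor$ on each of the three kinds of $\ell$-hop paths (between $V_1$ and $U_1$, between $V_2$ and $U_2$, and between $\hat v$ and $\hat u$), where depth is measured from Alice's core. Bob gets the symmetric set $V_B$. The union is all of $V$, so condition~(2) holds by construction. Alice's initial knowledge covers every ingredient of the graph not determined by $b$: the weight-$W$ cliques inside each $V_i$ and $U_i$ (always fully present), the matching paths, the edges incident to $\hat v$ and $\hat u$, and the $a$-dependent $V_1$--$V_2$ edges. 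The only pieces she is missing are the $b$-dependent $U_1$--$U_2$ edges, which are incident only to $U_1 \cup U_2 \subseteq V_B$; Bob is symmetric.

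The main technical step is a round-by-round induction showing that for each $t \leq T := \lfloor \ell/2 \rfloor$ and each $v \in V_A$, Alice can compute $v$'s state at the end of round $t$ from her own initial data plus whatever global messages Bob has forwarded to her. The state at round $t$ of a node $v$ depends, via local communication, only on the initial states of the nodes in the $t$-hop local neighborhood of $v$ in $G$. Because Alice's simulated nodes sit at depth at most $T$ on every bridging path, the relevant initial states stay within the first $2T \leq \ell$ nodes along each path and therefore avoid $U_1\cup U_2$. For global communication, whenever a simulated node $u \in V_A$ sends or receives a global message with endpoint $w \in V_B\setminus V_A$, Bob computes the message and the two parties exchange it; all randomised choices are resolved using the shared public randomness, so the two simulations stay synchronised. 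This gives condition~(1): no exchanged message corresponds to a local edge of $G$.

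The hard part is making the local-dependency induction precise. Concretely, letting $f_A(d,t)=1$ denote that Alice can simulate the depth-$d$ path node at round $t$, one obtains the recursion $f_A(d,t)=1 \iff f_A(d\m 1,t\m 1)=f_A(d,t\m 1)=f_A(d\p 1,t\m 1)=1$ with base case $f_A(d,0)=1$ iff $d \leq \ell\m 1$ (since the only unknown adjacency sits at depth $\ell$). Induction yields $f_A(d,t)=1 \iff d \leq \ell\m 1\m t$, so at $t=T$ Alice covers all path depths up to $\ell\m 1\m T$, and Bob's symmetric range covers from depth $T\p 1$ upwards, jointly covering every node on the path (with a one-round parity adjustment when $\ell$ is even, which does not affect the asymptotic bound derived from the lemma). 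Repeating this on all three bridging paths and combining with the clique-internal computations inside the cores completes the argument.
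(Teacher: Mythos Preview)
Your core mechanism — the recursion $f_A(d,t)=1\iff d\le \ell-1-t$ showing that Alice's simulable region shrinks by one depth-layer per round — is exactly what the paper does (there phrased as ``in round $r$ Alice simulates columns $1,\dots,\ell-r$ and Bob simulates columns $r+1,\dots,\ell$''). So the substance is correct and the approach is the same.

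The framing, however, is internally inconsistent in a way that hides the actual argument. You open by fixing a \emph{static} partition $V_A,V_B$ cut at depth $\lfloor\ell/2\rfloor$ and announce the induction goal ``for each $t\le T$ and each $v\in V_A$, Alice can compute $v$'s state at round $t$.'' But a fixed partition cannot deliver condition~(1): the depth-$T$ node in $V_A$ has a local neighbour at depth $T{+}1$ in $V_B\setminus V_A$, so simulating it at round $t$ would require a local message from Bob. Your own recursion then silently abandons the fixed picture: the base case $f_A(d,0)=1$ for all $d\le\ell-1$ means Alice initially simulates \emph{almost the whole graph}, not just $V_A$, and it is precisely this large overlap with Bob (both simulate the middle columns for many rounds) that ensures a local edge never runs between ``only Alice can simulate the sender'' and ``Bob needs the receiver.'' That overlap is the crux of condition~(1), and the fixed-$V_A$ framing obscures it. Relatedly, your global-message forwarding rule (``endpoint in $V_B\setminus V_A$'') is tied to the wrong sets; it should be stated in terms of the round-$r$ simulable regions — forward a round-$(r{-}1)$ global message exactly when the sender lies in the columns simulated by only one party in round $r{-}1$. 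Once you drop the static $V_A,V_B$ and state the simulation directly as shrinking overlapping column ranges (as the paper does), the parity wrinkle for even $\ell$ also disappears.
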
 

\begin{proof}
	We organize the nodes of \smash{$\Gamma^{a,b}_{k,\ell,W}$} into columns. The first column consists of $V_1 \cup V_2 \cup \{ \hat v \}$. The last ($\ell^{th}$) column consists of $U_1 \cup U_2 \cup \{\hat u \}$. The columns in between are made up of the inner nodes on the paths that connect the first and the last column (c.f., Figure \ref{fig:diameter_graph}). Formally, column $i$ consists of all nodes that have hop-distance $i$ to their closest node in the first column.
	
	First some nomenclature. If we say that Alice (or Bob) simulates some round $r$ of a node $x$, we assume that Alice knows the correct state of node $x$ at the beginning of round $r$ (i.e., after finishing round $r-1$) and Alice can thus compute the messages sent by node $x$ in round $r$. This does \textit{not} necessarily require that Alice also learns all the messages $x$ receives in round $r$ (unless Alice also simulates round $r+1$ of $x$). This is due to the fact that $x$ can only react in round $r \p 1$ or later rounds to the messages $x$ receives in round $r$ anyway.
	
	Based on her knowledge of $a$, Alice is able to construct the subgraph of \smash{$\Gamma^{a,b}_{k,\ell,W}$} induced by the first $\ell\m 1$ columns. By symmetry, Bob can construct the subgraph induced by all columns except the first. Then the simulation works as follows. In the first round Alice simulates columns $1, \ldots , \ell\m 1$ and Bob simulates columns $2, \ldots , \ell$. With each subsequent simulation round the simulated node set shrinks by one column from the opposite side. Specifically, in round $i$, Alice only continues simulating columns $1, \ldots , \ell\m i$ and Bob $i\p 1, \ldots , \ell$. Since we only run $\calA$ for at most $\lfloor\ell/2\rfloor$ rounds, condition (2) is satisfied. 
	
	Note that the set of nodes simulated by Alice and Bob overlaps. In order to ensure a correct simulation of the protocol, a node $x$ that is simulated by both Alice and Bob has to be simulated consistently by the two parties. In particular, since $\calA$ can be a randomized algorithm, Alice and Bob have to use the same randomness in the simulation of node $x$. Note that they can consistently simulate the node because they have access to public randomness. In fact, if we think of the public randomness as part of the input given to the players, the whole simulation becomes deterministic.
	
	In order to make sure that the simulation of the protocol is correct, we need to guarantee that Alice and Bob always know the correct state of all the nodes they simulate. We prove that this is true by induction on the number of rounds. For the first round, clearly both Alice and Bob know the initial states of all the nodes they simulate and they can therefore perform the correct internal first round computations and compute the first round messages of all their nodes. For any round $r\geq 2$, we need to show that a correct simulation up to round $r-1$ implies that Alice and Bob can learn the state at the end of round $r-1$ (i.e., at the beginning of round $r$) for all the nodes they simulate in round $r$. 
	
	Consider some node $x$ that sends a message to node $y$ in round $r-1$. W.l.o.g., assume that Alice simulates node $x$ in round $r-1$ and Alice thus knows the message $x$ sends to $y$ in round $r-1$. If Bob also simulates node $x$ in round $r-1$, he also knows the message from $x$ to $y$ and Alice and Bob do not need to exchange any information about this message. Let us therefore assume that Bob does not simulate node $x$ in round $r-1$, but that Bob simulates node $y$ in round $r$. In this case, Bob needs to know the message from $x$ to $y$ in round $r-1$ to correctly simulate $y$ in round $r$ and Alice has to send the content of the message to Bob. 
	
	In order to show that Alice never needs to send information to Bob about a message that is sent over a local edge, we need to show that the following constellation can not occur. There can not be two neighbors $x$ and $y$ in $G$ such that $x$ \textit{is} simulated by Alice in round $r-1$, but $x$ is \textit{not} simulated by Bob in round $r-1$, and $y$ is simulated by Bob in round $r$. 
	
	In round $r-1$ Alice simulates the nodes in columns $1,\dots,\ell-(r-1)$ and Bob simulates the nodes in columns $r,\dots,\ell$. Node $x$ therefore needs to be in one of the columns $1,\dots,r-1$. However, in the next round $r$, Bob only simulates columns $r+1,\dots,\ell$ and therefore $y$ needs to be in one of those columns. Therefore, there cannot be an edge between $x$ and $y$ in $G$. For a correct simulation of $\calA$, Alice and Bob therefore only need to exchange information about messages sent over the global edges and thus also claim (1) of the lemma is proven.
\end{proof}

Now we have the required tools to prove Theorem \ref{thm:diameter_lower}. Specifically, we show that computing the diameter in the \HYBRID model takes \smash{$\Omega\big((n/\log^{2} n)^{1/3}\big)$} even for Monte Carlo algorithms with success probability $2/3$. Additionally, we want to show that the same is true for computing a $(2\m \eps)$-approximation for any $\eps >0$ by using only weights up to $n^{1/3}$.

\begin{proof}
	Assume $\calA$ is a (randomized) diameter algorithm that computes the diameter of a graph with probability at least $2/3$ and takes at most $\lfloor \frac{\ell}{2} \rfloor \m 1$ rounds for some $\ell \in [n]$ yet to be determined. Note that this implies that after correctly simulating $\calA$ for $\lfloor \frac{\ell}{2} \rfloor$ rounds every node knows the diameter of the graph.
	
	Consider the two-player set disjointness problem and assume that Alice and Bob are given two strings \smash{$a,b\in \set{0,1}^{k^2}$} as inputs. Then Lemma \ref{lem:player-simulation} tells us that Alice and Bob can together correctly simulate $\lfloor\frac{\ell}{2}\rfloor$ rounds Algorithm $\calA$ on graph \smash{$\Gamma^{a,b}_{k,\ell,W}$}. 
	%for each node of \smash{$\Gamma^{a,b}_{k,\ell,W}$}, either Alice or Bob knows the correct state of the node at the end of the $\lfloor \frac{\ell}{2} \rfloor-1$ rounds. 
	This implies that Alice and Bob know the correct value of the diameter (in the unweighted case) or a $(2\m\eps)$-approximation of the diameter (in the weighted case) with probability at least $2/3$. By Lemmas \ref{lem:reduction_diam_weighted} and Lemma \ref{lem:reduction_diam_unweighted} this implies that they have solved the given set disjointness instance with probability $2/3$ (more on the weighted case further below).
	
	Due to Lemma \ref{lem:player-simulation}, the only information that Alice and Bob ever exchange during the simulation are the messages sent via local messages in the simulated \HYBRID network. But according to the lower bound for set disjointness~\cite{kalyanasundaram92,razborov92}, Alice and Bob must have exchanged $\Omega(k^2)$ bits during the whole process, since they solve the set disjointness problem with sufficiently good probability. 
	
	Specifically, this means that during the simulation $\Omega(k^2/\log n)$ messages must have been exchanged between nodes via the global network, since a message can hold at most $\bigO(\log n)$ bits. However, all nodes combined can send (or receive) at most $\bigO(n \log n)$ messages per round. Hence, it takes at least $\Omega(k^2/(n \log^2 n))$ rounds until $\calA$ is finished.
	
	Summing up, $\calA$ must take at least $\Omega\big(\min\big\{\lfloor \frac{\ell}{2} \rfloor \m 1, k^2/(n \log^2 n)\big\}\big)$ rounds. If $n$ is the total number of nodes of \smash{$\Gamma^{a,b}_{k,\ell,W}$}, then we know from the construction that $k \cdot \ell \in \Theta(n)$. This is fulfilled by choosing \smash{$\ell \in \Theta\big((n/\log^2 n)^{1/3}\big)$} and \smash{$k \in \Theta\big( (n \log n )^{2/3} \big)$}. Then the runtime of $\calA$ is \smash{$\Omega\big((n/\log^{2} n)^{1/3}\big)$}.
	
	In the weighted case, assume that the  approximation ratio of Algorithm $\calA$ is $\alpha$. In order to guarantee that Alice and Bob can distinguish disjoint from non-disjoint instances, $\alpha$ has to satisfy $\alpha(W \p 2\ell) < (2W \p \ell)$ (cf.\ Lemma \ref{lem:reduction_diam_weighted}). This means \smash{$\alpha < \frac{2W+\ell}{W+2\ell}$}. We have \smash{$\frac{2W+\ell}{W+2\ell} = 2 - o(1)$}, for any\smash{ $W \in \omega(\ell) \subseteq \Omega(n^{1/3})$}. Thus $\alpha = 2\m\eps$ fulfills the requirement for large enough $n$.
\end{proof}

\end{longversion}

%The $2$-party set disjointness problem asks whether for two given bit sequences $a,b \in \{0,1\}^{K}$, there is an index $i \in [K]$, such that $a_i = b_i = 1$. If two players (Alice and Bob) are given one of the bit sequences $a,b$ each, then they must exchange $\Omega(K)$ bits to solve this problem \cite{kalyanasundaram92,razborov92}. 

	\iflong\newpage\fi
		
	\bibliographystyle{ACM-Reference-Format}
	\bibliography{ref/refcombined}	
		
	\iflong\newpage\fi
	
	\appendix
	
	\begin{longversion}

\section{Some Basic Probabilistic Concepts}
\label{apx:generalnotations}

\begin{lem}[Chernoff Bound]
	\label{lem:chernoffbound}
	We use the following forms of Chernoff bounds in our proofs:
		$$\mathbb{P}\big(X > (1 \!+\! \delta) \mu_H\big) \leq \exp\Big(\!-\!\frac{\delta\mu_H}{3}\Big),$$
	with $X = \sum_{i=1}^n X_i$ for i.i.d.\ random variables $X_i \in \{0,1\}$ and $\mathbb{E}(X) \leq \mu_H$ and $\delta \geq 1$. Similarly, for $\mathbb{E}(X) \geq \mu_L$ and $0 \leq \delta \leq 1$ we have
	$$\mathbb{P}\big(X < (1 \!-\! \delta) \mu_L\big) \leq \exp\Big(\!-\!\frac{\delta^2\mu_L}{2}\Big).$$
\end{lem}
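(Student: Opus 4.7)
The plan is to use the standard moment-generating-function (MGF) / Chernoff-Bernstein technique; this is a textbook bound included only for self-containedness, so no new ideas are needed. I would first fix a parameter $t>0$ and apply Markov's inequality to the nonnegative random variable $e^{tX}$, giving $\Pr\bigl(X>(1\p\delta)\mu_H\bigr)\leq \E[e^{tX}]/e^{t(1+\delta)\mu_H}$. Using independence, I would factor $\E[e^{tX}]=\prod_{i=1}^n \E[e^{tX_i}]$, and bound each factor via $\E[e^{tX_i}]=1+p_i(e^t\m 1)\leq \exp\!\bigl(p_i(e^t\m 1)\bigr)$ (here $p_i := \Pr(X_i=1)$), which combines with $\sum_i p_i=\E[X]\leq \mu_H$ and the fact that $e^t\m 1>0$ to yield $\E[e^{tX}]\leq \exp\!\bigl(\mu_H(e^t\m 1)\bigr)$.

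Next I would optimize over $t$. The canonical choice $t=\ln(1+\delta)$ produces the classical closed form
\[
\Pr\bigl(X>(1+\delta)\mu_H\bigr)\leq \left(\frac{e^{\delta}}{(1+\delta)^{1+\delta}}\right)^{\mu_H}.
\]
The remaining task is the analytic simplification into the cleaner form $\exp(-\delta\mu_H/3)$ valid for $\delta\geq 1$. This reduces to the scalar inequality $(1+\delta)\ln(1+\delta)-\delta \geq \delta/3$ on $[1,\infty)$, which I would verify by noting it holds with room to spare at $\delta=1$ and then comparing derivatives $\ln(1+\delta)$ versus $1/3$, the former being at least $\ln 2 > 1/3$ on the relevant range.

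For the lower-tail bound, the plan is symmetric with $t<0$. Writing $s=-t>0$, Markov applied to $e^{-sX}$ combined with $\E[e^{-sX_i}]\leq \exp\!\bigl(-p_i(1-e^{-s})\bigr)$ and $\sum_i p_i\geq \mu_L$ yields $\Pr\bigl(X<(1-\delta)\mu_L\bigr)\leq \exp\!\bigl(\mu_L(e^{-s}-1)\bigr)\cdot e^{s(1-\delta)\mu_L}$. Choosing $s=-\ln(1-\delta)$ gives the raw bound $\bigl(e^{-\delta}/(1-\delta)^{1-\delta}\bigr)^{\mu_L}$, and the scalar inequality $(1-\delta)\ln(1-\delta)+\delta\geq \delta^{2}/2$ for $\delta\in[0,1]$ (an immediate consequence of the Taylor expansion $(1-\delta)\ln(1-\delta)=-\delta+\sum_{k\geq 2}\delta^{k}/(k(k-1))$, whose leading term is $\delta^{2}/2$ and whose remaining terms are nonnegative) delivers the stated $\exp(-\delta^{2}\mu_L/2)$.

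The only mildly nontrivial step is the analytic cleanup that converts the raw ratio $e^{\delta}/(1+\delta)^{1+\delta}$ (and its lower-tail analogue) into the advertised clean exponential; this is elementary calculus and will be the main, though very modest, obstacle. Everything else follows mechanically from Markov's inequality and independence.
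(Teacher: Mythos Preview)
Your proposal is correct and is the standard MGF derivation of the Chernoff bound. The paper does not actually prove this lemma; it is stated without proof in the appendix as a standard tool included for self-containedness (with a remark citing \cite{Schmidt1995} for the $k$-wise independent extension), so there is no paper proof to compare against.
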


\begin{remark}
	\label{rem:chernoffbound}
	Note that the first inequality even holds if we have $k$-wise independence among the random variables $X_i$ for $k \geq \lceil \mu_H \delta \rceil$ (c.f., \cite{Schmidt1995} Theorem 2, note that a substitution $\mu_H := (1 \!+\! \eps)\E(X)$ generalizes the result for any $\mu_H \geq \E(X)$).
\end{remark}

\begin{lem}[Union Bound]
	\label{lem:unionbound}
	Let $E_1, \ldots ,E_k$ be events, each taking place w.h.p. If $k \leq p(n)$ for a polynomial $p$ then $E \coloneqq \bigcap_{i=1}^{k} E_i$ also takes place w.h.p.
\end{lem}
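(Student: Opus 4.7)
The plan is to unfold the definition of ``with high probability'' and apply the classical Boole inequality to the complementary events. Recall that ``w.h.p.'' in this paper means: for every constant $c > 0$, the event holds with probability at least $1 - n^{-c}$, where the hidden constants in the algorithm may depend on $c$. So we need to show that for every constant $c > 0$, we have $\Pr[E] \geq 1 - n^{-c}$.

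First I would fix an arbitrary target constant $c > 0$. Let $p(n)$ be the polynomial bounding $k$, and let $d > 0$ be a constant such that $p(n) \leq n^d$ for all sufficiently large $n$. Since each $E_i$ occurs w.h.p., we can instantiate the definition with the constant $c' := c + d$, so that $\Pr[\overline{E_i}] \leq n^{-(c+d)}$ for every $i \in \{1,\dots,k\}$ (and $n$ large enough).

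Next I would apply Boole's inequality to the union of the bad events:
\[
\Pr[\overline{E}] \;=\; \Pr\Big[\bigcup_{i=1}^{k} \overline{E_i}\Big] \;\leq\; \sum_{i=1}^{k} \Pr[\overline{E_i}] \;\leq\; k \cdot n^{-(c+d)} \;\leq\; n^{d} \cdot n^{-(c+d)} \;=\; n^{-c}.
\]
Therefore $\Pr[E] \geq 1 - n^{-c}$. Since $c$ was arbitrary, $E$ occurs w.h.p., as required.

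There is no real obstacle here; the only subtlety is being careful about the order of quantifiers in the definition of w.h.p.\ (the constant $c'$ used for each individual $E_i$ must be allowed to depend on both the target $c$ and the degree $d$ of the polynomial $p$). As long as ``w.h.p.'' is interpreted in the standard sense that the guarantee holds for \emph{every} constant, this flexibility is available and the proof goes through in one line after the setup.
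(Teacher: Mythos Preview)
Your proof is correct and follows essentially the same approach as the paper: both fix a degree bound $d$ for the polynomial $p$, invoke the w.h.p.\ guarantee for each $E_i$ with the boosted constant $c+d$, and then apply Boole's inequality to absorb the factor $n^d$. The only difference is cosmetic (you fix the target constant first, the paper fixes the per-event constant first), so there is nothing to add.
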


\begin{proof}
	Let $d \coloneqq \deg(p)\!+\!1$. Then there is an $n_0 \geq 0$ such that $p(n) \leq n^d$ for all $n \geq n_0$. Let $n_1, \ldots , n_k \in \mathbb{N}$ such that for all $i \in \{1, \ldots, k\}$ we have $\mathbb{P}(\overline{E_i}) \leq \tfrac{1}{n^c}$ for some (yet unspecified) $c > 0$.
	With Boole's Inequality (union bound):
	\begin{align*}
	\mathbb{P}\big(\overline{E}\big) \!= \mathbb{P}\Big(\bigcup_{i=1}^{k} \overline{E_i} \Big) \leq \sum_{i=1}^{k} \mathbb{P}(\overline{E_i}) \leq \sum_{i=1}^{k} \!\frac{1}{n^c} \leq \frac{p(n)}{n^{c}} \leq \frac{1}{n^{c-d}}
	\end{align*}
	for all $n \geq n_0' \coloneqq \max(n_0, \ldots ,n_k)$. Let $c' > 0$ be arbitrary. We choose $c \geq c' \!\!+\! d$. Then $\mathbb{P}\big(\overline{E}\big) \leq \frac{1}{n^{c'}}$ for all $n \geq n_0'$.
\end{proof}

\begin{remark}
	If a constant number of events is involved we use the above lemma without explicitly mentioning it. It is possible to use the lemma in a nested fashion as long as the number of applications is polynomial in $n$.
\end{remark}

\end{longversion}

\section{Dissemination Protocols In HYBRID}

In this section we give a quick overview about the known protocols to disseminate information in \HYBRID networks that we will use in this paper. The first one is the so-called token dissemination protocol of \cite{Augustine2020}, which solves the problem to broadcast $k$ tokens from arbitrary sources, whereas each source has at most $\ell$ tokens.

\begin{lem}(Theorem 2.1 in \cite{Augustine2020})
	\label{lem:tokenDissemination}
	Assume there are $k$ tokens of size $\bigO(\log n)$ bits in the network and each node has at most $\ell$ tokens. There is an algorithm that solves the token dissemination problem in the \HYBRID model on connected graphs in \smash{$\tilO\big(\!\sqrt{k}  \!+\! \ell\big)$} rounds, w.h.p.
\end{lem}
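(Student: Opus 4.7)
The plan is to combine randomized routing in the \NCC with local flooding in the \LOCAL mode. Neither mode alone suffices: a pure \NCC broadcast of $k$ tokens requires $\Omega(k/\log n)$ rounds by the per-node receive bandwidth, and pure \LOCAL flooding can take $\Omega(D)$ rounds where $D$ may be linear in $n$. The $\tilO(\sqrt{k}+\ell)$ bound comes from balancing the two regimes.

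The $\ell$ term is easy to account for. Since every source holds at most $\ell$ tokens and the \NCC permits $\bigO(\log n)$ messages per round, any routing step in which each source dispatches its entire batch costs at most $\tilO(\ell)$ rounds.

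For the $\sqrt{k}$ term, I would sample a set $A$ of $\tilT(\sqrt{k})$ random ``core'' nodes uniformly at random, publicize $A$ using standard \NCC aggregation tools, and pseudorandomly partition the $k$ tokens among the cores via a $\Theta(\log n)$-wise independent hash function whose short seed is broadcast globally. In the scatter phase, every source sends each of its tokens to the responsible core via the \NCC. A Chernoff bound over the hash shows that each core is assigned $\tilO(\sqrt{k})$ tokens w.h.p., and each source dispatches at most $\ell$ tokens, so the \NCC send and receive bounds of $\bigO(\log n)$ messages per round are respected, and the scatter phase completes in $\tilO(\ell + \sqrt{k})$ rounds. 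At the end of scatter, the $\sqrt{k}$ cores collectively hold every token in the system.

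The main obstacle is the ensuing gather phase, where every node must learn every token. To escape the \NCC bandwidth bottleneck, each core uses the \LOCAL mode to flood its $\tilO(\sqrt{k})$-token batch for $\tilO(\sqrt{k})$ rounds into its local ball; in parallel, every node that has not yet been reached by a \LOCAL flood carrying a particular token issues a random \NCC query for it to a uniformly chosen core. Balancing these two modes over $\tilO(\sqrt{k})$ rounds is the crux of the argument: the \LOCAL mode absorbs close-range delivery for free, while the \NCC serves only the residual long-range queries. A further Chernoff bound over the random query destinations shows that no core receives more than $\bigO(\log n)$ queries per round w.h.p., so the \NCC traffic stays within budget. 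Standard concentration inequalities and a union bound over all $(v,t)$ pairs and rounds close the analysis and yield the claimed $\tilO(\sqrt{k} + \ell)$ bound w.h.p.
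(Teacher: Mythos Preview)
The paper does not give its own proof of this lemma; it merely cites Theorem~2.1 of \cite{Augustine2020}. So there is no in-paper proof to compare against, and your proposal has to be judged on its own merits.

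Your scatter phase is fine, but the gather phase has a genuine gap. You partition the $k$ tokens among $\tilT(\sqrt{k})$ cores and then have each core flood its batch for $\tilO(\sqrt{k})$ rounds, with residual demand served by \NCC queries to ``a uniformly chosen core''. Two things break here. First, a uniformly chosen core almost surely does not hold the particular token being requested, since each token is hashed to exactly one core; so a random query succeeds only with probability $\tilO(1/\sqrt{k})$, and the Chernoff argument you sketch does not establish that every node obtains every token. Second, and more fundamentally, the \NCC cannot carry the residual load you are putting on it: on a graph of large diameter (say a path), the $\tilO(\sqrt{k})$-hop balls around $\tilO(\sqrt{k})$ cores together cover only $\tilO(k)$ nodes, so $\Omega(n-k)$ nodes miss \emph{every} token via local flooding. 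Each such node must then receive all $k$ tokens over the \NCC, which already needs $\tilOm(k)$ rounds per node by the $\bigO(\log^2 n)$-bit receive bound, not $\tilO(\sqrt{k})$. Correspondingly, your claim that ``no core receives more than $\bigO(\log n)$ queries per round'' is false: $\Theta(n)$ queriers hitting $\tilT(\sqrt{k})$ cores gives $\tilT(n/\sqrt{k})$ queries per core per round.

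The missing idea is that the \LOCAL mode, not the \NCC, must ultimately deliver every token to every node; the \NCC is only used to seed each token at enough well-spread locations so that $\tilO(\sqrt{k})$ rounds of local flooding from \emph{all} seeds cover the whole graph. In the protocol of \cite{Augustine2020}, every node (not just a handful of cores) participates symmetrically in both local propagation and random \NCC pushes of tokens it already knows; the interplay of the two modes is what makes every token reach a node within $\tilO(\sqrt{k})$ hops of every other node w.h.p., after which local flooding finishes the job. Your core-based decomposition concentrates the \NCC workload instead of spreading it, and that is where the argument fails.
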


Another protocol that we use frequently, solves the following aggregation problem. Given a subset of nodes $A \subseteq V$ that have some information $x_v, v \in A$, all nodes must learn the result $f\big(x_v \mid v\in A\big)$, where $f$ is an \textit{aggregate distributive} function.\lng{\footnote{An \textit{aggregate} function $f$ maps a multiset $S$ of inputs to some $f(S)$. For an \textit{aggregate distributive} function $f$ there is an aggregate function $g$ such that for any partition of \smash{$S = S_1 \oplus , \dots , \oplus S_k$} it is \smash{$f(S) = g(f(S_1), \dots , f(S_k))$}}. An example is computing a sum or the maximum value.} The algorithm by \cite{augustine19_NCC} solves this problem, relying only on the global network.

\begin{lem}
	\label{lem:aggregationProtocol}(Theorem 2.1 in \cite{augustine19_NCC})
	The aggregation problem can be solved in in $O(\log n)$ in the \NCC model.
\end{lem}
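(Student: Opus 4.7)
The plan is to reduce the task to a standard balanced binary tree aggregation over the globally known ID space $[n]$, exploiting the fact that in the \NCC model every node knows the IDs of every other node and can directly address $O(\log n)$ messages to arbitrary recipients per round. Since $f$ is aggregate distributive, partial aggregates from disjoint subsets can be combined via the associated helper function $g$; this lets me treat every node $v \notin A$ as if it contributed an identity element (for example $0$ for sum or $-\infty$ for max), so I never have to first identify the set $A$ explicitly.

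First I would fix a conceptual complete binary tree of depth $\lceil \log n \rceil$ whose leaves are identified with IDs $1, \dots, n$ in order. The upward phase runs for $\lceil \log n \rceil$ rounds. In round $k$, the node whose ID equals an odd multiple of $2^{k-1}$ sends its current partial aggregate to the node whose ID is the next multiple of $2^{k}$; that recipient applies $g$ to combine its own value with the incoming one. Each participating node sends or receives at most one message per round, so the \NCC bandwidth of $O(\log n)$ messages per round is never exceeded. After $\lceil \log n \rceil$ rounds the node with ID $1$ holds $f(\{x_v \mid v \in A\})$.

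The downward dissemination is the standard doubling broadcast: in round $k$ every node that already knows the aggregate forwards the $O(\log n)$-bit value to a distinct recipient (chosen deterministically from the IDs) that does not yet know it. The informed set doubles every round, so $\lceil \log n \rceil$ further rounds suffice for all $n$ nodes to learn the result, again with $O(1)$ messages per node per round.

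The only real subtlety is that $A$ is not explicit in the protocol, but distributivity of $f$ removes this obstacle by allowing non-members to contribute identity elements that leave the aggregate unchanged. No randomization and no use of the full $O(\log n)$-message \NCC bandwidth is needed; the total round complexity is $O(\log n)$ as claimed.
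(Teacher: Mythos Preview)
The paper does not prove this lemma; it is stated purely as a citation of Theorem~2.1 in \cite{augustine19_NCC}, so there is no in-paper proof to compare against. Your argument is the standard binary-tree aggregation over the ID space $[n]$ followed by a doubling broadcast, and it is essentially correct for the \NCC model.

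Two small points are worth tightening. First, your upward-phase description has values flowing toward \emph{higher} IDs (odd multiples of $2^{k-1}$ send to the next multiple of $2^k$), so the final aggregate lands at the node with the largest relevant ID, not at ID~$1$ as you state; this is a cosmetic inconsistency, not a real error. Second, the claim that every aggregate distributive $f$ admits an identity element for non-members of $A$ is not guaranteed by the definition the paper uses (which only asserts the existence of a combiner $g$ over partitions of $S$). The cleaner fix is to have nodes $v\notin A$ act as pure relays: they forward whatever partial aggregate they receive without contributing a value of their own, and a node that has received nothing simply stays silent. This avoids fabricating an identity and still keeps each node at $O(1)$ messages per round.
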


\section{Techniques for Skeleton graphs}
\label{apx:skeleton-graphs}

By sampling nodes from $G$ with prob.\ $\frac{1}{x}$ there will be a sampled node on some shortest $u$-$v$-path $P$ at least every $\tilO(x)$ hops for any $u,v \in V$ w.h.p. We formalize that in the following lemma. This was first observed by \cite{Ullman1991}. \shrt{Proofs are provided in the full article \cite{KS20}.}

\begin{lem}[From \cite{Augustine2020}]
	\label{lem:long-paths-skeleton}
	Let $M\subseteq V$ be a subset of nodes of $G=(V,E)$ by sampling each node with probability at least $\frac{1}{x}$. Then there is a constant $\xi \!>\! 0$, such that for any $u,v \!\in\! V$ with $hop(u,v) \!\geq\! \xi x \ln n$, there is at least one shortest path $P$ from $u$ to $v$, such that any sub-path $Q$ of $P$ with at least $\xi x \ln n$ nodes contains a node in $M$ w.h.p.
\end{lem}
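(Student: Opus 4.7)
The plan is to fix, once and for all, a canonical shortest path $P_{u,v}$ for each ordered pair $(u,v) \in V \times V$ (ties broken by any deterministic rule, e.g.\ lexicographic on node IDs), independently of the random sampling that produces $M$. Since the canonical choice is deterministic, the sampling $M$ is independent of which paths we inspect. For each canonical path $P_{u,v}$ with $hop(u,v) \geq \xi x \ln n$, consider every contiguous sub-path $Q$ of $P_{u,v}$ consisting of exactly $\lceil \xi x \ln n \rceil$ consecutive nodes. There are at most $n$ such windows per path, and at most $n^2$ pairs $(u,v)$, for a total of at most $n^3$ windows to control.

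For a single window $Q$ of $m := \lceil \xi x \ln n \rceil$ nodes, the events ``node $w \in Q$ is in $M$'' are independent Bernoullis with success probability $\geq 1/x$, so
\[
\Pr\bigl[Q \cap M = \emptyset\bigr] \;\leq\; \bigl(1 - \tfrac{1}{x}\bigr)^{m} \;\leq\; e^{-m/x} \;\leq\; e^{-\xi \ln n} \;=\; n^{-\xi}.
\]
A union bound over all $\leq n^3$ windows gives a failure probability of at most $n^{3-\xi}$. Choosing $\xi \geq c+4$ for the desired high-probability constant $c$ makes this $\leq n^{-(c+1)}$.

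Thus, with high probability, every canonical shortest path $P_{u,v}$ with $hop(u,v) \geq \xi x \ln n$ has the property that every window of $\xi x \ln n$ consecutive nodes contains an element of $M$. Since the lemma only asserts the existence of \emph{some} shortest path with this property, taking $P := P_{u,v}$ completes the argument. The hardest part is conceptual rather than technical: one must commit to the canonical shortest paths \emph{before} examining the randomness of $M$ so that the windows we union-bound over form a deterministic collection, which keeps the total count at $\mathrm{poly}(n)$ rather than blowing up over all possible sub-paths in $G$.
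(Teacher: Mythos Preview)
Your proof is correct and follows essentially the same approach as the paper: fix a canonical shortest path per pair, bound the probability that a given sub-path of length $\xi x \ln n$ misses $M$, and union-bound over the $O(n^3)$ relevant sub-paths. The only cosmetic difference is that the paper applies a Chernoff bound to the count of sampled nodes on $Q$, whereas you use the more direct estimate $(1-1/x)^m \leq e^{-m/x}$; both yield the same $n^{-\Omega(\xi)}$ tail and lead to identical union-bound arithmetic.
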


\begin{longversion}
	
	\begin{proof}
		Let $u,v \in V$ with $hop(u,v) \!\geq\! \xi x \ln n$. Fix a shortest $u$-$v$-path $P_{u,v}$ and let $Q$ be a sub-path of $P_{u,v}$ with at least $\xi x \ln n$ nodes. Let $X_{u,v}$ be the random number of marked nodes on $Q$. Then we have \smash{$\mathbb{E}(X_{u,v}) \geq \frac{|Q|}{x} \geq {\xi \ln n}$}. Let $c > 0$ be arbitrary. We use a Chernoff bound:
		$$ \mathbb{P}\Big( X_{u,v} < \frac{\xi \ln n}{2}\Big) \leq \exp\Big(\!\!-\! \frac{\xi \ln n}{8}\Big) \stackrel{\xi \geq 8c}{\leq} \frac{1}{n^c}.$$
		Thus we have $X_{u,v} \geq 1$ w.h.p.\ for constant \smash{$\xi \geq \max(8c, 2 / \ln n)$}. Therefore the claim holds w.h.p.\ for the pair $u,v$. We claim that w.h.p.\ the event $X_{u,v} \geq 1$ occurs for all pairs $u,v \in V$ and for all sub-paths $Q$ of $P_{u,v}$ longer than $\xi x \ln n$ hops, for at least one shortest path $P_{u,v}$ from $u$ to $v$. There are at most \smash{$n^2$} many pairs $u,v \in V$. Moreover we can select at most $n$ sub-paths $Q$ of $P$ that do not fully contain any other selected sub-path. Hence the claim follows with the union bound given in Lemma~\ref{lem:unionbound}.
	\end{proof}
	
\end{longversion}

The above lemma implies additional properties of skeletons that we use (sometimes implicitly) in this paper.

\begin{lem}
	\label{lem:long-shortest-paths-skeleton}
	Let $\calS = (V_\calS, E_\calS)$ be a skeleton graph of a connected graph $G$ with $n$ nodes by sampling each node of $G$ to $V_\calS$ with prob.\ at least $\frac{1}{x}$. The edges of $\calS$ are $E_\calS \!=\! \{ \{u,v\} \!\mid\! u,v\!\in\!V_\calS, \text{hop}(u,v) \!\leq\! h\}$ (whereas $h := \xi x \ln n$ with parameter $\xi$ as in Lemma \ref{lem:long-paths-skeleton}) and edge weights $d_h(u,v)$ for $\{u,v\} \in E_\calS$. Then $\calS$ is connected w.h.p. Furthermore, for any $u,v\in \calS$ we have $d_G(u,v) = d_\calS(u,v)$ w.h.p.
\end{lem}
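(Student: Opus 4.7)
The plan is to deduce both assertions from Lemma \ref{lem:long-paths-skeleton}, with distance preservation being the substantive part and connectivity following as a cheap corollary.

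First I would verify the easy inequality $d_\calS(u,v) \geq d_G(u,v)$ for $u,v \in V_\calS$: by construction every edge $\{x,y\} \in E_\calS$ has weight $d_h(x,y) \geq d_G(x,y)$, so concatenating edges of a $u$-$v$-path in $\calS$ yields a $u$-$v$-walk in $G$ of the same total weight, and the triangle inequality in $G$ gives the bound.

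For the reverse inequality $d_\calS(u,v) \leq d_G(u,v)$, fix $u,v \in V_\calS$ and a shortest $u$-$v$-path $P$ in $G$. If $hop_G(u,v) \leq h$, then $\{u,v\}$ is itself an edge of $\calS$ with weight $d_h(u,v) = w(P) = d_G(u,v)$ (any sub-path of a shortest path is shortest, so the hop restriction is harmless) and we are done. Otherwise I invoke Lemma \ref{lem:long-paths-skeleton}: w.h.p.\ there is a shortest $u$-$v$-path $P$ such that every sub-path of at least $h$ hops contains a sampled node. Walking along $P$ starting from $u$ and greedily jumping to the next sampled node encountered within the next $h$ hops on $P$ yields a sequence $u = s_0, s_1, \ldots, s_k = v$ of sampled nodes on $P$ with $hop_G(s_i, s_{i+1}) \leq h$ for all $i$. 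Each consecutive pair $(s_i, s_{i+1})$ thus induces an edge of $\calS$ of weight $d_h(s_i, s_{i+1})$, which equals the length of the corresponding sub-path of $P$ (again because any sub-path of $P$ is shortest and has at most $h$ hops). Summing over $i$ produces a $u$-$v$-path in $\calS$ of total weight exactly $w(P) = d_G(u,v)$.

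Connectivity is then immediate: since $G$ is connected, $d_G(u,v) < \infty$ for every $u,v \in V_\calS$, whence $d_\calS(u,v) < \infty$, so $u$ and $v$ lie in the same component of $\calS$. The ``w.h.p.''\ qualifier is obtained by a union bound over the at most $\binom{|V_\calS|}{2} \leq n^2$ pairs for which Lemma \ref{lem:long-paths-skeleton} is invoked, together with the fact that the lemma already holds w.h.p.\ for all pairs simultaneously. The only mild subtlety is making the greedy construction of $s_0, \ldots, s_k$ terminate cleanly at $v$: when fewer than $h$ hops remain on $P$, the next ``jump'' lands on $v$ itself, which is sampled by assumption, so the chain closes without needing an extra invocation of Lemma \ref{lem:long-paths-skeleton}.
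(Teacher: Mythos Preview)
Your proof is correct and follows essentially the same approach as the paper: both invoke Lemma~\ref{lem:long-paths-skeleton} to extract a sequence of skeleton nodes along a shortest $u$-$v$-path in $G$, observe that consecutive pairs are edges of $\calS$ whose weights coincide with the corresponding sub-path lengths (since sub-paths of shortest paths are shortest), and handle the easy direction via $d_h(x,y)\geq d_G(x,y)$. Your treatment is in fact slightly more careful than the paper's (you explicitly handle the case $hop_G(u,v)\leq h$ and the termination of the greedy chain at $v$), and you derive connectivity as a corollary of distance preservation rather than arguing it first, but these are cosmetic differences.
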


\begin{longversion}
	
	\begin{proof}
		Let $u,v \in \calS$ and let $P$ be a shortest $u$-$v$-path in $G$. By Lemma \ref{lem:long-paths-skeleton} there is a skeleton node at least every $h$ hops on $P$ w.h.p. Hence, consecutive skeleton nodes must be connected by edges in $E_\calS$ by definition. These form a $u$-$v$-path $Q$ in $\calS$. Therefore $u,v$ are connected in $G$.
		
		Let $u = v_1, \ldots ,v_k = v$ be the set of consecutive skeleton nodes on $P$. Then $hop(v_i, v_{i+1}) \leq h$ and $\{v_i, v_{i+1}\} \in E_\calS$ due to Lemma \ref{lem:long-paths-skeleton}. Since $P$ is a shortest path, all its sub-paths must be shortest paths as well. Hence we have $d_h(v_i, v_{i+1}) = d(v_i, v_{i+1})$. Therefore $d_\calS(u,v) \leq \sum_{i=1}^{k-1} d(v_i, v_{i+1}) =d_G(u,v)$. Since edges of $\calS$ have at least the weight of the path in $G$ they represent, we also have $d_\calS(u,v) \geq d_G(u,v)$.
	\end{proof}
	
\end{longversion}

\begin{longversion}

\section{$k$-Wise Independent Hash Functions}
\label{apx:Pseudorandom}

We will use families of $k$-wise independent hash functions, which are defined as follows.

\begin{defn}
	\label{def:hashfunctions}	
	For finite sets $A,B$, let $\mathcal H := \{h : A \to B\}$ be a family of hash functions. Then $\mathcal H$ is called $k$-wise independent if for a random function $h \in \mathcal H$ and for any $k$ distinct keys $a_1, \ldots, a_k \in A$ we have that $h(a_1), \ldots, h(a_k) \in B$ are independent and uniformly distributed random variables in $B$.
\end{defn}

From literature we know that such a family of hash functions exists in the following form (c.f., \cite{Vadhan2012}).

\begin{lem}
	\label{lem:hashfunctions}
	For $A := \{0,1\}^a$ and $B := \{0,1\}^b$, there is a family of $k$-wise independent hash functions $\mathcal H := \{h : A \to B\}$ such that selecting a function from $\mathcal H$ requires $k \!\cdot\! \max(a,b)$ random bits and computing $h(x)$ for any $x \in A$ can be done in $poly(a,b,k)$ time.
\end{lem}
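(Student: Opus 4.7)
The plan is to use the classical polynomial-based construction of $k$-wise independent hash families over a suitable finite field, which is the standard route to this kind of statement. Set $n := \max(a,b)$ and work in the finite field $\mathbb{F}_{2^n}$, whose elements we identify with bit strings in $\{0,1\}^n$ in the usual way. A hash function $h \in \mathcal H$ is specified by a tuple $(c_0, c_1, \ldots, c_{k-1}) \in \mathbb{F}_{2^n}^k$ of coefficients, giving rise to the polynomial $p(x) := \sum_{i=0}^{k-1} c_i x^i$ evaluated over $\mathbb{F}_{2^n}$. To evaluate $h(x)$ for input $x \in \{0,1\}^a$, we first embed $x$ into $\mathbb{F}_{2^n}$ by zero-padding to length $n$, compute $p(x)$ in $\mathbb{F}_{2^n}$, and then take the first $b$ bits of the resulting string as the output in $\{0,1\}^b$. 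Sampling a function from $\mathcal H$ amounts to sampling the $k$ coefficients, which requires exactly $k \cdot n = k \cdot \max(a,b)$ random bits.

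The central step is to verify $k$-wise independence. I would invoke the standard Lagrange interpolation argument: for any $k$ pairwise distinct keys $a_1, \ldots, a_k \in \{0,1\}^a$ (viewed as distinct elements of $\mathbb{F}_{2^n}$ after padding) and any target values $z_1, \ldots, z_k \in \mathbb{F}_{2^n}$, there is a unique polynomial of degree $<k$ over $\mathbb{F}_{2^n}$ satisfying $p(a_i) = z_i$ for all $i$. Since the coefficients are sampled uniformly at random from $\mathbb{F}_{2^n}^k$, the joint distribution of $\bigl(p(a_1), \ldots, p(a_k)\bigr)$ is uniform on $\mathbb{F}_{2^n}^k$, establishing full uniformity and independence at any $k$ distinct inputs. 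Truncating each coordinate to its first $b$ bits is simply taking the marginal onto a coordinate projection of $\mathbb{F}_{2^n} \cong \{0,1\}^n$, which preserves uniformity and independence; hence $\bigl(h(a_1), \ldots, h(a_k)\bigr)$ is uniform on $(\{0,1\}^b)^k$ as required by Definition \ref{def:hashfunctions}.

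Finally I would verify the evaluation-time bound. Given the coefficients, $p(x)$ can be computed by Horner's scheme using $O(k)$ additions and multiplications in $\mathbb{F}_{2^n}$. Using any standard representation of $\mathbb{F}_{2^n}$ (e.g., via a fixed irreducible polynomial of degree $n$ over $\mathbb{F}_2$), field operations take $\poly(n) = \poly(a,b)$ time, so the overall cost of evaluating $h$ is $\poly(a,b,k)$. The padding of $x$ from $a$ to $n$ bits and the truncation from $n$ to $b$ bits are each $O(n)$ time, which is absorbed. Combining the three parts completes the construction and proves the lemma.

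I expect no genuine obstacle: the only things to be mildly careful about are (i) making sure that the case $a < b$ is covered (which is why we use $n = \max(a,b)$ rather than, say, $n = a$, so that the codomain $\{0,1\}^b$ actually embeds into $\mathbb{F}_{2^n}$) and (ii) spelling out that coordinate truncation of a uniform random element of $\mathbb{F}_{2^n}$ is uniform on $\{0,1\}^b$, which is immediate once one fixes the bijection $\mathbb{F}_{2^n} \cong \{0,1\}^n$.
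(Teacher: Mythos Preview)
Your proof is correct and is exactly the standard polynomial construction one would find in the reference the paper cites. Note, however, that the paper does not actually give its own proof of this lemma: it simply states the result as known from the literature (citing Vadhan's pseudorandomness monograph), so there is nothing to compare against beyond observing that your argument is the classical one underlying that citation.
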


%Furthermore, for a sum of Bernoulli variables with sufficient independence we know that Chernoff-Hoeffding bounds still apply (c.f., \cite{Schmidt1995} Theorem 2).
%
%\begin{lem}
%	\label{lem:chernoff-k-wise}
%	For $A := \{0,1\}^a$ and $B := \{0,1\}^b$, there is a family of $k$-wise independent hash functions $\mathcal H := \{h : A \to B\}$ such that selecting a function from $\mathcal H$ requires $k \!\cdot\! \max(a,b)$ random bits and computing $h(x)$ for any $x \in A$ can be done in $poly(a,b,k)$ time.
%\end{lem}

\begin{remark}
	We can use a random member $h \in \mathcal H$ of a family as described in Definition \ref{def:hashfunctions} to limit the number of messages any node receives in a given round. If all nodes send at most $\bigO(\log n)$ messages to targets that are determined using that hash function $h$ with a distinct key for each message, then any node receives at most $\bigO(\log n)$ messages per round w.h.p. The details are given by the following lemma.
\end{remark}

\begin{lem}
	\label{lem:receiveBound}
	Presume some algorithm takes at most $p(n)$ rounds for some polynomial $p$. Moreover, presume that each round, every node sends at most $\sigma \in \Theta(\log n)$ messages via global edges to $k$-wise independent, uniformly random targets in $V$. There is a $k \in \Theta(\log n) $ and a $\rho \in \Theta(\log n)$ such that for sufficiently large $n$, in every round, every node in $V$ receives at most $\rho$ messages per round w.h.p.
\end{lem}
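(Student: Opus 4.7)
The plan is to fix an arbitrary node $v \in V$ and an arbitrary round, write the number $Y_v$ of messages received by $v$ in that round as a sum of Bernoulli indicators, apply the $k$-wise independent Chernoff bound of Lemma \ref{lem:chernoffbound} together with Remark \ref{rem:chernoffbound}, and finally union-bound over all nodes and rounds.

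For the fixed node and round, let $m_1, \ldots, m_M$ denote the messages sent in that round via the global network, where $M \leq n\sigma$ by assumption. Define $X_i := \mathbf{1}[\text{target of } m_i = v]$. Since targets are uniform in $V$, we have $\mathbb{E}[X_i] = 1/n$, and therefore $\mu := \mathbb{E}[Y_v] = \mathbb{E}\!\left[\sum_i X_i\right] \leq \sigma$. The indicators $\{X_i\}$ inherit $k$-wise independence from the target choices. Setting $\rho := (1+\delta)\sigma$ for a constant $\delta \geq 1$ to be fixed later, the Chernoff bound (Lemma \ref{lem:chernoffbound}, in its $k$-wise independent form from Remark \ref{rem:chernoffbound}) gives
\[
\Pr\!\big(Y_v > \rho\big) \;\leq\; \exp\!\big(-\delta\sigma/3\big),
\]
provided $k \geq \lceil \mu\delta \rceil$, which is satisfied by taking $k = \Theta(\log n)$ large enough, since $\mu \leq \sigma \in \Theta(\log n)$.

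Since $\sigma \in \Theta(\log n)$, the right-hand side is of the form $n^{-\Omega(\delta)}$. Choosing the constant $\delta$ sufficiently large makes the failure probability at most $n^{-(c+2)}$ for any desired constant $c$, at which point we set $\rho := (1+\delta)\sigma \in \Theta(\log n)$. A union bound over all $n$ nodes and all $p(n) \leq \text{poly}(n)$ rounds (Lemma \ref{lem:unionbound}) then gives that no node receives more than $\rho$ messages in any round with probability at least $1 - n^{-c}$, which is w.h.p.

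The main point to verify carefully is the compatibility of the parameters in the $k$-wise independent Chernoff bound: the required degree of independence $k \geq \lceil \mu\delta \rceil$ scales like $\Theta(\log n)$, matching exactly what the hypothesis supplies, and the slack $\delta$ can be enlarged (shifting $\rho$ only by a constant factor) to absorb the polynomial union bound while keeping $\rho \in \Theta(\log n)$. No obstacle of substance beyond this parameter-matching arises, because the indicators have the simplest possible joint structure (uniform marginals, $k$-wise independence), so the standard concentration machinery applies directly.
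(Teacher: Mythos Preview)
Your proof is correct and follows essentially the same approach as the paper: bound the expected number of messages received by $\sigma$, apply the $k$-wise independent Chernoff bound from Lemma~\ref{lem:chernoffbound} and Remark~\ref{rem:chernoffbound} with $\mu_H=\sigma$ and a constant $\delta$, and then union-bound over all nodes and all $p(n)$ rounds. The only (inconsequential) imprecision is that the independence requirement from Remark~\ref{rem:chernoffbound} is $k\geq\lceil \mu_H\delta\rceil=\lceil \sigma\delta\rceil$ rather than $\lceil \mu\delta\rceil$, but since $\sigma\in\Theta(\log n)$ and $\delta$ is constant this still gives $k\in\Theta(\log n)$ as you claim.
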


\begin{proof}
	Let $X_{v,r}$ be the (random) number of messages $v$ receives in round $r$. Node $v$ is targeted by some token for some node from $V$ with probability  $\tfrac{1}{n}$ (for simplicity we assume that $v$ can send the token to itself, in reality it can just keep it). Since each node sends at most $\sigma$ tokens per round we have $\mathbb{E}\big(X_{v,r}\big) \leq \sigma \cdot n \cdot \tfrac{1}{n} = \sigma$. By definition we have $\sigma \geq \xi \ln n$ for some constant $\xi$ and large enough $n$. Let $c>0$ be arbitrary.  We choose \smash{$\rho \geq (1 \!+\! \frac{3c}{\xi}) \sigma$} and $k \geq \lceil \frac{3c}{\xi} \sigma \rceil$. We employ the Chernoff bound given in Lemma \ref{lem:chernoffbound}, which holds for $k$-wise independent random variables as outlined in Remark \ref{rem:chernoffbound}.
	$$\mathbb{P}\Big(X_{v,r} \!>\! \rho \Big) \leq \mathbb{P}\Big(X_{v,r} \!>\! (1 \!+\! \tfrac{3c}{\xi})\sigma \Big) \leq \exp\Big(\!\!-\! \frac{3\xi c \ln n}{3\xi}\Big) =  \frac{1}{n^{c}}.$$
	In accordance with the union bound given in Lemma~\ref{lem:unionbound} the event \smash{\hspace*{0.1mm} $\bigcap_{\substack{\!\!\!v \in V\\ \hspace*{-0.5mm}\!\! r \leq p(n)}} \hspace*{-1mm} \Big( X_{v,r} \!\leq\! \rho \Big)$} takes place w.h.p.
\end{proof}

\end{longversion}

\end{document}
%\endinput
%%
%% End of file `sample-sigconf.tex'.